\newif\ifarXiv
\newif\ifcomm
	\newcommand{\mycomm}[3]{{\footnotesize{{\color{#2} \textbf{[#1: #3]}}}}}
    \newcommand{\mycomm}[3]{}
\newcommand{\MM}[1]{\mycomm{MM}{red}{#1}} 
\newcommand{\RBB}[1]{\mycomm{Ran}{purple}{#1}} 
\newcommand{\SV}[1]{\mycomm{Shay}{brown}{#1}} 
\newcommand{\ran}[1]{\RBB{#1}}
\newcommand{\Var}{\mbox{Var}}
\newcommand{\rev}[1]{{#1}}
\title{How to Send a Real Number Using a Single Bit (and Some Shared Randomness)} 
\titlerunning{
How to Send a Real Number Using a Single Bit (and Some Shared Randomness)
} 
\author{Ran Ben Basat}{University College London}{r.benbasat@cs.ucl.ac.uk}{https://orcid.org/0000-0003-0196-9190}{}
\author{Michael Mitzenmacher}{Harvard University}{michaelm@eecs.harvard.edu}{https://orcid.org/0000-0001-5430-5457}{}
\author{Shay Vargaftik}{VMware Research}{shayv@vmware.com}{https://orcid.org/0000-0002-0982-7894}{}
\authorrunning{R.\,Ben-Basat, M.\,Mitzenmacher, and S.\,Vargaftik} 
\keywords{Randomized Algorithms, Approximation Algorithms, Shared Randomness, Distributed Protocols, Estimation, Subtractive Dithering} 
\newcommand{\floor}[1]{\left\lfloor#1\right\rfloor}
\newcommand{\parentheses}[1]{\left(#1\right)}
\newcommand{\brackets}[1]{\left[#1\right]}
\newcommand{\set}[1]{\left\{#1\right\}}
\newcommand{\indicator}{\ensuremath{\mathbbm{1}}}
\DeclareMathOperator*{\minimize}{minimize}
\begin{document}

\newcommand{\Sender}{Buffy\xspace}
\newcommand{\Receiver}{Angel\xspace}
\maketitle
%
%
%
%
%
%
%
%
%

\begin{abstract}
We consider the fundamental problem of communicating an estimate of a real number $x\in[0,1]$ using a single bit. A sender that knows $x$ chooses a value $X\in\set{0,1}$ to transmit. In turn, a receiver estimates $x$ based on the value of $X$.
The goal is to minimize the cost, defined as the worst-case (over the choice of $x$) expected squared error. 

We first overview common biased and unbiased estimation approaches and prove their optimality when no shared randomness is allowed. We then show how a small amount of shared randomness, which can be as low as a single bit, reduces the cost in both cases.
Specifically, we derive lower bounds on the cost attainable by any algorithm with unrestricted use of shared randomness and propose optimal and near-optimal solutions that use a small number of shared random bits.
Finally, we discuss \mbox{open problems and future directions.}
\end{abstract}



\section{Introduction}
We consider the fundamental problem of communicating an estimate of a real number $x\in[0,1]$ using a single bit. A sender, that we call \emph{\Sender}, knows $x$, and chooses a value $X\in\set{0,1}$ to transmit. In turn, a receiver, that we call \emph{\Receiver}, estimates $x$ based \mbox{on the value of $X$.}

This problem naturally appears in distributed computations where multiple machines perform parallel tasks and transmit their results/state to an aggregator. 
If the bandwidth to the aggregator is limited, the machines must compress the data before sending it. 
Bandwidth optimization is fundamental in many domains, including network measurements~\cite{Memento,harrison2018network} and telemetry~\cite{ben2020pint}, load balancing~\cite{mitzenmacher2020queues,vargaftik2020lsq}, and satellite communication~\cite{yu2009image}.
We are especially motivated by recent work addressing the communication bottleneck in distributed and federated machine learning~\cite{kairouz2019advances}. 
There, \emph{clients} compute a local gradient and send it to a \emph{parameter server} that computes the global gradient and updates the model~\cite{konevcny2015federated}. 
For the typical large-scale federated learning problems over edge devices (e.g., mobile phones), the devices may only be able to communicate a small number of bits per gradient coordinate.
In fact, solutions such as $1$-Bit SGD~\cite{seide20141} and  signSGD~\cite{bernstein2018signsgd}, have recently been studied as appealing low-communication solutions that use a single bit per coordinate. Another common communication-efficient solution is TernGrad~\cite{wen2017terngrad} that quantizes each coordinate to $\set{-1,0,1}$ \mbox{instead of $\set{-1,1}$ as commonly done by the $1$-bit algorithms.}   

It is often desirable that each estimate be an \emph{unbiased} random variable with a mean equal to corresponding $x$.
For example, this provides that the estimates' average is an unbiased estimate of the average value.
Alternatively, there are cases where it is beneficial to allow \emph{biased} estimates if it reduces the error for that setting (e.g., see EF-signSGD~\cite{karimireddy2019error}).  

In this work, we consider several variations of the above problem. For algorithms that provide unbiased estimates for every value $x$, we use the \emph{worst-case} (over all values of $x$) variance as the cost function to be minimized. 
For biased estimates, we consider the worst-case \emph{expected squared error} as the cost, as it coincides with variance for unbiased algorithms.  That is, the worst-case is over the value of $x$, and the expectation of the cost is over the random choices used by the algorithm.   
Note that any lower bound for biased algorithms with these costs also applies to unbiased algorithms, and an upper bound for unbiased algorithms also applies to biased algorithms. \mbox{We are interested in both lower and upper bounds in our work.}  

Beyond unbiased and biased variations, we also consider settings where \Sender and \Receiver have access to \emph{shared randomness}.  
Shared randomness (often also referred to as public or common randomness) has been intensively studied in the field of communication complexity (e.g., see~\cite{newman1991private}). 
In our context, such shared randomness can arise naturally by having \Sender and \Receiver share a common seed for a pseudo-random number generator, for example.  
Here, we model the shared randomness as ``perfectly random,'' leaving issues related to pseudo-randomness aside. Nevertheless, we consider solutions using limited amounts of shared randomness, including the case of just one bit of shared randomness. 
Such solutions may be easier and cheaper to implement, including with pseudo-random generators.

We remark that there are known approaches to this problem. These include (deterministic) rounding, randomized rounding (also called stochastic quantization), and subtractive dithering~\cite{roberts1962picture}. For a detailed survey of such techniques, we refer the reader to~\cite{gray1998quantization}.
We discuss these methods and compare our results with them in context throughout the paper.

\smallskip
{\textbf{Our contribution}: }
In this paper, we study how to minimize the cost (i.e., the worst-case variance or worst-case expected squared error) for various settings. 
First, we consider the setting where there is no shared randomness.
In this setting, we show that randomized rounding is the optimal
unbiased algorithm and that deterministic rounding is optimal when
biased estimations are allowed. While these algorithms are widely used in practice, the optimality proofs under these cost models \mbox{have not appeared elsewhere to the best of our knowledge.}

Next, we explore how to reduce the cost if \Sender and
\Receiver have access to shared randomness.  
We prove upper and lower bounds on the attainable variance for unbiased algorithms and expected squared error for biased ones.  
For our upper bounds, we assume that \Sender and \Receiver have access to $\ell$ shared random bits, for some $\ell\in\mathbb N$. We also consider the \emph{limiting algorithms} where $\ell$ is not restricted.
Our work addresses several extensions for cases where unbounded private randomness is allowed and when it is not.  
Finally, we consider the special case where $x$ is known to be in $\set{0,1/2,1}$, a setting that is of high interest, for example, for the sign-based federated learning algorithms (e.g.,~\cite{bernstein2018signsgd, karimireddy2019error}) and particularly for TernGrad~\cite{wen2017terngrad} that uses 3-level quantization. We provide an improved algorithm and a matching lower bound for this setting, thus proving its optimality.
%
\mbox{A summary of our results appears in Table~\ref{tbl:summary}.}
 
{\small
\begin{table}[]\hspace*{-8mm}
\begin{tabular}{|l||l|l|}
\hline
\textbf{Scenario}
& \textbf{Unbiased (Variance)} & \textbf{Biased (Exp. Squared Error)}\\ \hline\hline
No shared randomness                                                                                      & \begin{tabular}[c]{@{}l@{}}$1/4=0.25$ \\(randomized rounding)\\ Optimal (Section~\ref{sec:RR})\end{tabular} & \begin{tabular}[c]{@{}l@{}}$1/16=0.0625$ \\(deterministic rounding) \\Optimal (Section~\ref{sec:DR})\end{tabular} \\ \hline
\begin{tabular}[c]{@{}l@{}}$\ell$-bit shared randomness\\ Unbounded private randomness\end{tabular}                                                                 & \begin{tabular}[c]{@{}l@{}}$\ell=1: \frac{1}{8}=0.125$\\$\ell=8: \frac{1}{12}+\frac{1}{393216}\approx 0.08334$\\ In general: $1/6\cdot(1/2+4^{-\ell})$\\  (Section~\ref{section:semi-quantized})\end{tabular}                         &                          Open                                                                                              \\ \hline
\begin{tabular}[c]{@{}l@{}}$\ell$-bit shared randomness\\ No private randomness\end{tabular} & \begin{tabular}[c]{@{}l@{}}Impossible\\ (Section~\ref{sec:noPrivateRandomness})\end{tabular} 
& \begin{tabular}[c]{@{}l@{}}$\ell=1: \frac{1}{20}= 0.05$\\
$\ell=8: \approx 0.04599$\\
(Section~\ref{sec:optimal})\end{tabular}               \\ \hline
\begin{tabular}[c]{@{}l@{}}Lower Bounds for $x\in[0,1]$\\ Unbounded shared randomness\end{tabular}                       &                                                                             \begin{tabular}[c]{@{}l@{}}$1/16=0.0625$ \\(Section~\ref{sec:unbiasedLB}) \end{tabular}                          &        \multicolumn{1}{l|}{\begin{tabular}[l]{@{}l@{}}$\approx0.0459$\\  (Section~\ref{sec:improvedBound})\end{tabular}}                                            \\ \hline\hline
\begin{tabular}[c]{@{}l@{}}$x\in\set{0,1/2,1}$ \\$1$-bit shared randomness\\ No private randomness\end{tabular} &
\begin{tabular}[c]{@{}l@{}}$1/16=0.0625$ \\(Section~\ref{sec:semiGeneral}) \end{tabular} & \begin{tabular}[c]{@{}l@{}}$3/4-1/\sqrt 2\approx 0.04289$\\ (Section~\ref{sec:biased_0_1/2_1})\end{tabular} \\
\hline

\begin{tabular}[c]{@{}l@{}}Lower Bounds for $x\in\set{0,1/2,1}$\\ Unbounded shared randomness\end{tabular}                       &                                                                                                  \begin{tabular}[c]{@{}l@{}}$1/16=0.0625$ \\(Section~\ref{sec:unbiasedLB}) \end{tabular}                     &            \multicolumn{1}{l|}{\begin{tabular}[c]{@{}l@{}}$3/4-1/\sqrt 2\approx 0.04289$\\  (Section~\ref{sec:LB})\end{tabular}}                                            \\ \hline
\end{tabular}
\vspace*{1mm}
\caption{A summary of our results.}\label{tbl:summary}
\vspace*{-2mm}
\end{table}
}
%
%
%
%
%
\section{Preliminaries}
We start with some notation. We use $[n]$ to denote $\set{0,1,\ldots,n-1}$, and $\Delta(S)$ to denote all possible probability distributions over the set $S$. (An element of $\Delta(S)$ will be expressed as a density function when $S$ is uncountable, e.g., if $S = [0,1]$.)
 We also use, for a binary predicate $B$, $\indicator_{B}$ as an indicator such that $\indicator_{B}=1$ if $B$ is true and $0$ otherwise. Lastly, $\phi=(1+\sqrt 5)/2$ \mbox{denotes the Golden Ratio, which naturally comes up in some of our results.}
 
 \smallskip{}
{\textbf{Problem statement:}} Given a real number $x\in[0,1]$, \Sender compresses it to a single bit value $X\in\set{0,1}$ that is sent to \Receiver, who derives an estimate $\widehat x$. We also consider the special case where $x$ is known to be in $\set{0,1/2,1}$.
Our objective is to minimize the \emph{cost} that is defined as the \emph{worst-case expected squared error}, i.e., $\max_{x \in [0,1]} \mathbb E[(\widehat x - x)^2]$. Note that the worst-case is taken over the value of $x$ and the expectation is over the randomness of the algorithm. 
In the \emph{unbiased} setting, we additionally require $\mathbb E[\widehat x]=x$, in which case the cost becomes $\Var[\widehat x]$, i.e., the estimation variance.
In some cases, we allow the parties to use $\ell$ bits of \emph{shared randomness}. That is, we assume that they have access to a random value $h\in[2^{\ell}]$, known to both \Sender and \Receiver. 
\mbox{When applicable, we use $r\in[0,1]$ to denote the private randomness of \Sender.}



\section{Algorithms without Shared Randomness}
We recap the performance of two standard algorithms -- randomized and deterministic rounding. Interestingly, we show that when no shared randomness is allowed, randomized rounding is an optimal unbiased algorithm, \mbox{and deterministic rounding is an optimal biased algorithm.}
\subsection{Randomized Rounding}\label{sec:RR}
In randomized rounding, \Sender uses private randomness to generate $X\sim \mbox{Bernoulli}(x)$ which is sent using a single bit. In turn, \Receiver estimates $\widehat x = X$.
Clearly, we have that $\mathbb E[\widehat x] = \mathbb E[X] = x$, and thus the algorithm is unbiased. The variance of the algorithm is $\Var[\widehat x] = \Var[X] = x(1-x)$, and thus the worst-case is reached at $x=1/2$, which gives a cost of $1/4$.
\ifarXiv
The following theorem, whose proof is deferred to Appendix~\ref{sec:RRoptimality}, shows that randomized rounding is optimal, in the sense that no unbiased algorithm without shared randomness can have a worst-case variance lower than $1/4$.
\else
The following theorem, whose proof is deferred to full version~\cite{fullVersion}, shows that randomized rounding is optimal, in the sense that no unbiased algorithm without shared randomness can have a worst-case variance lower than $1/4$.
\fi
Intuitively, requiring the estimate to be unbiased forces the algorithm to send $1$ with a probability that is linear in $x$, maximizing its cost for $x=1/2$. The proof also establishes \mbox{the intuitive idea that it is not possible to benefit from randomness used solely by \Receiver.}
\begin{theorem}
Any unbiased algorithm without shared randomness must have a worst-case variance of at least $1/4$.
\end{theorem}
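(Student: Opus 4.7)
The plan is to parameterize any unbiased single-bit protocol by two quantities and reduce the problem to a simple inequality. Let $p(x) = \Pr[X=1\mid x]$ summarize Buffy's (possibly private-randomized) strategy, and let $\widehat{x}_0, \widehat{x}_1$ denote the (possibly random) estimates Angel outputs upon seeing $X=0$ and $X=1$ respectively; because there is no shared randomness, these are independent of $x$ and of Buffy's coins. Write $a = \mathbb E[\widehat{x}_0]$, $b = \mathbb E[\widehat{x}_1]$, $q_0 = \mathbb E[\widehat{x}_0^{\,2}]$, $q_1 = \mathbb E[\widehat{x}_1^{\,2}]$, so trivially $q_0 \ge a^2$ and $q_1 \ge b^2$.

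Next, I would translate unbiasedness into structural constraints. Unbiasedness says $(1-p(x))\,a + p(x)\,b = x$ for every $x\in[0,1]$. If $a = b$ this forces $x$ to be constant, a contradiction; so $a \neq b$, and WLOG $a < b$. Solving yields $p(x) = (x-a)/(b-a)$, and requiring $p(x)\in[0,1]$ for all $x\in[0,1]$ forces $a \le 0$ and $b \ge 1$. This step also formalizes the ``intuitive idea'' that Angel-side randomness is useless: since $q_0 \ge a^2$ and $q_1 \ge b^2$, any randomness Angel injects only increases the second moment without relaxing the constraints on $a,b$.

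With these in hand, the variance computation is short:
\begin{align*}
\Var[\widehat x] \;=\; (1-p(x))\,q_0 + p(x)\,q_1 - x^2 \;\ge\; (1-p(x))\,a^2 + p(x)\,b^2 - x^2 \;=\; (x-a)(b-x),
\end{align*}
where the last equality follows by substituting $p(x)=(x-a)/(b-a)$ and simplifying. Evaluating at $x=1/2$ and using $a\le 0 \le 1 \le b$ gives $\Var[\widehat x] \ge (\tfrac{1}{2}-a)(b-\tfrac{1}{2}) \ge \tfrac{1}{2}\cdot \tfrac{1}{2} = \tfrac{1}{4}$, which is the desired bound. Equality holds iff $a=0$, $b=1$, and $\widehat x_0,\widehat x_1$ are deterministic, i.e., precisely randomized rounding.

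There is no real obstacle here; the only subtle point is being careful to handle Angel's potential private randomness (by the $q_i \ge (\mathbb E[\widehat x_i])^2$ step) and to rule out the degenerate case $a=b$ before dividing by $b-a$. Everything else is a one-line quadratic minimization, and picking the sample point $x=1/2$ avoids any case analysis on whether the vertex of $(x-a)(b-x)$ lies inside $[0,1]$.
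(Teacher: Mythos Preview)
The proposal is correct and takes essentially the same approach as the paper: both exploit that unbiasedness forces the send-probability to be affine in $x$, then lower-bound the variance at $x=1/2$ via $\mathbb E[\widehat x_i^{\,2}]\ge(\mathbb E[\widehat x_i])^2$. Your parameterization by $(a,b)=(\mathbb E[\widehat x_0],\mathbb E[\widehat x_1])$ is a bit cleaner than the paper's (which works in the $(Y(0),Y(1))$ coordinates and ends by minimizing a rational function over $[0,1]^2$), since deducing $a\le 0$, $b\ge 1$ from $p(x)\in[0,1]$ makes the final bound $(1/2-a)(b-1/2)\ge 1/4$ immediate.
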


\subsection{Deterministic Rounding}\label{sec:DR}
With deterministic rounding, \Sender sends $X=1$ when $x\ge 1/2$. \Receiver then estimates $\widehat x = X/2+1/4$. Deterministic rounding has an (absolute) error of at most $1/4$, which is achieved for $x\in\set{0,1/2,1}$. Therefore, its cost is $1/16$. The next theorem, whose proof appears in Appendix~\ref{sec:DRoptimality}, shows that deterministic rounding is optimal, as no algorithm that does not use shared randomness can have a lower cost (even with unrestricted private randomness).
We show that any such algorithm must have an expected squared error of at least $1/16$ on at least one of $\set{0,1/2,1}$. 
\begin{theorem}
Any algorithm without shared randomness must have a worst-case expected squared error of at least $1/16$.
\end{theorem}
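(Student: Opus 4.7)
The plan is to reduce the problem to a two-point facility location question on the interval $[0,1]$ and then invoke a clean pigeonhole argument on the three points $\{0, 1/2, 1\}$, exactly as hinted in the statement.

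First, I would argue that we may take the Receiver to be deterministic. If the Receiver's output $\widehat x$ given a received bit $X$ is randomized, then $\mathbb E[(\widehat x - x)^2 \mid X] = \mathrm{Var}(\widehat x \mid X) + (\mathbb E[\widehat x \mid X] - x)^2 \ge (\mathbb E[\widehat x \mid X] - x)^2,$ so replacing the randomized output by its conditional mean can only decrease the cost. Thus we may assume the Receiver is described by two constants $a = \widehat x_0$ and $b = \widehat x_1$, the outputs corresponding to $X=0$ and $X=1$.

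Next, I would handle the Sender. For a given input $x$, the Sender transmits $X=1$ with some probability $p(x) \in [0,1]$ (using its private randomness). The expected squared error at $x$ is
\[
(1-p(x))(a-x)^2 + p(x)(b-x)^2 \;\ge\; \min\bigl((a-x)^2,(b-x)^2\bigr) \;=\; d(x,\{a,b\})^2,
\]
since a convex combination is at least the minimum. So for any algorithm, and any fixed $(a,b)$, the worst-case cost is at least $\max_{x\in[0,1]} d(x,\{a,b\})^2$, and in particular it is at least $\max_{x\in\{0,1/2,1\}} d(x,\{a,b\})^2$.

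Finally, I would show by pigeonhole that for any two real numbers $a,b$, at least one of the three points $0, 1/2, 1$ is at distance at least $1/4$ from $\{a,b\}$. The three points are pairwise at distance at least $1/2$. If each were within distance strictly less than $1/4$ of $\{a,b\}$, then two of them would have the same nearest neighbour (say both within $<1/4$ of $a$), and the triangle inequality would force their mutual distance to be $<1/2$, a contradiction. Hence $\max_{x\in\{0,1/2,1\}} d(x,\{a,b\})^2 \ge (1/4)^2 = 1/16$, which combined with the previous inequality establishes the theorem.

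I do not expect any real obstacle: the only subtlety is correctly reducing Receiver randomness (trivial via the variance decomposition) and Sender randomness (trivial via the min-vs-convex-combination bound), after which the geometric pigeonhole step is immediate. Matching $1/16$ exactly is already witnessed by deterministic rounding $(a,b)=(1/4, 3/4)$ with $p(x)=\indicator_{x\ge 1/2}$.
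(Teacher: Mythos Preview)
Your proof is correct and rests on the same core idea as the paper's: reduce the Receiver's randomness via the variance decomposition (the paper phrases this as $\mathbb E[V^2]\ge(\mathbb E[V])^2$), then test the algorithm at the three points $\{0,1/2,1\}$. Where you diverge slightly is in the final step. The paper keeps the Sender's probability $Y(1/2)$ and the Receiver's output means $\mathbb E[V_0],\mathbb E[V_1]$ explicit, argues by cases that one may assume $\mathbb E[V_0]\le 1/4$ and $\mathbb E[V_1]\ge 3/4$ (otherwise the cost at $x=0$ or $x=1$ is already $\ge 1/16$), and then performs an algebraic computation at $x=1/2$ to close the gap. You instead first reduce the Sender's randomness via the convex-combination-$\ge$-min bound, reducing the whole problem to the geometric fact that no two points can cover all of $\{0,1/2,1\}$ to within radius $<1/4$; the pigeonhole/triangle-inequality argument then finishes in one line. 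Your route is a bit more elementary and avoids the case split and the algebra, at the (harmless) cost of throwing away exact information about $p(x)$ earlier.
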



\section{Lower Bounds}\label{sec:LB}

We next explore lower bounds for algorithms with shared randomness. 
We use Yao's minimax principle~\cite{yao1977probabilistic} to prove a lower bound on the cost of any \rev{biased} shared randomness protocol. \rev{Then, we show a stronger lower bound for unbiased algorithms using a different approach.}

\subsection{Lower Bound for Biased Algorithms}\label{sec:lb_biased}

We place Yao's general formulation in the context of our specific problem.  
\begin{theorem}(\cite{yao1977probabilistic})
Consider our estimation problem over the inputs $x\in[0,1]$, and let {$\mathcal {A}$} be the set of all possible \emph{deterministic} algorithms. 
For a (deterministic) algorithm $a\in\mathcal A$ and input $x\in[0,1]$, let the function $c(a,x) = (a(x)-x)^2$ be its squared error.

\noindent Then for any \emph{randomized} algorithm $A$ and input distribution $q \in\Delta([0,1])$ such that $X\sim q$:
\begin{equation*}
\max_{x\in\mathcal [0,1]}\mathbb E\brackets{c(A,x)}\ge \min_{a\in\mathcal A}\mathbb E\brackets{c(a,X)}\ .
\end{equation*}
That is, the \emph{expected} squared error (over the choice of $x$ from distribution $q$) of the \emph{best} deterministic algorithm (for $q$) lower bounds the \emph{expected} squared error of any randomized (potentially biased) algorithm $A$ for the \emph{worst-case} $x$ (i.e., its cost).
Further, the inequality holds as an equality for the optimal distribution $q$ and algorithm $A$, i.e.,
\begin{equation*}
\min_A\max_{x\in\mathcal [0,1]}\mathbb E\brackets{c(A,x)}= \max_q\min_{a\in\mathcal A}\mathbb E\brackets{c(a,X)}\ .
\end{equation*}
\end{theorem}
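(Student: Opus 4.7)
The first inequality is the (easy) weak-duality direction and I would prove it directly by swapping a max with an average and then an expectation with an expectation. Fix any randomized algorithm $A$ and any input distribution $q$. Since the worst-case is at least the average over any distribution,
\begin{equation*}
\max_{x\in[0,1]}\mathbb E[c(A,x)] \;\ge\; \mathbb E_{X\sim q}\bigl[\mathbb E[c(A,X)]\bigr].
\end{equation*}
Viewing $A$ as a distribution over deterministic algorithms (induced by its private and shared randomness), Fubini/Tonelli lets me interchange the two expectations:
\begin{equation*}
\mathbb E_{X\sim q}\bigl[\mathbb E_A[c(A,X)]\bigr] \;=\; \mathbb E_A\bigl[\mathbb E_{X\sim q}[c(A,X)]\bigr] \;\ge\; \min_{a\in\mathcal A}\mathbb E_{X\sim q}[c(a,X)],
\end{equation*}
where the last inequality uses that the inner quantity, for each realization $a$ of $A$, is at least the minimum over all deterministic algorithms. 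Chaining the two displays yields the claim.

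For the equality (the strong-duality part) I would apply a minimax theorem. Let $\mathcal P=\Delta(\mathcal A)$ and $\mathcal Q=\Delta([0,1])$, and consider $F(p,q)=\mathbb E_{a\sim p,\,x\sim q}[c(a,x)]$. This payoff is bilinear in $(p,q)$, hence convex in $p$ for fixed $q$ and concave in $q$ for fixed $p$. Observing that $\sup_q F(p,q)=\max_x \mathbb E_{a\sim p}[c(a,x)]$ (the supremum is always attained by a Dirac $\delta_x$) and $\inf_p F(p,q)=\min_a \mathbb E_{X\sim q}[c(a,X)]$ (attained by a Dirac on some deterministic $a$), the minimax equality $\inf_p\sup_q F=\sup_q\inf_p F$ is exactly the equality to be shown. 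I would invoke Sion's minimax theorem, using weak-$*$ compactness of $\mathcal Q$ (inherited from compactness of $[0,1]$) together with continuity of $q\mapsto F(p,q)$ in the weak-$*$ topology (the integrand $c(a,\cdot)$ is bounded and, for any fixed $a$, piecewise-continuous so that integration against $q$ is weak-$*$ continuous on the relevant support).

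\textbf{Main obstacle.} The delicate point is the topological setup: the set of deterministic one-bit algorithms $\mathcal A$ is parameterised by a measurable set $S\subseteq[0,1]$ (inputs mapped to $X{=}1$) together with two real estimates $\widehat x_0,\widehat x_1$, and this space is not naturally compact. The cleanest remedy I would pursue is to note that for any fixed $q\in\mathcal Q$, the minimising $a$ has the form $S=\{x: (x-\widehat x_1)^2 < (x-\widehat x_0)^2\}$, i.e.\ a halfspace determined by the midpoint $(\widehat x_0+\widehat x_1)/2$, so the relevant ``effective'' $\mathcal A$ is a two-parameter family in $[0,1]^2$, which is compact. With both sides now compact, convex, and the payoff continuous and bilinear, Sion's (or even the classical von Neumann) minimax theorem applies and delivers the equality.
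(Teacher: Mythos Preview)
The paper does not prove this theorem; it is stated with a citation to Yao~\cite{yao1977probabilistic} and used as a black box. So there is no in-paper proof to compare against, and I will simply assess your argument on its own merits.

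Your weak-duality direction is correct and standard.

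For strong duality, your plan via Sion's theorem is the right one, and your identification of the obstacle (noncompactness of the full algorithm space) together with the reduction to the two-parameter family $\{a_{v_0,v_1}:(v_0,v_1)\in[0,1]^2\}$ is the key move. Two points deserve tightening. First, you justify the restriction to threshold algorithms only on the $\min_a$ side (for fixed $q$); you also need it on the $\min_A$ side, since the optimal randomized algorithm is a priori a mixture over arbitrary deterministic algorithms. The missing observation is pointwise dominance: for any deterministic $a=(S,\widehat x_0,\widehat x_1)$, replacing $S$ by $\{x:(x-\widehat x_1)^2<(x-\widehat x_0)^2\}$ yields $a'$ with $c(a',x)\le c(a,x)$ for every $x$, so replacing each realization of $A$ by its threshold version can only decrease $\max_x\mathbb E[c(A,x)]$. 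Second, your continuity claim is slightly mis-stated: piecewise continuity of $c(a,\cdot)$ does not by itself give weak-$*$ continuity of $q\mapsto F(p,q)$. What actually saves you is that after the reduction, $c(a_{v_0,v_1},x)=\min\{(x-v_0)^2,(x-v_1)^2\}$ is jointly \emph{continuous} in $(v_0,v_1,x)$ on the compact set $[0,1]^3$; now $F$ is genuinely continuous in both arguments and Sion (or von Neumann) applies cleanly. With these two clarifications your argument goes through.
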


We proceed by selecting distributions $q$ to lower bound $\min_{a\in\mathcal A}\mathbb E\brackets{c(a,X)}$.
Notice that a deterministic algorithm can be defined using two values $v_0, v_1\in[0,1]$,
such that if $|x-v_0|\le |x-v_1|$ then \Sender sends $0$ and \Receiver estimates $x$ as $v_0$. Similarly, if $|x-v_0|> |x-v_1|$ then \Sender sends $1$ and \Receiver estimates $x$ as $v_1$.\footnote{Other deterministic algorithms, e.g., that send $0$ despite having $|x-v_0|> |x-v_1|$, can trivially be improved by an algorithm with the above form.}
%
%
In general, the above framework asserts that the cost, for the worst-case input, of any randomized algorithm is
\begin{align}
    \max_{q\in\Delta([0,1])}\min_{v_0,v_1\in[0,1]}
    \int_{0}^{1}\min\set{(x-v_0)^2, (x-v_1)^2}q(x)dx~.\label{eq:LBframework}
\end{align}
Our framework lower bounds the cost for any (biased or unbiased) algorithm that may use any amount of (shared or private) randomness.
We now consider distributions $q$ to lower bound the cost and later discuss the limitations of this approach.
\subsubsection{The $\set{0,1/2,1}$ case}
First, consider a discrete probability distribution $q$ over $\set{0,1/2,1}$, and assume without loss of generality that $q(0)\le q(1)$. Any deterministic algorithm cannot estimate all values exactly, and it must map at least two of the points to a single value, thus allowing us to lower bound its cost.
\rev{In Appendix~\ref{app:0_.5_1_biased_lb}, we prove the following.\hspace*{-2mm}}
\begin{restatable}{lemma}{goldbach}
\label{lem:5_1_biased_lb}
Any deterministic algorithm must incur a cost of at least $\frac{q(0)\cdot q(1/2)}{4\left(q(0)+q(1/2)\right)}$.
\end{restatable}
\rev{ For $q(0)=q(1)= (2-\sqrt 2) /2$ and $q(1/2) = \sqrt 2 - 1$, this lemma yields a lower bound of $3/4-1/\sqrt 2\approx 0.04289$. In Section~\ref{sec:biased_0_1/2_1}, we show that this is \emph{an optimal} lower bound when $x$ is known to be in $\set{0,1/2,1}$, by giving an algorithm with a matching cost.}
\rev{
\subsubsection{The $[0,1]$ case}
\label{sec:improvedBound}
In the general case, where $x$ can take on any value in $[0,1]$, we can get a tighter bound by looking at mixed distributions. Specifically, for parameters $a,w\in[0,1/2]$, we consider the distribution where:
\begin{equation*}
    q(x) = \begin{cases}
    0 & \mbox{ with probability $w$}\\
    1 & \mbox{ with probability $w$}\\
    \mbox{uniform on $[a,1-a]$} & \mbox{ otherwise}
    \end{cases}\quad.
\end{equation*}
%
}
\rev{
Directly analyzing the optimal deterministic algorithm for this distribution proves complex. 
Instead, we first \emph{hypothesize} that there exists an optimal deterministic algorithm for which either (1) $v_1=1-v_0$ or (2) $v_1=1$. We emphasize that the lower bound holds even if the hypothesis is false.
We then analyze what values of $a,w$ maximize the cost of the best deterministic algorithm with the above form.
Finally, we verify that the lower bound for the resulting distribution (with the specific $a,w$ values) holds \emph{for all deterministic algorithms}.
}

\rev{For case (1), we can express the cost as $2\cdot\parentheses{wv_0^2 + \frac{1/2-w}{1/2-a}\int_a^{1/2}(x-v_0)^2dx}$. Similarly, for case (2), we get a cost of $wv_0^2 + \frac{1-2w}{1-2a}\cdot\parentheses{ \int_{a}^{\min\set{1-a, (v_0+1)/2}}(x-v_0)^2dx + \int_{(v_0+1)/2}^{1-a}(x-1)^2dx}$.
Therefore, the cost of the optimal algorithm from the above family is given as: 
}
%
%
%
%
%
%
\MM{The below is unclear and needs explanation -- it's not clear where this expression comes from.  You strangely use the fraction (1/2-w)/(1/2-a) in one place but the equivalent fraction (1-2w)/(1-2a) elsewhere.  Shouldn't this be a max over a?  What do you mean by "cost optimal deterministic algorithm" since this is not tight.  Don't you just mean a lower bound based on $q(x)$ of this form can be found by optimizing the expression below?}
\rev{
{\small
\begin{align*}
    &\min\Bigg\{2\cdot\parentheses{wv_0^2 + \frac{1/2-w}{1/2-a}\int_a^{1/2}(x-v_0)^2dx},\\ &\qquad\qquad\qquad wv_0^2 + \frac{1-2w}{1-2a}\cdot\parentheses{ \int_{a}^{\min\set{1-a, (v_0+1)/2}}(x-v_0)^2dx + \int_{(v_0+1)/2}^{1-a}(x-1)^2dx}\Bigg\}\quad.
\end{align*}
}
This cost is maximized for $a=\frac{-2w^2+w-2\sqrt{w(1-w) }+1}{4w^2-6w+2}$, where the value of $w$ satisfies
\begin{equation*}
    32 w^3 - 56 w^2 + \sqrt{w(1-w) }\cdot (8 w^4 - 24 w^3 + 38 w^2 - 8 w - 7) + 24 w = 0.
\end{equation*}
The resulting bound is slightly larger than $0.0459$.
Next, we verify that for these $a,w$ values, no deterministic algorithm can achieve a lower cost.
Specifically, instead of using $q(x)$ as described above, we generate a finite discrete distribution.
For a parameter $n\in\mathbb N$, we define:
\begin{align*}
    q_n(x) = \begin{cases}
    \frac{\floor{n \cdot w}}{n} & \mbox{if $x\in\set{0,1}$}\\
    \frac{1}{n} & \mbox{if $x\in\set{a + \frac{1-2a}{2 \parentheses{n -2\floor{n \cdot w})} }+ i\cdot\frac{1-2a}{n-2\floor{n \cdot w}} \mid i\in\brackets{n-2\floor{n \cdot w}} }$}\\
    0 & \mbox{otherwise}
    \end{cases}\quad.
\end{align*}
Note that $\lim_{n\to\infty} q_n(x) = q(x)$.
We then find the optimal deterministic solution for this distribution by using a deterministic $k$-means clustering algorithm (for $k=2$), that is guaranteed to converge, e.g., using~\cite{gronlund2017fast}. 
The code that we used to obtain this result is available at~\cite{gist}. 
The optimal deterministic algorithm for $q_n(x)$ tends to have either $v_1=1-v_0$ or $v_1=1$ as hypothesized. \mbox{Finally, for $n=10^6$ we get a cost higher than $0.0459$ which we use as a lower bound.}
}
%
%

\rev{We do not believe that this bound is tight. Nonetheless, as we show in Section~\ref{sec:optimal}, our bound is within $0.2$\% of the optimum.}
%
\rev{\subsection{Lower Bound for Unbiased Algorithms - Beyond MiniMax\label{sec:unbiasedLB}}}
\rev{
We now consider lower bounds for unbiased algorithms. Utilizing Yao's lemma does not appear to provide means to obtain sharper bounds when requiring the algorithm to be unbiased. We consider the case where $x \in \set{0,1/2,1}$ and directly prove that any unbiased algorithm must have a worst-case variance of at least $1/16$.  This lower bound then also holds for $x \in [0,1]$, although an improved bound of $\pi^2/64-1/12\approx 0.07$ for this case, based on an average-case analysis, is presented in~\cite{stackexchange}.
}

\rev{
Assume that we have $h\in[0,1]$. \Sender sends $X(x,h)$ to \Receiver, which determines an estimate $\widehat x(X(x,h),h)$.
For $x',x''\in\set{0,1/2,1}$, let $p_{x',x''}=\Pr[X(x',h)=X(x'',h)]$ denote the probability (with respect to $h$) that the same bit is sent for $x',x''$.
Since we send a single bit, we have that $p_{0,1/2} + p_{1/2,1} + p_{0,1} \ge 1$.
For all $x',x''\in\set{0,1/2,1}$, we define $H_{x',x''}=\set{h\in[0,1]:X(x',h)=X(x'',h)}$ to be the set of shared-randomness values that would lead \Sender to send the same bit for both $x'$ and $x''$.
Next, denote by $G_{x',x''}=\mathbb E[\widehat x | h\in H_{x',x''}, x\in\set{x',x''}]$ the expected estimate value, conditioned on the shared randomness being in $H_{x',x''}$.
We have that:
{\small
\begin{align*}
    &\Var[\widehat x | x = 0]  \ge p_{0,1/2}\cdot(G_{0,1/2} - 0)^2  + p_{0,1}\cdot(G_{0,1} - 0)^2                                           + p_{1/2,1}\cdot (\mathbb E[\widehat x | h\in H_{1/2,1}, x=0]-0)^2\\
    &\Var[\widehat x | x = 1/2] \ge p_{0,1/2}\cdot(G_{0,1/2} - 1/2)^2        \notag                       + p_{0,1}\cdot(\mathbb E[\widehat x | h\in H_{0,1}, x=1/2] - 1/2)^2    \\&\qquad\qquad\qquad\qquad\qquad + p_{1/2,1}\cdot(G_{1/2,1} - 1/2)^2 &\\
    &\Var[\widehat x | x = 1]   \ge p_{0,1/2}\cdot(\mathbb E[\widehat x | h\in H_{0,1/2}, x=1] - 1)^2 \notag+ p_{0,1}\cdot(G_{0,1} - 1)^2                                           + p_{1/2,1}\cdot(G_{1/2,1} - 1)^2.&
\end{align*}
}
To proceed, we require the algorithm to be unbiased:
\begin{align*}
&G_{0,1/2} p_{0,1/2}                                   + G_{0,1} p_{0,1}                                     + \mathbb E[\widehat x | h\in H_{1/2,1}, x=0] p_{1/2,1} = 0\\
&G_{0,1/2} p_{0,1/2}                                   + \mathbb E[\widehat x | h\in H_{0,1}, x=1/2] p_{0,1} +  G_{1/2,1} p_{1/2,1}                                  = 1/2\\
&\mathbb E[\widehat x | h\in H_{0,1/2}, x=1] p_{0,1/2} + G_{0,1} p_{0,1}                                     +  G_{1/2,1} p_{1/2,1}                                  = 1.
\end{align*}
This allows us to express the expectations $\set{E[\widehat x | h\in H_{x',x''}, x=x''']|x',x'',x'''\in\set{0,1/2,1}}$ using $p_{x',x''},G_{x',x''}$ and obtain a set of three inequalities with six variables.
\ifarXiv
Our full analysis, given in Appendix~\ref{app:0_.5_1_unbiased_proof}, proceeds with a case analysis based on the value of $p_{0,1}$, the probability that the sender would send the same bit for $0,1$. We show that there exists an optimal algorithm in which $p_{0,1/2} + p_{1/2,1} + p_{0,1}=1$, $p_{0,1/2}=p_{1/2,1}$, and $G_{1/2,1}=1-G_{0,1/2}$. 
\else
Our full analysis, given in the full version~\cite{fullVersion}, proceeds with a case analysis based on the value of $p_{0,1}$, the probability that the sender would send the same bit for $0,1$. We show that there exists an optimal algorithm in which $p_{0,1/2} + p_{1/2,1} + p_{0,1}=1$, $p_{0,1/2}=p_{1/2,1}$, and $G_{1/2,1}=1-G_{0,1/2}$. 
\fi
This reduces the number of variables to three, allowing us to optimize the expression and show a lower bound of $1/16$ on any unbiased algorithm.
}
\medskip
\section{Algorithms with Unbounded Private Randomness}\label{section:semi-quantized}\label{sec:semiGeneral}

Here, we consider the case where the shared randomness is limited to $\ell$ bits, i.e., \mbox{$h\in[2^\ell]$}, but \Sender may use unbounded private randomness \mbox{$r\sim U[0,1]$  (that is independent of $h$).  }

We present the following algorithm:  
\Sender sends $X$ to \Receiver, where
    \ifdefined\compress
\vspace*{-2mm}
    \fi
\begin{equation*}
X \triangleq \begin{cases}
1 & \mbox{if $x \ge (r+h)2^{-\ell}$}\\
0 & \mbox{otherwise}
\end{cases}.
    \ifdefined\compress
\vspace*{-2mm}
    \fi
\end{equation*}
%
\Receiver then estimates 
\ifarXiv
\begin{equation*}
\widehat x = {X + (h - 0.5(2^{\ell}-1))\cdot 2^{-\ell}}.
\end{equation*}
\else
 $\widehat x = {X + (h - 0.5(2^{\ell}-1))\cdot 2^{-\ell}}.$
\fi

We first show that our protocol is unbiased. It holds that $\mathbb E[h]=0.5(2^{\ell}-1)$ and ${(r+h)\sim U[0,2^\ell]}$ (i.e., $(r+h)2^{-\ell}\sim U[0,1]$), and thus $\mathbb E[{\widehat x}]=\mathbb E[X]=x$.

\ifarXiv
We now state theorem, whose proof appears in Appendix~\ref{app:unbiased_alg}, that bounds the variance:
\else
\noindent\mbox{We state theorem, whose proof appears in the full version~\cite{fullVersion}, that bounds the variance:}
\fi
\begin{restatable}{theorem}{unbiasedalg}
\label{thm:unbiased_alg}
$\Var[\widehat x]\le 1/12 \cdot (1-4^{ -\ell}) + 1/4\cdot 4^{-\ell} = 1/6\cdot(1/2+4^{-\ell}).$
\end{restatable}

In Appendix~\ref{sec:k-bits}, we describe a simple generalization of this algorithm, together with a lower bound, for sending $k>1$ bits.
We now explain the connection to subtractive dithering and explore the applicability of the algorithm for the $x\in\set{0,1/2,1}$ special case.

\smallskip
{\textbf{Connection to subtractive dithering:}} \quad
First invented for improving the visibility of quantized pictures~\cite{roberts1962picture}, subtractive dithering aims to alleviate potential distortions that originate from quantization. 
Subtractive dithering was later extended for other domains such as speech~\cite{garey1982complexity}, distributed deep learning~\cite{abdi2020indirect}, and federated learning~\cite{shlezinger2020uveqfed}.

In our setting, subtractive dithering corresponds to using shared randomness to add \emph{noise} $\varsigma$ to $x$ before applying a deterministic quantization and subtracting $\varsigma$ from the estimation. 
Specifically, let $\mathcal Q:[0,1]\to\set{0,1}$ be a two-level deterministic quantizer such that $\mathcal Q(g) = 1$ if $g\ge 1/2$ and $0$ otherwise.
Then, in subtractive dithering \Sender sends $X=\mathcal Q(x+\varsigma)$ \mbox{and \Receiver estimates $\widehat x = X-\varsigma$.}

There are several noise classes that $\varsigma$ can be drawn from, as classified in~\cite{schuchman1964dither}, that yield $\widehat x~\sim U[x-1/2,x+1/2]$. For example, $\varsigma$ can be distributed uniformly on $[-1/2,1/2]$.

Consider our algorithm of this section without restricting the number of random bits (i.e., $\ell\to\infty$, and rescale so $h\in U[0,1]$). This would yield the following algorithm:
    \ifdefined\compress
\vspace*{-2mm}
    \fi
\begin{equation*}
X \triangleq \begin{cases}
1 & \mbox{if $x\ge h$}\\
0 & \mbox{otherwise}
\end{cases}\qquad
    \ifdefined\compress
\vspace*{-2mm}
    \fi
\end{equation*}
and $\widehat x = X + h - 0.5$. 
\ifarXiv
Similarly to subtractive dithering, we get that $\widehat x~\sim U[x-1/2,x+1/2]$, as we prove in Appendix~\ref{app:uniformness} for completeness. 
\else
Similarly to subtractive dithering, we get that $\widehat x~\sim U[x-1/2,x+1/2]$, as we prove in the full version~\cite{fullVersion} for completeness. 
\fi
To see that the two algorithms are equivalent (for $\varsigma\sim U[-1/2,1/2]$), denote $h' = 1/2-h$ (i.e., $h'\sim U[-1/2,1/2]$). Then $X = 1$ if $x+h' \ge 1/2$ and $\widehat x = X-h'$.

Therefore, we conclude that our algorithm provides a spectrum between randomized rounding ($\ell=0$) and a form of subtractive dithering ($\ell\to\infty$). In practice, this means that a small number of shared random bits yields a variance that is close to that of subtractive dithering ($\Var[\widehat x]=1/12$). For example, with a single shared random byte (i.e., $\ell=8$), our algorithm has a worst-case variance \mbox{that is within 0.02\% of $1/12$.}

\smallskip
{\textbf{The $x \in \set{0,1/2,1}$ case:}} \quad
Notice that if $x$ is known to be in $\set{0,1/2,1}$, then our $(\ell=1)$ algorithm gives
$
\Var[\widehat x]=1/16
$, as evident from Theorem~\ref{thm:unbiased_alg}.
Further, in this case, we do not require the private randomness as we can rewrite \Sender's algorithm as:
    \ifdefined\compress
\vspace*{-2mm}
    \fi
\begin{equation*}
X \triangleq \begin{cases}
0 & \mbox{if $x = 0$}\\
1-h & \mbox{if $x=1/2$}\\
1 & \mbox{if $x =1$}\\
\end{cases}\quad,
    \ifdefined\compress
\vspace*{-2mm}
    \fi
\end{equation*}
while \Receiver estimates  $\widehat x = X + (h-0.5)/2$.
\ifarXiv
This algorithm considerably improves over randomized rounding (which is optimal when no shared randomness is allowed, as shown in Appendix~\ref{sec:RRoptimality}), that has a variance of $1/4$ for $x=1/2$; i.e., a single shared random bit reduces the worst-case \mbox{variance by a factor of $4$.} 
\else
This algorithm considerably improves over randomized rounding (which is optimal when no shared randomness is allowed, as shown in the full version~\cite{fullVersion}), that has a variance of $1/4$ for $x=1/2$; i.e., a single shared random bit reduces the worst-case \mbox{variance by a factor of $4$.}
\fi
Further, it also improves over subtractive dithering, reducing the variance by a $4/3$ factor. Finally, this result is optimal according to the Section~\ref{sec:unbiasedLB} \mbox{lower bound, even if unbounded shared randomness is allowed.}

\section{Algorithms without Private Randomness}\label{sec:noPrivateRandomness}
In some cases, generating random bits may be expensive, e.g., when running on power-constrained devices. This is particularly acute when the device operates in an energy harvesting mode \cite{zhang2014enabling}. 
Past works have even considered how to ``recycle'' random bits (e.g.,~\cite{impagliazzo1989recycle}).
Therefore, it is important to study how to design algorithms that use just a few random bits.
To address this need, we consider scenarios where \Sender and \Receiver have access to a shared $\ell$-bit random value $h$, \mbox{but no private randomness.}

One thing to notice is that \Receiver can produce at most $2^{\ell+1}$ different values since \Receiver is deterministic after obtaining the $\ell+1$ bits of $h$ and $X$. In particular, this means there is no unbiased protocol for general $x\in[0,1]$.
Therefore, we focus on biased algorithms and study how shared randomness allows improving over deterministic rounding (which is optimal without shared randomness, as we show in Section~\ref{sec:DR}).

We start by proposing an optimal algorithm for the case where $x$ is known to be in $\set{0,1/2,1}$.
Then, we present adaptations of the subtractive dithering estimation method for the biased $x\in[0,1]$ setting. These improve over both (unbiased) subtractive dithering and deterministic rounding. To the best of our knowledge, these adaptations are novel.
Next, we show how \Sender can further reduce the cost while, among other changes, using a small number of shared random bits.
We conclude by giving design principles for numerically approximating the optimal algorithm and give realizations for small \mbox{number of shared random bits.}
\rev{
\subsection{The $x \in \set{0,1/2,1}$ Case}\label{sec:biased_0_1/2_1} 
We now consider the scenario where $x$ is guaranteed to be in $\set{0,1/2,1}$ using a single shared randomness bit  $h\in\set{0,1}$. For some $\alpha \in[0,1]$,
\Sender sends 
\begin{equation*}
  X=\begin{cases}1 &\mbox{if $x=1\vee (x=1/2\wedge h=0)$}\\0 &\mbox{otherwise }\end{cases}  
\end{equation*}
while
\Receiver estimates $\widehat x = \alpha \cdot h + (1-\alpha)\cdot X$.
}

\rev{
For example, this means that if $x=0$, the squared error is $0$ if $h=0$ and $\alpha^2$ otherwise. That is, the expected squared error is $\alpha^2/2$.
We optimize over the $\alpha$ \mbox{value to minimize the cost}
\begin{equation*}
\min_{\alpha\in[0,1]}\max\set{\alpha^2/2, (1-(1-\alpha))^2/2, \mathbb E\brackets{\parentheses{1/2 - \parentheses{\alpha\cdot h + (1-\alpha)\cdot (1-h)}}^2}}.
\end{equation*}
This is optimized for $\alpha=1-1/\sqrt{2}$, yielding a cost of $3/4-1/\sqrt 2\approx 0.04289$, which is \emph{optimal} according to \mbox{our Section~\ref{sec:LB} lower bound, even if unbounded shared randomness is allowed.}
%
\MM{The notation below is unwieldy and very unpleasant/confusing to read.  I would do the following:  (1)  you need to make clear that in 6.1 you're only doing the 1-bit case?  That's not clear currently.  (2)  You should define functions $T$, $Z_0$, and $Z_1$, where $Z_0$ is the estimation function given you've received a 0 (what you seem too be calling $Z_0$, which I claim is indecipherable) and similarly for $Z_1$.  If you want to say $Z_1 = 1- Z_0$ then that's fine.  }
\subsection{The $x \in [0,1]$ Case}\label{sec:0_1_case}
An important observation regarding optimal biased algorithms is that they, without loss of generality, can be expressed as a pair of monotone increasing functions $T, Z_0: [0,1]\to[0,1]$ as follows. Here $T$ is a threshold function that determines whether 0 or 1 is sent, $Z_0$ is the estimator when 0 is received, and $Z_1:[0,1]\to[0,1]$, given by $Z_1(h)=1-Z_0(1-h)$, is the estimator when 1 is received. That is, 
\Sender sends
\begin{equation*}
 X = \begin{cases}
1 & \mbox{if $x \ge T(h)$}\\
0 & \mbox{otherwise}
\end{cases}.   
\end{equation*}
In turn, \Receiver estimates $\widehat x = Z_X(h)$.
We further explain this representation in Appendix~\ref{app:biased_algorithm_formulation}.
}
%
\rev{Based on this observation, we next lay out a sequence of algorithmic improvements over deterministic rounding that leverage the shared randomness to reduce the cost. We visualize the algorithms resulting from each improvement in Figure~\ref{fig:sigmoid_intuition}}.
\subsubsection{Subtractive dithering adaptations}\label{sec:adaptations}
As subtractive dithering provides the lowest cost (albeit using unbounded shared randomness) of the previously mentioned unbiased algorithms, one may wonder if it is possible to adapt it to the biased scenario.
Accordingly, we first briefly overview two natural adjustments that use unbounded shared randomness and improve over the $1/16$ cost of deterministic rounding. We then propose improved protocols that reduce the cost further despite using only a small number (e.g., $\ell=3$) of random bits.

Intuitively, subtractive dithering may produce estimates that are outside the $[0,1]$ range. Therefore, by \emph{truncating} the estimates to $[0,1]$ one may only reduce the expected squared error for any $x\neq 1/2$. However, it does not reduce the expected squared error for \mbox{$x=1/2$, and thus the cost would remain $1/12$.}

To reduce the cost, one may further truncate the estimates to $[z,1-z]$ for some $z\in[0,1/2]$. 
\ifarXiv
Indeed, we show in Appendix~\ref{sec:TruncatedDithering} that this truncation reduces the cost to $\approx 0.0602$, 
for $z$ satisfying $1/24 + z^2/2 + (2 z^3)/3 = 0$ ($z\approx 0.17349$).
\else
Indeed, we show in the full version~\cite{fullVersion} that this truncation reduces the cost to $\approx 0.0602$, 
for $z$ satisfying $1/24 + z^2/2 + (2 z^3)/3 = 0$ ($z\approx 0.17349$).
\fi

\begin{figure}[]
    \hspace*{-7mm}\centering
    {\includegraphics[width=1.05\textwidth]{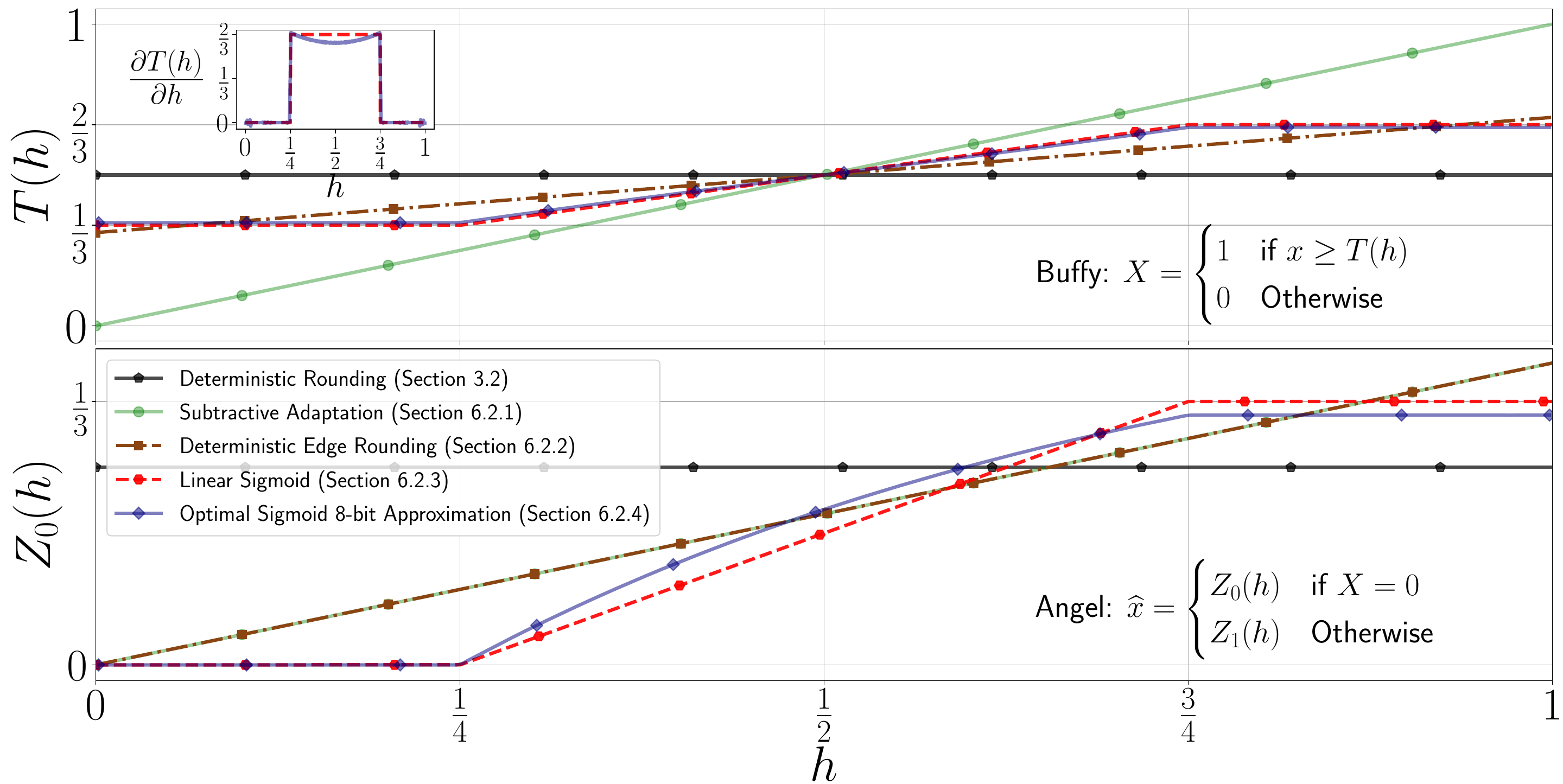}}
    \vspace{-0.0cm}
    \caption{Illustration of the different biased algorithms. While deterministic rounding does not use the shared randomness and is thus constant, the other algorithms have both the threshold and estimation be monotone functions of $h$.}
    \label{fig:sigmoid_intuition}
    \vspace{-0.0cm}
\end{figure}
A better adaptation strategy is obtained by changing the estimation to a linear combination of $X$ and $h$. Specifically, consider the protocol where \Sender \mbox{sends (for a shared $h\sim U[0,1]$)}
\begin{equation*}
X= \begin{cases}
1 & \mbox{if $x \ge h$}\\
0 & \mbox{otherwise}
\end{cases}
\end{equation*}
and \Receiver estimates, for some $\alpha\in[0,1]$,
\ifarXiv
\begin{equation*}
\widehat x = \alpha \cdot h + (1-\alpha)\cdot X.
\end{equation*}
\else
$\alpha \cdot h + (1-\alpha)\cdot X.$
\fi
\ifarXiv
Optimizing the parameters, we show in Appendix~\ref{app:BiasedSD} that this algorithm achieves a cost of $5/3-\phi\approx 0.04863$, which is obtained for $\alpha=2-\phi\approx 0.382$. Interestingly, this cost is achieved for \emph{all} $x\in[0,1]$.
\else
Optimizing the parameters, we show in the full version~\cite{fullVersion} that this algorithm achieves a cost of $5/3-\phi\approx 0.04863$, which is obtained for $\alpha=2-\phi\approx 0.382$. Interestingly, this cost is achieved for \emph{all} $x\in[0,1]$.
\fi


\rev{
In Figure~\ref{fig:sigmoid_intuition}, we illustrate this algorithm.
As shown, the subtractive dithering adaption has $T(h)=h$ and $Z_0(h) = \alpha\cdot h$.
This means that \Sender sends $X=1$ if $x\ge h$ and \Receiver estimates $\widehat x = \alpha\cdot h$ 
if $X=0$ and $\widehat x = (1-\alpha\cdot (1-h)) = (1-\alpha) + \alpha\cdot h$ otherwise.
}

\subsubsection{\rev{Deterministically rounding extreme values}} \label{sec:AnImprovedAlgorithm}\label{sec:detroundingatedges}
We now show how to leverage a finite number of shared random bits $\ell$ to design improved algorithms.
As we show, it is possible to benefit from \emph{deterministically} rounding values that are ``close'' to $0$ or $1$ and use the shared randomness otherwise.

Similarly to the subtractive dithering adaptation above, \Receiver estimates $x$ using a linear combination of $h$ (with weight $\alpha$) and $X$ (with a weight of $1-\alpha$), where  $\alpha\in[0,1]$ is chosen later.  
For all $i\in[2^\ell-1]$, define the interval 
\begin{equation}\label{eq:det_interval_I_i}
 \mathcal I_i = \Bigg[(1-\alpha)/2 + i\cdot\frac{\alpha}{2^{\ell}-1},\, (1-\alpha)/2 + (i+1)\cdot\frac{\alpha}{2^{\ell}-1}\Bigg).   
\end{equation}
In our algorithm, \Sender sends 
\begin{equation*}
X= \begin{cases}
0 & \mbox{if $x<(1-\alpha)/2$}\\
\indicator_{h\le i} & \mbox{if $x\in \mathcal I_i$, $i\in[2^{\ell}-1]$
}\\
1 & \mbox{if $x\ge(1+\alpha)/2$}
\end{cases}\ ,
\end{equation*}
and \Receiver estimates:
$
\widehat x = \alpha \cdot h/(2^\ell-1) + (1-\alpha)\cdot X.\label{eq:biasedAlgEst}    
$

\rev{
Note that we deterministically partition the range $[(1-\alpha)/2,(1+\alpha)/2]$ into $2^{\ell}-1$ equally spaced intervals.
\ifarXiv
Intuitively, these intervals are chosen in a way that makes the expected squared error a continuous function of $x$, as our analysis, given in Appendix~\ref{app:detroundingatedges}, indicates.
\else
Intuitively, the chosen intervals are designed to make the expected squared \mbox{error a continuous function of $x$, as our analysis, given in the full version~\cite{fullVersion}, indicates.}
\fi
}

\rev{
As we show, minimizing $cost = \min_\alpha\max_x \mathbb E[(\widehat x - x)^2]$ yields cumbersome expressions.
}
For example, we get that with one shared random bit ($\ell=1$), our algorithm has a cost of $1/18\approx 0.05556$
\ifarXiv
(obtained for\footnote{Notice that it is different than the value used for the $x\in\set{0,1/2,1}$ case.} $\alpha=1/3$), 
\else
(obtained for $\alpha=1/3$, different than the $\alpha$ value used for the $x\in\set{0,1/2,1}$ case), 
\fi
lower than that of deterministic rounding (i.e., $1/16$). For $\ell=2$, we obtain a cost of $\frac{259 - 140 \sqrt3}{338}\approx 0.04885$ (reached for $\alpha=\frac{15-6 \sqrt 3}{13}$), and $\ell=3$ bits further reduces the cost to $35/722\approx 0.04848$ (when $\alpha=7/19$). Additionally, with $\ell=3$ bits, this improves over the subtractive dithering adaptions (that use unbounded shared randomness) for all $x\in[0,1]$. Notice that these costs are $\approx$21\%, $\approx$6.4\%, and $\approx$5.6\% from the $\approx 0.0459$ lower bound (see Section~\ref{sec:improvedBound}), \mbox{and thus from the optimal algorithm.}
\ifarXiv
For completeness, we give the limiting algorithm (as $\ell\to\infty$) in Appendix~\ref{app:LimitBiasedAlgorithm}.
\else
For completeness, we give the limiting algorithm (as $\ell\to\infty$) in the full version~\cite{fullVersion}.
\fi
\rev{For intuition, we illustrate the limiting algorithm ($h\in[0,1]$) in Figure~\ref{fig:sigmoid_intuition}. As shown, we have $T(h)=\frac{1-\alpha}{2} + \alpha\cdot h$ (where $\alpha=2-\phi\approx 0.38$) and $Z_0(h)=\alpha\cdot h$. Observe that \Receiver uses the same estimation function as in Section~\ref{sec:adaptations}, but \Sender's threshold function is different. Intuitively, the new threshold function ensures that each $x$ is mapped to the closest estimate value. For example, if $x=0.1$ and $h=0$, the subtractive adaptation would have $X=1$ and thus $\widehat x=1-\alpha\approx0.62 $ while here we get $X=0$ and $\widehat x = 0$. }

Interestingly, the cost slightly and monotonically \emph{increases} when increasing the number of bits $\ell$ beyond $3$. 
This phenomenon suggests that we need more complex algorithms to leverage additional available random bits. 
\ifarXiv
\rev{We explore several approaches;} in Appendix~\ref{app:hybrid}, we show that by probabilistically selecting between the above algorithm (for $\ell\to\infty$) and the $\set{0,1/2,1}$ algorithm from Section~\ref{sec:biased_0_1/2_1}, we can reduce the error to $\frac{6 \sqrt{10}+ 11 \sqrt{5} - 18 \sqrt{2}-17 }{24}\approx 0.04644$.
\else
\rev{We explore several approaches;} in the full version~\cite{fullVersion}, we show that by probabilistically selecting between the above algorithm (for $\ell\to\infty$) and the $\set{0,1/2,1}$ algorithm from Section~\ref{sec:biased_0_1/2_1}, we can reduce the error to $\frac{6 \sqrt{10}+ 11 \sqrt{5} - 18 \sqrt{2}-17 }{24}\approx 0.04644$.
\fi
\rev{Intuitively, \Sender and \Receiver can implicitly agree on the chosen algorithm using the shared randomness.
Here, we proceed by analyzing the potential benefits of non-uniform partitioning of the $h$ values, which reduces the error further.}
%
%
%
\subsubsection{Non-uniform partitioning}\label{sec:linear_sigmoid}
Intuitively, the above algorithms have a threshold function that is linear in $h$; i.e., it takes the form $T(h)=a\cdot h + b$. 
We now show that this can be improved by looking at \emph{sigmoid}-like functions.
For ease of exposition, in this section, we consider $h\in[0,1]$ to represent unbounded shared randomness, although the algorithm can be discretized given sufficient random bits.
Recall from Section~\ref{sec:0_1_case} that an algorithm can be expressed as a pair of functions $T, Z_0:[0,1]\to[0,1]$ such that \Sender sends $1$ if $x\ge T(h)$ while \Receiver estimates $Z_0(h)$ when receiving $X=0$ and  $Z_1(h) = 1 - Z_0(1-h)$ otherwise.
Here, we consider a \emph{linear sigmoid} function (also illustrated in Figure~\ref{fig:sigmoid_intuition}), which, for some $h_0\in[0,1/2]$, is defined as
\ifarXiv
\begin{align*}
    T(h) &= \begin{cases}
    \alpha & \mbox{if $h<h_0$}\\
    \alpha+(1-2\alpha)\cdot \frac{h-h_0}{1-2h_0} & \mbox{if $h\in[h_0,1-h_0]$}\\
    1-\alpha & \mbox{otherwise}
    \end{cases}.\\
    Z_0(h) &= \begin{cases}
    0 & \mbox{if $h<h_0$}\\
    (1-2\alpha)\cdot \frac{h-h_0}{1-2h_0} & \mbox{if $h\in[h_0,1-h_0]$}\\
    1-2\alpha & \mbox{otherwise}
    \end{cases}.
\end{align*}
\else
{
\begin{align*}\hspace*{-4mm}
    T(h) &= \begin{cases}
    \alpha & \mbox{if $h<h_0$}\\
    \alpha+ \frac{(1-2\alpha)(h-h_0)}{1-2h_0} & \mbox{if $h\in[h_0,1{-}h_0]$}\\
    1-\alpha & \mbox{otherwise}
    \end{cases},\quad 
    Z_0(h) = \begin{cases}
    0 & \mbox{if $h<h_0$}\\
     \frac{(1-2\alpha)(h-h_0)}{1-2h_0} & \mbox{if $h\in[h_0,1{-}h_0]$}\\
    1-2\alpha & \mbox{otherwise}
    \end{cases}.
\end{align*}}
\fi
Notice that in this algorithm we have $Z_0(h) = T(h)-\alpha$.

\rev{Our analysis, given in Appendix~\ref{app:linear_sigmoid}, shows that the cost is minimized for $h_0=1/4, \alpha=1/3$, where the error is:
\ifdefined\compress
\vspace*{-2mm}
\fi
\begin{align*}
    \mathbb E[(\widehat x - x)^2] &=\mathbb E[(\widehat x)^2] - 2x\mathbb E[\widehat x] + x^2
    =
    \begin{cases}
     5/108 - x/3 + x^2 & \mbox{if $x < 1/3$}\\
     5/108 & \mbox{if $x\in[1/3,2/3]$}\\
     77/108 - 5 x/3 + x^2 & \mbox{otherwise}
    \end{cases}.
    \ifdefined\compress
\vspace*{-2mm}
    \fi
\end{align*}
Therefore, the cost is $5/108\approx 0.0463$, which is less than $0.9\%$ higher than the $0.0459$ lower bound (Section~\ref{sec:improvedBound}).
The algorithm has two interesting properties. First, its expected squared error is constant for all $x\in\set{0,1}\cup[1/3,2/3]$ and, second, its expectation is not continuous as a function of $x$, as shown in Figure~\ref{fig:expectations_intuition}.
}

\begin{figure}[H]
    \hspace*{-7mm}\centering
    {\includegraphics[width=1.05\textwidth]{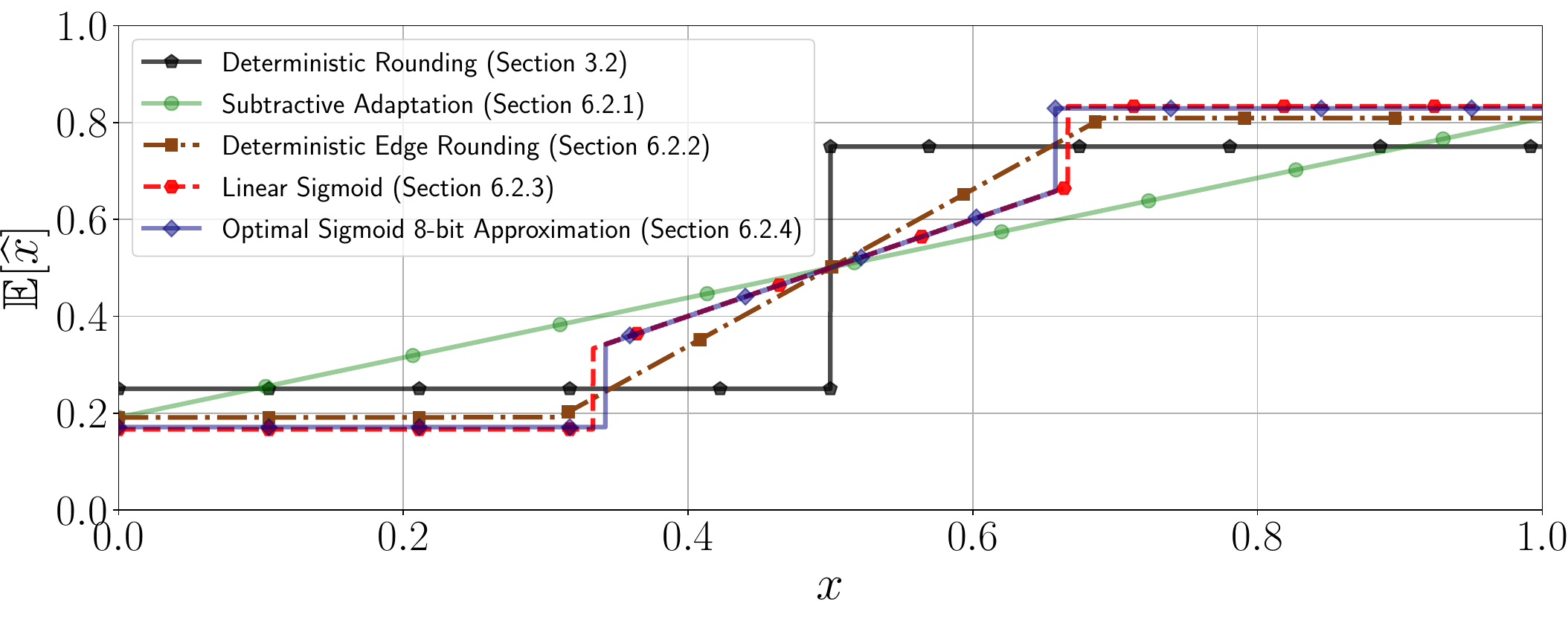}}
    \vspace{-0.1cm}
    \caption{Illustration of the expectation of the different biased algorithms. Deterministic rounding does not use randomization and is therefore a step function, while others increase gradually in $x$. \mbox{Notice that the expectations of the linear sigmoid and optimal approximation are not continuous.}}
    \label{fig:expectations_intuition}
    \vspace{-0.0cm}
\end{figure}
\rev{
\subsubsection{Towards the optimal algorithm}\label{sec:optimal}
We now consider more general algorithms that have arbitrary estimate function $Z_0$.
To that end, we use a numerical solver that approximates the optimal solution.
Clearly, to define the input problem, we need to limit the number of variables and constraints. 
We achieve this using several observations:
\begin{itemize}
    \item We consider bounded shared randomness $h\in[2^\ell]$ for $\ell\in\mathbb N$ bits. In fact, bounded \mbox{shared randomness is precisely what allows us to develop this numerical approach.}
    \item We use the observation that an optimal algorithm's $T$ and $Z_0$ functions are not independent and satisfy $\forall h\in[0,1]:T(h) = \frac{Z_0(h)+Z_1(h)}{2}=\frac{Z_0(h)+(1-Z_0(1-h))}{2}$; this is because, that way, every $x\in[0,1]$ is estimated using the value closer to it between $Z_0(h)$ and $Z_1(h)=1-Z_0(1-h)$. In fact, the algorithms in sections~\ref{sec:AnImprovedAlgorithm}-\ref{sec:linear_sigmoid} follow this rule, while the subtractive adaptation (Section~\ref{sec:adaptations}) does not.
    As a result, we can define the variables $\set{z_{h} | h\in[2^\ell]}$ and derive the thresholds from the solver's output by $Z_0(h) = z_h$.
    \item For computing the maximal error for any $x\in[0,1]$, it is enough to look at a discrete set of points. This is because the number of possible estimates is $2^{\ell+1}$. Therefore, given two estimates $z_{h}, 1-z_{2^\ell-1-h}$ that correspond to the values \Receiver uses given $h$ and $X=0$ or $X=1$, the worst expected squared error (for this $h$) is obtained for $y_h\triangleq\frac{z_{h}+1-z_{2^\ell-1-h}}{2}$. Therefore, by checking all $x\in\set{y_h \mid h\in[2^\ell]}$, we can compute the cost.
\end{itemize}
Using these observations, we formulate the input as:
\begin{equation*}
\begin{array}{ll@{}ll}
\displaystyle{\minimize_{\set{z_{h} \mid h\in[2^\ell]}}} & \displaystyle C\vspace*{-0mm}\\
\text{subject to}& \displaystyle C\ge \sum\limits_{j=0}^{h}\parentheses{y_h-(1-z_{2^\ell-1-h})}^2 +  \sum\limits_{j=h+1}^{2^\ell-1}\parentheses{y_h-z_{h}}^2 ,  &&h=0 ,\dots, 2^\ell-1\\
&y_h=\frac{z_{h}+1-z_{2^\ell-1-h}}{2},\qquad
                                                                 z_{h} \in [0,1] &&h=0 ,\dots, 2^{\ell}-1
\end{array}
\end{equation*}
In the above, we express the expected squared error at $y_h$ by considering the $h$ values for which $x\ge T(h)$ ($j\in[h]$) and those that $x < T(h)$.
The output for the above problem does not seem to follow a compact representation. However, it is still possible to implement using a simple lookup table. For example, if $\ell=8$, we can store all $z_h$ when implementing \Sender and \Receiver.
This algorithm's cost is lower than that of the linear sigmoid (that uses unbounded randomness) when using $\ell\ge 4$ bits. Specifically, using $4$ shared random bits, the cost is $\approx0.04611$, while using $8$ bits, it further reduces to $\approx 0.04599$. Notice that these are less than 0.5\% and 0.2\% higher than the lower bound of Section~\ref{sec:improvedBound}. We note that this approach yields improvement even for a small number of shared random bits; for example, using $\ell=1$ bit ($h\in\set{0,1})$, we get a cost of $1/20$ for $z_0 = 0.1, z_1 = 0.3$ which is equivalent to
\MM{this isn't very clear -- it feels that $\widehat x$ is an overloaded notation and I don't know what it means here.  Can we revise notation so  $\widehat x$ is always an estimate, these variables are something else?}
the following algorithm:
\begin{align*}
    X &= \begin{cases}1 & \mbox{if $x\ge 0.4+0.2h$}\\0&\mbox{otherwise}\end{cases}, \qquad    \widehat x = 0.1 + 0.2h +0.6X\ .
\end{align*}
We visualize the resulting algorithm, for $\ell=8$, in figures~\ref{fig:sigmoid_intuition} and~\ref{fig:expectations_intuition}.
Notice that while the algorithm looks almost similar to our linear sigmoid, looking that the derivative $\frac{\partial T(h)}{\partial h}$ (Figure~\ref{fig:sigmoid_intuition}) shows that this optimal solution is not piece-wise linear.
\MM{Is this solution "optimal"?  Or you mean the optimal you've computed?}
\ran{It's the cost of the above for $\ell=1$, which we believe is the optimal $1$-bit algorithm.}
}

\begin{figure}[]
    \hspace*{-7mm}\centering
    {\includegraphics[width=1.0\textwidth]{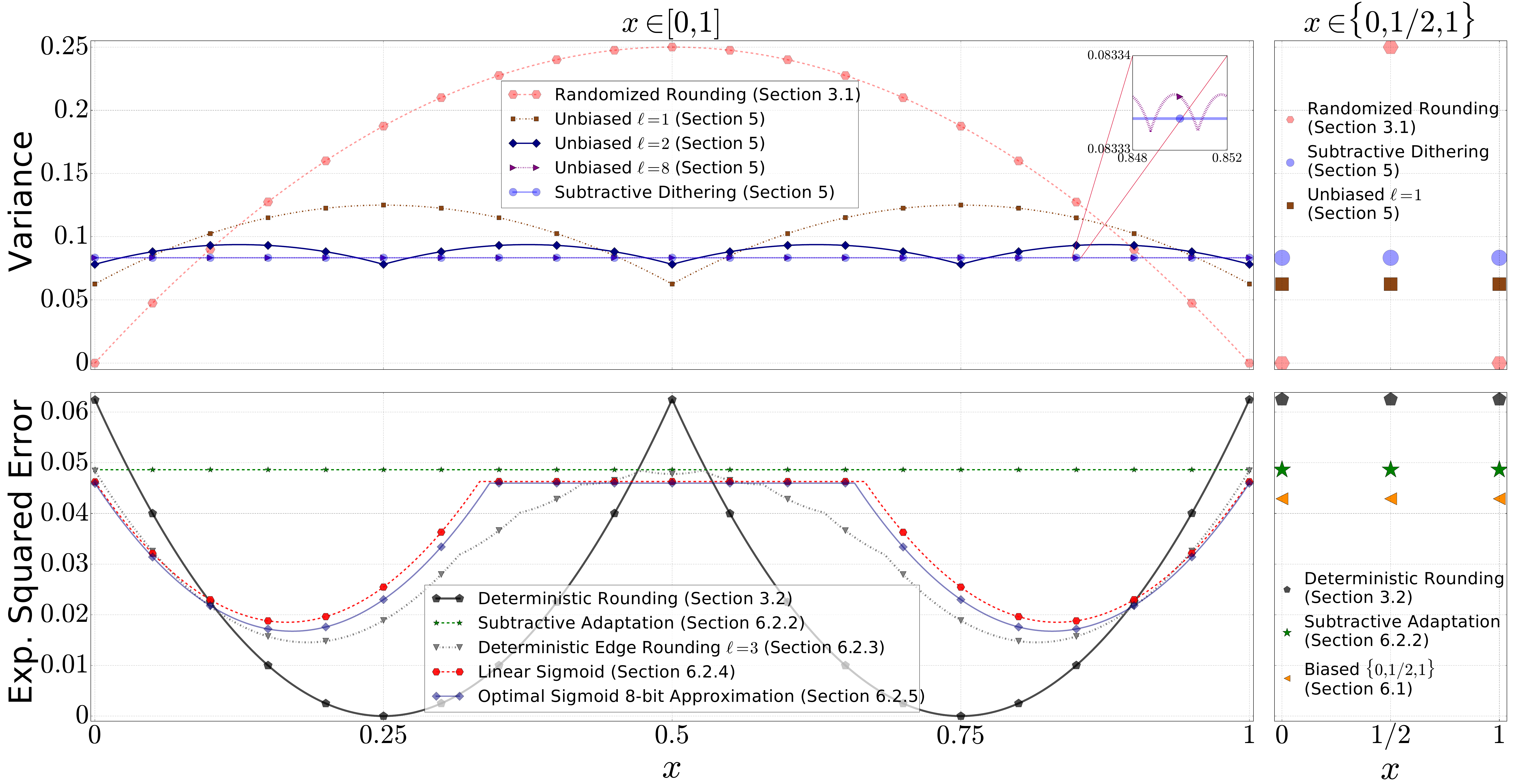}}
    \vspace{-0.0cm}
    \caption{An illustration of the variance and expected squared error of the different algorithms.
    As shown, our unbiased algorithm is competitive with subtractive dithering despite using a single shared random byte, while our single-bit algorithm improves over subtractive on $\set{0,1/2,1}$. 
    For the biased case, in addition to improving the $\set{0,1/2,1}$ case, our optimal sigmoid approximation algorithm achieves the lowest cost (less than 0.2\% of the optimum!) \mbox{while using a single shared random byte.}
    }
    \label{fig:comparison}
    \vspace{-0.1cm}
\end{figure}

\vspace*{-1mm}
\section{Visual Comparison of the Algorithm Costs}
\vspace*{-2mm}
We illustrate the various algorithms in Figure~\ref{fig:comparison}. 
In the unbiased case, notice how a single ($\ell=1$) shared random bit significantly improves over randomized rounding (which is optimal when \Sender and \Receiver are restricted to private randomness). This further improves for larger $\ell$ values, where for $\ell=8$ we have a cost that is only $0.02$\% higher than that of subtractive dithering, which uses unbounded shared randomness (the difference shown in zoom). When $x$ is known to be in $\set{0,1/2,1}$ (right-hand side of the figure), it is evident how our unbiased $\ell=1$ algorithm improves over \mbox{both randomized rounding and subtractive dithering.}

\rev{
In the biased case, our adaptation to the subtractive dithering estimation (termed Subtractive Adaptation) improves over the cost of deterministic rounding. 
This is further improved by the algorithm of Section~\ref{sec:AnImprovedAlgorithm}, termed Deterministic Edge Rounding, which is depicted using $\ell=3$ bits as it minimizes its cost.
Next, the Linear sigmoid (Section~\ref{sec:linear_sigmoid}) shows how to lower the cost (using unbounded shared randomness) by non-uniform partitioning of the $h$ values. Additionally, we show the optimal $8$-bit algorithm (Section~\ref{sec:optimal}) that gets within 0.2\% from the lower bound while using a single shared random byte.
Finally, if $x$ is known to be in $\set{0,1/2,1}$, our (optimal) biased $\set{0,1/2,1}$ algorithm improves over all other solutions \mbox{while using only a single shared random bit.}
}

\vspace{-3mm}
\section{Discussion}\ran{edit discussion. mention that all biased algorithm generalize each other}
\vspace{-1mm}
In this paper, we studied upper and lower bounds for the problem of sending a real number using a single bit. The goal is to minimize the cost, which is the worst-case variance for unbiased algorithms, or the worst-case expected squared error for biased ones.
For all cases, we demonstrated how shared randomness helps to reduce the cost. Motivated by real-world applications, we derived algorithms with a bounded number of random bits that can be as low as a single shared bit. 
For example, in the unbiased case, using just one shared random bit reduces the variance two-fold compared to randomized rounding (which is optimal when no shared randomness is available). Further, using a single byte of shared randomness, our algorithm's variance is within 0.02\% from the state of the art, which uses unbounded shared randomness.
Our results are also near-optimal in the biased case, with a gap lower than \rev{$0.2\%$ between the upper and lower bounds with a single shared random byte.
Our upper bound is presented \mbox{as a sequence of algorithms, each generalizing the previous while reducing the cost further. \hspace*{-2mm}}
}

\rev{
For the special case where $x$ is known to be in $\set{0,1/2,1}$, we give optimal unbiased and biased algorithms, together with matching lower bounds. Our algorithms use a single shared random bit, and the lower bounds show that the cost cannot be improved even when unbounded shared randomness is allowed.
}\vspace{1mm}

{
\rev{We conclude by identifying directions for future research, beyond settling the correct bounds.}
First, our lower bounds apply for algorithms that use unbounded shared randomness, and new techniques for developing sharper bounds for other cases are of interest.
\rev{Another direction is looking into optimizing the cost when sending $k$ bits, for some $k>1$.}
We make a first small step in Appendix~\ref{sec:k-bits}, where we provide simple generalizations of our unbiased algorithm and lower bound to sending $k$ bits.
Also, in a recent followup work~\cite{2021arXiv210508339V}, we showed that for sending $d$-sized real vector using $d(1+o(1))$ bits it is better to encode all coordinates together rather than sending them separately. 
Intuitively, we can reduce the error by generating an encoded vector, that mixes the original vector entries, before sending them.
It is interesting to formalize the bounds for sending vectors similarly to the single number case. One possible direction is to send the encoded coordinates using the tools developed in this paper.
Finally, we are unclear on whether private randomness can help improve biased algorithms (see Table~\ref{tbl:summary}).\SV{TODO: discuss subtractive dithering: is it optimal? we already know that it is not optimal for the average case...}
}

\ifarXiv
\newpage
\bibliography{main.bib}
\appendix
\else
\section{Supplementary Results}
\let\section\subsection
\let\subsection\subsubsection
\fi
\ifarXiv
\section{Optimality of Randomized Rounding }\label{sec:RRoptimality}
We show that without shared randomness, randomized rounding is optimal in the sense that it minimizes the worst-case estimation variance.

Consider an arbitrary protocol. We model it as follows: we have two (deterministic) parameters: $Y:[0,1]\to[0,1]$ and $\Gamma:\{0,1\}\to\Delta([0,1])$. 

\Sender computes $p=Y(x)$ and sends $X\sim\mbox{Bernoulli}(p)$. In turn, \Receiver receives $X$, and estimates $x$ by drawing from the distribution $\Gamma(X)$.
We also denote by $Z_0\sim \Gamma(0)$ and $Z_1\sim \Gamma(1)$ random variables such that the final estimate is 
\begin{equation*}\widehat x = \indicator_{X=0}\cdot Z_0 + \indicator_{X=1}\cdot Z_1.\end{equation*}

Notice that this formulation captures any protocol.
For example, randomized rounding is defined as $Y(x)=x$ and 
\begin{equation*}
\Gamma (X)(y)= 
    \begin{cases}
        1 &\mbox{if $y=X$}\\
        0 &\mbox{otherwise}
    \end{cases}
    .
\end{equation*}
(this is a slight abuse of notation as the above definition assumes that $\Gamma (X)$ \mbox{is a density function).}

We demand that the protocol will produce unbiased estimates for any $x$.
That is, \mbox{it must satisfy:}
\begin{align}
    \mathbb E[\widehat x] = Y(x)\cdot\mathbb E\brackets{Z_1} + 
    (1-Y(x))\cdot\mathbb E\brackets{Z_0} 
    = x.\label{eq:unbiasedness}
\end{align}
In particular, for $x=0$, we have:
\begin{align}
    &Y(0)\cdot\mathbb E\brackets{Z_1} + 
    (1-Y(0))\cdot\mathbb E\brackets{Z_0} 
    = 0.\notag
    \end{align}
and equivalently:   
\begin{align}
    &\mathbb E\brackets{Z_0} = -\frac{Y(0)}{1-Y(0)}\cdot \mathbb E\brackets{Z_1}.
    \label{eq:x0_basis}
\end{align}
\MM{This is just algebra that is easy please cut and get to the end.}  
Similarly, plugging $x=1$ into~\eqref{eq:unbiasedness} gives:
\begin{align}
    &Y(1)\cdot\mathbb E\brackets{Z_1} + 
    (1-Y(1))\cdot\mathbb E\brackets{Z_0} 
    = 1.\notag
\end{align}
Using \eqref{eq:x0_basis}, we proceed with several simplifications:
\begin{align}
    &Y(1)\cdot\mathbb E\brackets{Z_1} + 
    (1-Y(1))\cdot-\frac{Y(0)}{1-Y(0)}\cdot \mathbb E\brackets{Z_1} 
    = 1.\notag\\
    &\mathbb E\brackets{Z_1}\cdot\parentheses{Y(1)-(1-Y(1))\frac{Y(0)}{1-Y(0)}}
    = 1.\notag\\
    &\mathbb E\brackets{Z_1}\cdot\parentheses{\frac{Y(1)\cdot(1-Y(0))-(1-Y(1))Y(0)}{1-Y(0)}}
    = 1.\notag\\
    &\mathbb E\brackets{Z_1}\cdot\parentheses{\frac{Y(1)-Y(0)}{1-Y(0)}}
    = 1.\notag\\
    &\mathbb E\brackets{Z_1}=\frac{1-Y(0)}{Y(1)-Y(0)}.
    \label{eq:x1}
\end{align}
Plugging~\eqref{eq:x1} into \eqref{eq:x0_basis}, we also get:
\begin{align}
\mathbb E\brackets{Z_0} = -\frac{Y(0)}{1-Y(0)}\cdot \mathbb E\brackets{Z_1} = -\frac{Y(0)}{Y(1)-Y(0)}.
\label{eq:x0}
\end{align}
Substituting~\eqref{eq:x1} and~\eqref{eq:x0} in~\eqref{eq:unbiasedness}, we simplify the expression further:
\begin{align}
    &Y(x)\cdot\mathbb E\brackets{Z_1} + 
    (1-Y(x))\cdot\mathbb E\brackets{Z_0} 
    = x.\notag\\
    &Y(x)\cdot\frac{1-Y(0)}{Y(1)-Y(0)} + 
    (1-Y(x))\cdot-\frac{Y(0)}{Y(1)-Y(0)}
    = x.\notag\\
    &\frac{Y(x)\cdot(1-Y(0)) - (1-Y(x))\cdot Y(0)}{Y(1)-Y(0)}
    = x.\notag\\
    &\frac{Y(x)-Y(0)}{Y(1)-Y(0)}
    = x.\notag\\
    &Y(x) = x\cdot(Y(1)-Y(0)) + Y(0).\notag\\
    &Y(x)  = x\cdot Y(1) + (1-x)\cdot Y(0).
    \label{eq:unbiasednesszzz}
\end{align}
That is, we have that the probability to send $X=1$ must be linear in $x$.
We analyze the variance that results for $x=0.5$.
\begin{align}
\Var[\widehat x | x = 0.5] &= \mathbb E[\parentheses{\widehat x - 0.5}^2 | x = 0.5]
=\mathbb E[\parentheses{\widehat x}^2 | x = 0.5] - \mathbb E[\parentheses{\widehat x} | x = 0.5] + 0.25.\notag
\end{align}
Since $\widehat x$ is unbiased, $\mathbb E[\parentheses{\widehat x} | x = 0.5]=0.5$ and we get
\begin{align}
\Var[\widehat x | x = 0.5] = \mathbb E[\parentheses{\widehat x}^2 | x = 0.5] - 0.25.\label{eq:varx1}
\end{align}
Next, we analyze $\mathbb E[\parentheses{\widehat x}^2 | x = 0.5]$:
\begin{align}
\mathbb E[\parentheses{\widehat x}^2 | x = 0.5] &=
\mathbb E\brackets{\parentheses{\indicator_{X=0}\cdot Z_0 + \indicator_{X=1}\cdot Z_1}^2 | x = 0.5}\notag\\
&
=
\mathbb E\brackets{(Z_0)^2\cdot\indicator_{X=0} | x = 0.5} + 
\mathbb E\brackets{(Z_1)^2\cdot\indicator_{X=1} | x = 0.5} .
\notag
\end{align}
We have that $Z_0,Z_1$ are independent of $\indicator_{X=0},\indicator_{X=1}$ and of $x$, and thus 
\begin{align}
\mathbb E[\parentheses{\widehat x}^2 | x = 0.5] &=
\mathbb E\brackets{(Z_0)^2 | x = 0.5} \cdot (1-Y(0.5))+ 
\mathbb E\brackets{(Z_1)^2 | x = 0.5} \cdot Y(0.5)\notag\\
&=
\mathbb E\brackets{(Z_0)^2} \cdot (1-Y(0.5))+ 
\mathbb E\brackets{(Z_1)^2} \cdot Y(0.5)\notag\\
&\ge \parentheses{\mathbb E\brackets{Z_0}}^2 \cdot (1-Y(0.5)) + \parentheses{\mathbb E\brackets{Z_1}}^2 \cdot Y(0.5).\label{eq:punchline}
\end{align}
Using~\eqref{eq:x1} and~\eqref{eq:x0}, we have:
\begin{align}
\mathbb E[\parentheses{\widehat x}^2 | x = 0.5]
&\ge \parentheses{\frac{Y(0)}{Y(1)-Y(0)}}^2 \cdot (1-Y(0.5)) + \parentheses{\frac{1-Y(0)}{Y(1)-Y(0)}}^2 \cdot Y(0.5)\notag\\
&= \frac{(Y(0))^2\cdot (1-Y(0.5)) + (1-Y(0))^2\cdot Y(0.5) }{\parentheses{Y(1)-Y(0)}^2}\notag\\
&= \frac{(Y(0))^2+Y(0.5)-2Y(0)Y(0.5)}{\parentheses{Y(1)-Y(0)}^2}.\notag
\end{align}
We now use~\eqref{eq:unbiasednesszzz} for $x=0.5$ and get $Y(0.5)=0.5\cdot(Y(0)+Y(1))$, which means:
\begin{align}
\mathbb E[\parentheses{\widehat x}^2 | x = 0.5]
&\ge \frac{(Y(0))^2+Y(0.5)-2Y(0)Y(0.5)}{\parentheses{Y(1)-Y(0)}^2}\notag\\
&
= \frac{(Y(0))^2+0.5\cdot(Y(0)+Y(1))-2Y(0)\cdot0.5\cdot(Y(0)+Y(1))}{\parentheses{Y(1)-Y(0)}^2}\notag\\
&
= \frac{0.5\cdot(Y(0)+Y(1))-Y(0)\cdot Y(1)}{\parentheses{Y(1)-Y(0)}^2}.\notag
\end{align}
Combined with~\eqref{eq:varx1}, this gives: 
\begin{align}
\Var[\widehat x |& x = 0.5] = \mathbb E[\parentheses{\widehat x}^2 | x = 0.5] - 0.25\\&\ge 
\frac{0.5\cdot(Y(0)+Y(1))-Y(0)\cdot Y(1)}{\parentheses{Y(1)-Y(0)}^2} - 0.25\notag\\
&= \frac{0.5\cdot(Y(0)+Y(1))-Y(0)\cdot Y(1) - 0.25\parentheses{Y(1)-Y(0)}^2}{\parentheses{Y(1)-Y(0)}^2}\notag\\
&= \frac{0.5\cdot(Y(0)+Y(1))-Y(0)\cdot Y(1) - 0.25\parentheses{Y(1)}^2+0.5Y(0)Y(1) - 0.25\parentheses{Y(0)}^2}{\parentheses{Y(1)-Y(0)}^2}\notag\\
&= \frac{0.5\cdot(Y(0)+Y(1))-0.5Y(0)\cdot Y(1) - 0.25\parentheses{Y(1)}^2 - 0.25\parentheses{Y(0)}^2}{\parentheses{Y(1)-Y(0)}^2}\notag\\
&= \frac{0.5\cdot(Y(0)+Y(1))-\Big({0.5\cdot(Y(0)+Y(1))}\Big)^2 }{\parentheses{Y(1)-Y(0)}^2}.\label{eq:finalVar}
\end{align}

Over the domain $Y(0),Y(1)\in[0,1]$, \eqref{eq:finalVar} has two minima: $Y(0)=0, Y(1)=1$ and $Y(0)=1, Y(1)=0$. Indeed, the first corresponds to randomized rounding, while the second is using a simple transform that negates the randomized rounding's bit.

To conclude, we established that randomized rounding has a minimal worst-case variance. As a side note, by deterministically estimating $\widehat x = X$, Inequality~\eqref{eq:punchline} holds as an equality and the variance is exactly $0.25$.

\fi
\section{Optimality of Deterministic Rounding}\label{sec:DRoptimality}
We show that without shared randomness, deterministic rounding is an optimal biased solution.
Notice that, in such a case, any protocol is defined by the probability of sending $1$, denoted $Y(x)$, and the reconstruction distributions $V_0, V_1\in\Delta([0,1])$.\SV{TODO: improve readability...}

Let us examine $\mathbb E[V_0]$ and $\mathbb E[V_1]$. 
We assume, without lost of generality, that $\mathbb E[V_0]\le \mathbb E[V_1]$.

We have that:
\begin{equation*}\mathbb{E}[\widehat{x}] = Y(x)\mathbb{E} [V_1] + (1-Y(x))\mathbb{E} [V_0].\end{equation*}
That is, we have that \emph{for any} $x\in[0,1]$: 
$\mathbb E[V_0]\le \mathbb{E}[\widehat{x}]\le \mathbb E[V_1]$.
Next, we have that the cost, $\mathbb E[(\widehat x - x)^2]$, is bounded as
\begin{equation*}
\mathbb E[(\widehat x - x)^2]\ge \parentheses{\mathbb E[(\widehat x - x)]}^2.
\end{equation*}
In particular, for $x=0$, we get that 
\begin{equation*}
\mathbb E[(\widehat x - x)^2 | x=0] \ge \parentheses{\mathbb E[\widehat x| x=0]}^2\ge \parentheses{\mathbb{E} [V_0]}^2.
\end{equation*}
Similarly, for $x=1$, we have
\begin{equation*}
\mathbb E[(\widehat x - x)^2 | x=1]\ge  \parentheses{\mathbb E[(\widehat x ) | x=1] - 1}^2 \ge \parentheses{1-\mathbb{E} [V_1]}^2.
\end{equation*}
Notice that if ${\mathbb{E} [V_0]}\ge 0.25$ then $\mathbb E[(\widehat x - x)^2 | x=0]\ge 1/16$, and similarly, if ${\mathbb{E} [V_1]}\le 0.75$ then $\mathbb E[(\widehat x - x)^2 | x=1]\ge 1/16$. 
Assume to the contrary that there exists an algorithm with a with a worst-case expected squared error lower than $1/16$, then we have ${\mathbb{E} [V_0]}\le 0.25$ and ${\mathbb{E} [V_1]}\ge 0.75$.
However, we have that $x=0.5$ gives:
\begin{align*}
\mathbb E[(\widehat x& - x)^2 | x=0.5]=\mathbb E[\widehat x ^2|x=0.5]  -2x\mathbb E[\widehat x|x=0.5]  + 0.25 \\&= Y(0.5)\mathbb E[V_1^2] + (1-Y(0.5))\mathbb E[V_0^2] -(Y(0.5)\mathbb E[V_1] + (1-Y(0.5))\mathbb E[V_0])+ 0.25\\&\ge Y(0.5)(\mathbb E[V_1])^2 + (1-Y(0.5))(\mathbb E[V_0])^2 -(Y(0.5)\mathbb E[V_1] + (1-Y(0.5))\mathbb E[V_0])+ 0.25
\\&=
Y(0.5)\cdot\mathbb E[V_1]\cdot\parentheses{\mathbb E[V_1]-1} +
(1-Y(0.5))\cdot\mathbb E[V_0]\cdot\parentheses{\mathbb E[V_0]-1}
+ 0.25\\&\ge
Y(0.5)\cdot 0.75\cdot\parentheses{-0.25} +
(1-Y(0.5))\cdot0.25\cdot\parentheses{-0.75}
+ 0.25=
0.25 - 3/16 = 1/16.
\end{align*}

%
In the first inequality, we used that fact that for any random variable $V$: $\mathbb E[V^2]\ge \parentheses{\mathbb E[V]}^2$, and in the second we used ${\mathbb{E} [V_0]}\le 0.25$ and ${\mathbb{E} [V_1]}\ge 0.75$.
This concludes the proof and establishes the optimality of deterministic rounding when no shared randomness is used.

\section{Proof of the Biased $\set{0,1/2,1}$ Lower Bound}\label{app:0_.5_1_biased_lb}
We recall Lemma~\ref{lem:5_1_biased_lb}:
\goldbach*
\begin{proof}
We denote by $X_0$ the set of values in $\{0,1/2,1\}$ that are closer to $v_0$ than to $v_1$.
We assume without loss of generality that $v_0\le v_1$ and $q(0) \le q(1)$ and prove that an optimal algorithm would set $v_0=\frac{q(1/2)}{2(q(0)+q(1/2))}, v_1=1$, which incurs a cost of $
\frac{q(0)\cdot q(1/2)}{4\left(q(0)+q(1/2)\right)}
$.
Indeed, for this choice of $v_0,v_1$ we have that $X_0=\set{0,1/2}$, and we get a cost of

\vbox{
{\small
\begin{multline*}
q(0) \left(\frac{q(1/2)}{2(q(0)+q(1/2))}\right)^2 + q(1/2)\left(\frac{1}{2}-\frac{q(1/2)}{2(q(0)+q(1/2))}\right)^2
=q(0) \left(\frac{q(1/2)}{2(q(0)+q(1/2))}\right)^2 \\+ q(1/2)\left(\frac{q(0)}{2(q(0)+q(1/2))}\right)^2
=\frac{q(0)q(1/2)^2 + q(1/2)q(0)^2}{4\left(q(0)+q(1/2)\right)^2}
=\frac{q(0)\cdot q(1/2)}{4\left(q(0)+q(1/2)\right)}.
\end{multline*}
}}

We now bound the performance of the optimal algorithm. We first notice that an optimal algorithm should have $0\in X_0$ and $1\not\in X_0$.  
Next, notice that $v_0$ should be at most $1/2$ and $v_1$ should be at least $1/2$. Otherwise, one can improve the error for $x=0$ or $x=1$, respectively, without increasing the error at $1/2$.
Further, observe that an optimal algorithm must have $v_0=0$ or $v_1=1$. 
That is because if $1/2\in X_0$, we can reduce the error for $x=1$ by setting $v_1=1$. Similarly, when $1/2\not\in X_0$, choosing $v_0=0$ decreases the error for $x=0$.
Now, we claim that there exists an optimal algorithm for which $v_1=1$. Consider some solution, and set $v_0'=1-v_1$ and $v_1'=1$. This does not affect the error of $x=1/2$, and does not increase the cost as $q(0)\le q(1)$.
We are left with choosing $v_0$; let us denote by $c(v_0)= q(0) v_0^2 + q(1/2)(1/2-v_0)^2$ the resulting cost. This function has a minimum at $v_0 = \frac{q(1/2)}{2(q(0)+q(1/2))}$, which gives \mbox{a cost of $\frac{q(0)\cdot q(1/2)}{4\left(q(0)+q(1/2)\right)}$.}
\end{proof}
This cost is maximized for $q(1/2)=\sqrt 2 - 1$ and $q(0) = q(1) = \frac{2-\sqrt{2}}{2}$, giving a lower bound of $3/4-1/\sqrt 2\approx 0.04289$. In fact, one can verify that this is the best attainable lower bound for \emph{any} discrete distribution on three points. Further, in Section~\ref{sec:biased_0_1/2_1}, we show that this is \emph{an optimal} lower bound when $x$ is known to be in $\set{0,1/2,1}$, by giving an algorithm with a matching cost.

\ifarXiv
\section{\rev{An Optimal Lower Bound for the Unbiased $\set{0,1/2,1}$ Case}}\label{app:0_.5_1_unbiased_proof}
\rev{Assume that we have $h\in[0,1]$. \Sender sends $X(x,h)$ to \Receiver which estimates $\widehat x(X(x,h),h)$.
For $x',x''\in\set{0,1/2,1}$, let $p_{x',x''}=\Pr[X(x',h)=X(x'',h)]$ denote the probability (with respect to $h$) that the same bit is sent for $x',x''$.
Since we send a single bit, we have that $p_{0,1/2} + p_{1/2,1} + p_{0,1} \ge 1$. Further, any algorithm for which $p_{0,1/2} + p_{1/2,1} + p_{0,1} > 1$ can be transformed to having $p_{0,1/2} + p_{1/2,1} + p_{0,1} = 1$. For example, assume without loss of generality that for some $h\in[0,1]$, $X(0,h)=X(1/2,h)=X(1,h)$. In this case, making the following modification still yields an algorithm with identical estimates: $X(1,h)=1-X(0,h)$ and $\widehat x(X(1,h),h)=\widehat x(X(0,h),h)$. Therefore, we can assume that:
\begin{align*}
&p_{0,1/2} + p_{1/2,1} + p_{0,1} = 1.
\end{align*}
For all $x',x''\in\set{0,1/2,1}$, we define by $H_{x',x''}=\set{h\in[0,1]:X(x',h)=X(x'',h)}$ the set of shared-randomness values that would lead \Sender to send the same bit for both $x$ and $x'$.
Next, denote by $G_{x',x''}=\mathbb E[\widehat x | h\in H_{x',x''}, x\in\set{x',x''}]$ the expected estimate value, conditioned on the shared randomness being in $H_{x',x''}$.}

\rev{
We have that:
\begin{align}
    &\Var[\widehat x | x = 0]   \ge p_{0,1/2}\cdot(G_{0,1/2} - 0)^2                         \notag \\&\qquad\qquad\qquad         + p_{0,1}\cdot(G_{0,1} - 0)^2                                           + p_{1/2,1}\cdot (\mathbb E[\widehat x | h\in H_{1/2,1}, x=0]-0)^2&\label{eq:var0}
\end{align}  
\begin{align}
    &\Var[\widehat x | x = 1/2] \ge p_{0,1/2}\cdot(G_{0,1/2} - 1/2)^2        \notag\\&\qquad\qquad\qquad                         + p_{0,1}\cdot(\mathbb E[\widehat x | h\in H_{0,1}, x=1/2] - 1/2)^2     + p_{1/2,1}\cdot(G_{1/2,1} - 1/2)^2 &\label{eq:var0_5}
\end{align}  
\begin{align}
    &\Var[\widehat x | x = 1]   \ge p_{0,1/2}\cdot(\mathbb E[\widehat x | h\in H_{0,1/2}, x=1] - 1)^2 \notag\\&\qquad\qquad\qquad+ p_{0,1}\cdot(G_{0,1} - 1)^2                                           + p_{1/2,1}\cdot(G_{1/2,1} - 1)^2 \label{eq:var1}.&
\end{align}
Due to unbiasedness, we must have 
\begin{align}
&G_{0,1/2} p_{0,1/2}                                   + G_{0,1} p_{0,1}                                     + \mathbb E[\widehat x | h\in H_{1/2,1}, x=0] p_{1/2,1} = 0\label{eq:unbiasedness0}\\
&G_{0,1/2} p_{0,1/2}                                   + \mathbb E[\widehat x | h\in H_{0,1}, x=1/2] p_{0,1} +  G_{1/2,1} p_{1/2,1}                                  = 1/2\label{eq:unbiasedness0_5}\\
&\mathbb E[\widehat x | h\in H_{0,1/2}, x=1] p_{0,1/2} + G_{0,1} p_{0,1}                                     +  G_{1/2,1} p_{1/2,1}                                  = 1.\label{eq:unbiasedness1}
\end{align}
}

\rev{
We proceed with a case analysis based on the $p_{0,1}$, the probability that the sender would send the same bit for $0,1$.
}

\rev{
\smallskip
\subsection{Case $p_{0,1}=0$}
We start with the simpler case where the sender never sends the same bit for $0,1$ (and thus $p_{1/2,1}=1-p_{0,1/2}$).
Then \eqref{eq:var0}-\eqref{eq:var1} yield:
\begin{align*}
    &\Var[\widehat x | x = 0]   \ge p_{0,1/2}\cdot(G_{0,1/2} - 0)^2                                   + (1-p_{0,1/2})\cdot (\mathbb E[\widehat x | h\in H_{1/2,1}, x=0]-0)^2&\\
    &\Var[\widehat x | x = 1/2] \ge p_{0,1/2}\cdot(G_{0,1/2} - 1/2)^2                                 + (1-p_{0,1/2})\cdot(G_{1/2,1} - 1/2)^2 &\\
    &\Var[\widehat x | x = 1]   \ge p_{0,1/2}\cdot(\mathbb E[\widehat x | h\in H_{0,1/2}, x=1] - 1)^2 + (1-p_{0,1/2})\cdot(G_{1/2,1} - 1)^2 .&
\end{align*}  
Similarly, using \eqref{eq:unbiasedness0}-\eqref{eq:unbiasedness1} we get:
\begin{align}
&G_{0,1/2} p_{0,1/2}                                   + \mathbb E[\widehat x | h\in H_{1/2,1}, x=0] (1-p_{0,1/2}) = 0\notag\\
&G_{0,1/2} p_{0,1/2}                                   +  G_{1/2,1} (1-p_{0,1/2})                                  = 1/2\label{eq:G_1/2_1}\\
&\mathbb E[\widehat x | h\in H_{0,1/2}, x=1] p_{0,1/2} +  G_{1/2,1} (1-p_{0,1/2})                                  = 1\notag.
\end{align}
This gives
\begin{align*}
&\mathbb E[\widehat x | h\in H_{1/2,1}, x=0]  = \frac{0-G_{0,1/2} p_{0,1/2}}{1-p_{0,1/2}}\\
&\mathbb E[\widehat x | h\in H_{0,1/2}, x=1] = \frac{1 - G_{1/2,1} (1-p_{0,1/2})}{p_{0,1/2}},
\end{align*}
and thus:
\begin{align}
    &\Var[\widehat x | x = 0]   \ge p_{0,1/2}\cdot(G_{0,1/2} - 0)^2                                   + (1-p_{0,1/2})\cdot \parentheses{\frac{0-G_{0,1/2} p_{0,1/2}}{1-p_{0,1/2}}-0}^2&\label{eq:varBound1}\\
    &\Var[\widehat x | x = 1/2] \ge p_{0,1/2}\cdot(G_{0,1/2} - 1/2)^2                                 + (1-p_{0,1/2})\cdot(G_{1/2,1} - 1/2)^2 &\notag\\
    &\Var[\widehat x | x = 1]   \ge p_{0,1/2}\cdot\parentheses{\frac{1 - G_{1/2,1} (1-p_{0,1/2})}{p_{0,1/2}} - 1}^2   + (1-p_{0,1/2})\cdot(G_{1/2,1} - 1)^2 .&\notag
\end{align}
Equation \eqref{eq:G_1/2_1} gives $G_{1/2,1}=\frac{1/2-G_{0,1/2} p_{0,1/2} }{1-p_{0,1/2}}$ and therefore:
\begin{align}
    &\Var[\widehat x | x = 1/2] \ge p_{0,1/2}\cdot(G_{0,1/2} - 1/2)^2                                 + (1-p_{0,1/2})\cdot\parentheses{\frac{1/2-G_{0,1/2} p_{0,1/2} }{1-p_{0,1/2}} - 1/2}^2\label{eq:varBound2}\\
    &\Var[\widehat x | x = 1]   \ge p_{0,1/2}\cdot\parentheses{\frac{1 - (1/2-G_{0,1/2} p_{0,1/2})}{p_{0,1/2}} - 1}^2   \notag\\&\qquad\qquad\qquad\qquad\qquad\qquad+ (1-p_{0,1/2})\cdot\parentheses{\frac{1/2-G_{0,1/2} p_{0,1/2}\label{eq:varBound3} }{1-p_{0,1/2}} - 1}^2.
\end{align}
}

\rev{
Minimizing $\max\set{\mbox{\eqref{eq:varBound1},\eqref{eq:varBound2},\eqref{eq:varBound3}}}$, we get a bound of $1/16$, obtained for $p_{0,1/2}=1/2$ and $G_{0,1/2}=1/4$.
}

\rev{
\subsection{Case $p_{0,1/2}, p_{1/2,1}, p_{0,1}>0$}
We proceed with the case where $x=0$ and $x=1$ may result in the same bit being sent, for some $h$ values.
For convenience, we restate equations~\eqref{eq:var0}-\eqref{eq:var1}:
\begin{align*}
&\Var[\widehat x | x = 0]   \ge p_{0,1/2}\cdot(G_{0,1/2} - 0)^2                         \notag \\&\qquad\qquad\qquad         + p_{0,1}\cdot(G_{0,1} - 0)^2                                           + p_{1/2,1}\cdot (\mathbb E[\widehat x | h\in H_{1/2,1}, x=0]-0)^2\\
&\Var[\widehat x | x = 1/2] \ge p_{0,1/2}\cdot(G_{0,1/2} - 1/2)^2        \notag\\&\qquad\qquad\qquad                         + p_{0,1}\cdot(\mathbb E[\widehat x | h\in H_{0,1}, x=1/2] - 1/2)^2     + p_{1/2,1}\cdot(G_{1/2,1} - 1/2)^2 &\\
&\Var[\widehat x | x = 1]   \ge p_{0,1/2}\cdot(\mathbb E[\widehat x | h\in H_{0,1/2}, x=1] - 1)^2 \notag\\&\qquad\qquad\qquad+ p_{0,1}\cdot(G_{0,1} - 1)^2                                           + p_{1/2,1}\cdot(G_{1/2,1} - 1)^2\\
\end{align*}
and~\eqref{eq:unbiasedness0}-\eqref{eq:unbiasedness1}: 
\begin{align*}
&G_{0,1/2} p_{0,1/2}                                   + G_{0,1} p_{0,1}                                     + \mathbb E[\widehat x | h\in H_{1/2,1}, x=0] p_{1/2,1} = 0\\
&G_{0,1/2} p_{0,1/2}                                   + \mathbb E[\widehat x | h\in H_{0,1}, x=1/2] p_{0,1} +  G_{1/2,1} p_{1/2,1}                                  = 1/2\\
&\mathbb E[\widehat x | h\in H_{0,1/2}, x=1] p_{0,1/2} + G_{0,1} p_{0,1}                                     +  G_{1/2,1} p_{1/2,1}                                  = 1.
\end{align*}
This gives:
\begin{align*}
\mathbb E[\widehat x | h\in H_{1/2,1}, x=0] = \frac{0-G_{0,1/2} p_{0,1/2} - G_{0,1} p_{0,1}}{p_{1/2,1}}.
\end{align*}
\begin{align*}
\mathbb E[\widehat x | h\in H_{0,1}, x=1/2] = \frac{1/2-G_{0,1/2} p_{0,1/2} - G_{1/2,1} p_{1/2,1}}{p_{0,1}}
\end{align*}
\begin{align*}
\mathbb E[\widehat x | h\in H_{0,1/2}, x=1] = \frac{1-G_{0,1} p_{0,1} - G_{1/2,1} p_{1/2,1}}{p_{0,1/2}}.
\end{align*}
Plugging these into~\eqref{eq:var0}-\eqref{eq:var1} we have:
\begin{align*}
    &\Var[\widehat x | x = 0]   \ge p_{0,1/2}\cdot(G_{0,1/2} - 0)^2                         \notag \\&\qquad\qquad\qquad         + p_{0,1}\cdot(G_{0,1} - 0)^2                                           + p_{1/2,1}\cdot \parentheses{\frac{0-G_{0,1/2} p_{0,1/2} - G_{0,1} p_{0,1}}{p_{1/2,1}}-0}^2&
\end{align*}  
\begin{align*}
    &\Var[\widehat x | x = 1/2] \ge p_{0,1/2}\cdot(G_{0,1/2} - 1/2)^2        \notag\\&\qquad\qquad\qquad                         + p_{0,1}\cdot\parentheses{\frac{1/2-G_{0,1/2} p_{0,1/2} - G_{1/2,1} p_{1/2,1}}{p_{0,1}} - 1/2}^2     + p_{1/2,1}\cdot(G_{1/2,1} - 1/2)^2 &
\end{align*}  
\begin{align*}
    &\Var[\widehat x | x = 1]   \ge p_{0,1/2}\cdot\parentheses{\frac{1-G_{0,1} p_{0,1} - G_{1/2,1} p_{1/2,1}}{p_{0,1/2}} - 1}^2 \notag\\&\qquad\qquad\qquad+ p_{0,1}\cdot(G_{0,1} - 1)^2                                           + p_{1/2,1}\cdot(G_{1/2,1} - 1)^2 &
\end{align*}
}

\rev{
We use the following lemma, whose proof appear in~\ref{app:lemProof} below.
\begin{lemma}\label{lem:symmetry}
There exists an optimal unbiased solution for which $p_{0,1/2}=p_{1/2,1}$ and $G_{1/2,1}=1-G_{0,1/2}$.
\end{lemma}
The lemma implies that $p_{0,1}=1-p_{0,1/2}-p_{1/2,1}=1-2p_{0,1/2}$.
Therefore, we get 
\begin{align}
    &\Var[\widehat x | x = 0]   \ge p_{0,1/2}\cdot(G_{0,1/2} - 0)^2                         \notag \\&\qquad\qquad\qquad         + (1-2p_{0,1/2})\cdot(G_{0,1} - 0)^2                               \notag\\&\qquad\qquad\qquad  \qquad            + p_{0,1/2}\cdot \parentheses{\frac{0-G_{0,1/2} p_{0,1/2} - G_{0,1} (1-2p_{0,1/2})}{p_{0,1/2}}-0}^2&\label{eq:generalVar0Bound}
\end{align} 
\begin{align}
    &\Var[\widehat x | x = 1/2] \ge p_{0,1/2}\cdot(G_{0,1/2} - 1/2)^2        \notag\\&\qquad\qquad\qquad                         + (1-2p_{0,1/2})\cdot\parentheses{\frac{1/2-G_{0,1/2} p_{0,1/2} - (1-G_{0,1/2}) p_{0,1/2}}{(1-2p_{0,1/2})} - 1/2}^2 \notag\\&\qquad\qquad\qquad  \qquad    + p_{0,1/2}\cdot(1/2-G_{0,1/2})^2 &\label{eq:generalVar0_5Bound}
\end{align} 
\begin{align}
    &\Var[\widehat x | x = 1]   \ge p_{0,1/2}\cdot\parentheses{\frac{1-G_{0,1} (1-2p_{0,1/2}) - (1-G_{0,1/2}) p_{0,1/2}}{p_{0,1/2}} - 1}^2 \notag\\&\qquad\qquad\qquad+ (1-2p_{0,1/2})\cdot(G_{0,1} - 1)^2                                           + p_{0,1/2}\cdot(-G_{0,1/2})^2. &
    \label{eq:generalVar1Bound}
\end{align}
}

\rev{
The infimum of $\max\set{\mbox{\eqref{eq:generalVar0Bound},\eqref{eq:generalVar0_5Bound},\eqref{eq:generalVar1Bound}}}$, over all possible $p_{0,1/2}, G_{0,1/2}, G_{0,1}$ values, we get a lower bound of $1/16$, which is obtained for $p_{0,1/2}\to 1/2, G_{0,1/2}=1/4$ (i.e., $p_{0,1}\to 0$).
\MM{What does $\to$ represent here?  Is the issue when these values equal 0?  Should we be clarifying those cases more clearly?}
}

\rev{
\subsection{Proof of Lemma~\ref{lem:symmetry}}\label{app:lemProof}
}
\MM{I think $h_{\mbox{alg}}$ is just a confusing name.  Call it $b$ or something.  Notation is confusing enough;  avoid using same letteer.}
\rev{
Assume an optimal unbiased algorithm, defined using $X^*(x,h),\widehat x^*(X,h),$ the probabilities $p^*_{0,1/2}, p^*_{0,1}, p^*_{1/2,1}$, and $G^*_{0,1/2}, G^*_{0,1}, G^*_{1/2,1}$.
We consider the following algorithm; let $h_{\mbox{alg}}$ be a shared random bit that is independent of $h$.
If $h_{\mbox{alg}}=0$ \Sender sends $X=X^*(x,h)$ and otherwise (if $h_{\mbox{alg}}=1$) $X=X^*(1-x,h)$.
In turn, \Receiver estimates $\widehat x = \widehat x^*(X,h)$ if $h_{\mbox{alg}}=0$ or $\widehat x = 1-\widehat x^*(X,h)$ otherwise.
}

\rev{
Notice that our algorithm is unbiased:
\begin{align*}
&\mathbb E[\widehat x] = 1/2(\mathbb E[\widehat x | h_{\mbox{alg}}=0] + \mathbb E[\widehat x | h_{\mbox{alg}}=1])
=1/2(x + \mathbb E[1-\widehat x^*(X^*(1-x,h),h) | h_{\mbox{alg}}=1])\\
&\qquad=1/2(x + 1-(1-x)) = x.
\end{align*}
}

\rev{
Next, observe that the algorithm, and thus the variance, remains unchanged for $x=1/2$.
For $x=0$:
\begin{align*}
&\Var [Z_0] = 1/2(\Var[\widehat x | h_{\mbox{alg}}=0, x=0] + \Var[\widehat x | h_{\mbox{alg}}=1, x=0])\\
&=1/2(\Var [\widehat x^* | x=0] + \Var [\widehat x^* | x=1]).
\end{align*}
Here, we used the fact that for any random variable $\Var[\widehat x^*(X^*(1,h),h)]=1-\Var[\widehat x^*(X^*(1,h),h)]$.
Therefore, we get that $\Var[\widehat x | x = 0] \le \max\set{\Var [\widehat x^* | x=0], \Var [\widehat x^* | x=1]}$ and we have not increased the cost.
A symmetric analysis applies to $x=1$.
}

\rev{
Finally, we get that
\begin{align*}
p_{0,1/2} &= \Pr[X(0,h)=X(1/2,h)]  \\&= 1/2(\Pr[X(0,h)=X(1/2,h) | h_{\mbox{alg}}=0] +  \Pr[X(1,h)=X(1/2,h) | h_{\mbox{alg}}=1]) \\
&= 1/2(p_{0,1/2}^*+p_{1/2,1}^*).
\end{align*}
By symmetry, we also get $p_{1/2,1}=1/2(p_{0,1/2}^*+p_{1/2,1}^*)$ and thus $p_{0,1/2}=p_{1/2,1}$.
Similarly,
\begin{align*}
&G_{0,1/2} = 1/2 \parentheses{G_{0,1/2}^* + (1-G_{1/2,1}^*)}\\
&G_{1/2,1} = 1/2 \parentheses{G_{1/2,1}^* + (1-G_{0,1/2}^*)}.
\end{align*}
Notice that $G_{0,1/2} + G_{1/2,1}=1$, which concludes the proof.
}
\qedsymbol

\section{Analysis of the Unbiased Algorithm}\label{app:unbiased_alg}
In this appendix, we prove Theorem~\ref{thm:unbiased_alg} which we restate here:
\unbiasedalg*
First, we state a technical lemma, whose proof appears below in Appendix~\ref{app:periodicity}, that shows the periodicity \mbox{of the variance in our algorithm.}
\vspace*{-0mm}
\begin{lemma}\label{lem:periodicity}
For any $y\in[0,1-2^{-\ell}]$, $\Var[\widehat x | x=y] = \Var[\widehat x | x=y+2^{-\ell}]$.
\end{lemma}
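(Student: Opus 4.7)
The plan is to prove a stronger statement: under $x = y + 2^{-\ell}$, the full distribution of $\widehat x$ is just a translation by $2^{-\ell}$ of the distribution of $\widehat x$ under $x = y$. Since variance is translation-invariant, the lemma follows immediately. The mechanism is that shifting $x$ by $2^{-\ell}$ is equivalent to a cyclic shift of the shared random bucket index $h$ (mod $2^{\ell}$), composed with a deterministic additive shift of the estimator.

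Concretely, I would set up a coupling: given the shared randomness $h$ used in the $x = y + 2^{-\ell}$ experiment, set $h' \triangleq (h - 1) \bmod 2^{\ell}$ as the shared randomness for the $x = y$ experiment, and use the same private $r$. Since $h$ is uniform on $[2^{\ell}]$, so is $h'$, so the marginal distributions of the shared randomness agree. For the ``generic'' case $h \geq 1$, so that $h' = h - 1$, the threshold condition $y + 2^{-\ell} \geq (r + h) 2^{-\ell}$ rearranges exactly to $y \geq (r + h') 2^{-\ell}$, so the transmitted bit $X$ is literally the same random variable. Substituting $h = h' + 1$ into the estimator formula shows $\widehat x|_{x = y + 2^{-\ell},\, h} = \widehat x|_{x = y,\, h'} + 2^{-\ell}$.

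The only delicate step is the boundary $h = 0$, paired under the cyclic shift with $h' = 2^{\ell} - 1$. Here the two threshold conditions are \emph{not} equivalent: for $x = y + 2^{-\ell}$ and $h = 0$, the condition $r \leq (y + 2^{-\ell}) 2^{\ell}$ is automatic since $y \geq 0$, so $X = 1$ almost surely; for $x = y$ and $h' = 2^{\ell} - 1$, the condition $r \leq y \cdot 2^{\ell} - (2^{\ell} - 1)$ fails almost surely precisely because we assumed $y \leq 1 - 2^{-\ell}$, so $X = 0$ almost surely. A short arithmetic check then shows the resulting deterministic values of $\widehat x$ still differ by exactly $2^{-\ell}$, so the coupling works at the boundary as well. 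Averaging over the uniform $h$ gives the claimed distributional identity and hence equal variances.

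The main obstacle is precisely this boundary case: the ``wrap-around'' of $h$ breaks the clean algebraic equivalence of the threshold conditions, and one has to verify that the hypothesis $y \in [0, 1 - 2^{-\ell}]$ is exactly what is needed to force both sides to be deterministic with values differing by $2^{-\ell}$. Once that is in hand, the rest is bookkeeping.
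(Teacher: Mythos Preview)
Your proposal is correct and takes a genuinely different route from the paper's proof. The paper (Appendix~\ref{app:periodicity}) proceeds by direct computation: it writes $\widehat x$ explicitly as a function of $h$, $r$, and $\zeta=\lfloor y\cdot 2^{\ell}\rfloor$ in each of the two cases $x=y$ and $x=z=y+2^{-\ell}$, forms the difference $\Var[\widehat x\mid x=z]-\Var[\widehat x\mid x=y]$ as a difference of second moments minus $z^2-y^2$, and then pushes through roughly a page of algebraic manipulation of indicator-weighted expectations to show the difference collapses to zero.

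Your argument instead proves the stronger fact that the conditional law of $\widehat x$ at $x=y+2^{-\ell}$ is the $2^{-\ell}$-translate of the law at $x=y$, via the coupling $h'\equiv h-1\pmod{2^{\ell}}$ with the same $r$. This is more conceptual: it explains \emph{why} the variance is periodic (the whole distribution is), and it makes transparent that the hypothesis $y\in[0,1-2^{-\ell}]$ is exactly what is needed for the wrap-around pair $h=0\leftrightarrow h'=2^{\ell}-1$ to produce deterministic values of $X$ (namely $1$ and $0$) whose resulting $\widehat x$ values still differ by $2^{-\ell}$. The paper's computation implicitly handles the same boundary through the indicator decomposition $\indicator_{h<\zeta}+\indicator_{h=\zeta,\,r<\cdots}$, but the mechanism is buried in the algebra. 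Both arguments are valid; yours is shorter and yields a stronger conclusion.
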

\vspace*{-0mm}
As a result of this periodicity, we can continue the analysis, without loss of generality, under the assumption that $x\in[0,2^{-\ell}]$.
\mbox{We first calculate several useful quantities:}
\vspace*{-0mm}
\begin{align*}
&\mathbb E\brackets{X} = \mathbb E\brackets{X^2} = x\\
&\mathbb E[h] = (2^\ell-1)/2\\
&\mathbb E[h^2] =  (2^{\ell}-1)(2^{\ell+1}-1)/6\\
&\mathbb E[X\cdot h | x\le 2^{-\ell}] = 0\qquad\mbox{(as either $X=0$ or $h=0$, since $x\le 2^{-\ell}$).}
\end{align*}
We now proceed with calculating the variance.
\vspace*{-0mm}
\begin{multline*}
    \Var[\widehat x| x\le 2^{-\ell}] = \mathbb E\brackets{\parentheses{X + (h - 0.5(2^{\ell}-1))\cdot 2^{-\ell}}^2} - x^2\\
    = \mathbb E\brackets{X^2} +2^{-2\ell}\cdot\parentheses{ \mathbb E\brackets{h^2} - \mathbb E\brackets{h}\cdot (2^{\ell}-1) + 0.25(2^{\ell}-1)^2}\\
    \qquad+ 2^{-\ell+1}\cdot \mathbb E\brackets{X\cdot h}
    - 2^{-\ell}\cdot(2^{\ell}-1)\cdot \mathbb E\brackets{X}
    - x^2\\
    = x +2^{-2\ell}\cdot\parentheses{ \mathbb E\brackets{h^2} - \mathbb E\brackets{h}\cdot (2^{\ell}-1) + 0.25(2^{\ell}-1)^2}
    - 2^{-\ell}\cdot(2^{\ell}-1)\cdot x
    - x^2\\
    = x +2^{-2\ell}\cdot\parentheses{\parentheses{(2^{\ell}-1)(2^{\ell+1}-1)/6} -  (2^{\ell}-1)^2/2 + 0.25(2^{\ell}-1)^2}
    - 2^{-\ell}\cdot(2^{\ell}-1)\cdot x - x^2\\
    = 
    1/12 \cdot (1-4^{ -\ell}) + 2^{-\ell} x - x^2.\qquad\qquad\qquad\qquad\qquad\qquad\qquad\qquad\qquad\qquad\qquad\qquad
\end{multline*}
Finally, according to Lemma~\ref{lem:periodicity}, we get that:
\begin{equation}
\Var[\widehat x] =
1/12 \cdot (1-4^{ -\ell}) + 2^{-\ell} (x\mod 2^{-\ell}) - (x\mod 2^{-\ell})^2.\label{eq:unbiasedVar}
\end{equation}
This gives a worst-case bound, achieved for $x\in\set{2^{-(\ell+1)}+i\cdot 2^{-\ell}\mid i\in[2^{\ell-1}]}$, of 
\begin{equation*}
\Var[\widehat x]\le 1/12 \cdot (1-4^{ -\ell}) + 1/4\cdot 4^{-\ell} = 1/6\cdot(1/2+4^{-\ell}).
\end{equation*}

\subsection{Proof of Lemma~\ref{lem:periodicity}}\label{app:periodicity}
Let $y\in [0,1-2^{-\ell}]$ and denote $z=y+2^{-\ell}$, $m=y\mod 2^{-\ell}$, and $\zeta=\floor{y\cdot 2^{\ell}}$.
Notice that if $h<\zeta$, then \Sender will send $X=1$ for both $y$ and $z$.
Similarly, if $h\ge \zeta+1$, \Sender will send $X=0$ for both $y$ and $z$.
Notice that, for any $y$ and $\ell$:
\begin{equation*}
2^{\ell}y - \floor{y\cdot 2^{\ell}} = 2^{\ell} (y\mod 2^{-\ell}).
\end{equation*}
and thus
\begin{equation*}
y\mod 2^{-\ell} = y - \floor{y\cdot 2^{\ell}}\cdot 2^{-\ell}.
\end{equation*}
Therefore, we can write: 
\begin{align*}
(\widehat x | x=y) 
&= (h - 0.5(2^{\ell}-1))\cdot 2^{-\ell} + \indicator_{h < \zeta} +\indicator_{(h = \zeta) \wedge (r < 2^{\ell}\cdot (y\mod 2^{-\ell}))}\\
(\widehat x | x=z) 
&= (h - 0.5(2^{\ell}-1))\cdot 2^{-\ell} + \indicator_{h < \zeta + 1} +\indicator_{(h = \zeta+1) \wedge (r < 2^{\ell}\cdot (z\mod 2^{-\ell}))}\\
\end{align*}
Denote $(h - 0.5(2^{\ell}-1))\cdot 2^{-\ell}$ by $\psi$.
Thus, since $y\mod 2^{-\ell} = z\mod 2^{-\ell}$:
\begin{multline*}
\Var[\widehat x | x=z] - \Var[\widehat x | x=y]
=\\
\mathbb E\brackets{\parentheses{\psi + \indicator_{h < \zeta + 1} +\indicator_{(h = \zeta + 1) \wedge (r < 2^{\ell}\cdot (y\mod 2^{-\ell}))}}^2} - \\
\mathbb E\brackets{\parentheses{(\psi + \indicator_{h < \zeta} +\indicator_{(h = \zeta) \wedge (r < 2^{\ell}\cdot (y\mod 2^{-\ell}))}}^2}
- z^2 + y^2 = \\
\mathbb E\Bigg[\indicator_{h < \zeta + 1} + \indicator_{(h = \zeta + 1) \wedge (r < 2^{\ell}\cdot (y\mod 2^{-\ell}))}
+ 2\psi\parentheses{\indicator_{h < \zeta + 1} +\indicator_{(h = \zeta + 1) \wedge (r < 2^{\ell}\cdot (y\mod 2^{-\ell}))}}\\
-\parentheses{\indicator_{h < \zeta} + \indicator_{(h = \zeta ) \wedge (r < 2^{\ell}\cdot (y\mod 2^{-\ell}))}
+ 2\psi\parentheses{\indicator_{h < \zeta} +\indicator_{(h = \zeta ) \wedge (r < 2^{\ell}\cdot (y\mod 2^{-\ell}))}}}\Bigg]\\
-z^2 + y^2 =\\
\mathbb E\Bigg[\indicator_{h = \zeta}+ \indicator_{(h = \zeta + 1) \wedge (r < 2^{\ell}\cdot (y\mod 2^{-\ell}))} 
-\parentheses{\indicator_{(h = \zeta ) \wedge (r < 2^{\ell}\cdot (y\mod 2^{-\ell}))}}\\
+ 2\psi\parentheses{\indicator_{h = \zeta} +\indicator_{(h = \zeta + 1) \wedge (r < 2^{\ell}\cdot (y\mod 2^{-\ell}))} - \indicator_{(h = \zeta ) \wedge (r < 2^{\ell}\cdot (y\mod 2^{-\ell}))}}\Bigg] - z^2 + y^2 = \\
\mathbb E\Bigg[\parentheses{1 - \indicator_{r < 2^{\ell}\cdot (y\mod 2^{-\ell})} + 2\psi\parentheses{1-\indicator_{r < 2^{\ell}\cdot (y\mod 2^{-\ell})}}}\cdot \indicator_{h = \zeta} +
\\ \parentheses{\indicator_{r < 2^{\ell}\cdot (y\mod 2^{-\ell})} + 2\psi \indicator_{r < 2^{\ell}\cdot (y\mod 2^{-\ell})}}\cdot \indicator_{h = \zeta+1}\Bigg] - z^2 + y^2 =\\
\mathbb E\brackets{1 - \indicator_{r < 2^{\ell}\cdot (y\mod 2^{-\ell})} + 2\psi\parentheses{1-\indicator_{r < 2^{\ell}\cdot (y\mod 2^{-\ell})}} | h=\zeta}\cdot \Pr[h = \zeta] +\\ \mathbb E\brackets{\indicator_{r < 2^{\ell}\cdot (y\mod 2^{-\ell})} + 2\psi\cdot \indicator_{r < 2^{\ell}\cdot (y\mod 2^{-\ell})}| h=\zeta+1} \cdot \Pr[h = \zeta+1]
-z^2 + y^2 = \\
2^{-\ell}\cdot\Bigg(\mathbb E\brackets{2\psi\parentheses{1-\indicator_{r < 2^{\ell}\cdot (y\mod 2^{-\ell})}} | h=\zeta} +\\ \mathbb E\brackets{ 2\psi\cdot \indicator_{r < 2^{\ell}\cdot (y\mod 2^{-\ell})}| h=\zeta+1} \Bigg)
 + 2^{-\ell}-z^2 + y^2=_{\mbox{(as $h$ is independent of $r$)}}\\
2^{-\ell}\cdot\Bigg(\mathbb E\brackets{2\parentheses{(\zeta - 0.5(2^{\ell}-1))\cdot 2^{-\ell}}\parentheses{1-\indicator_{r < 2^{\ell}\cdot (y\mod 2^{-\ell})}}} +\\ \mathbb E\brackets{ 2\parentheses{(\zeta+1 - 0.5(2^{\ell}-1))\cdot 2^{-\ell}}\cdot \indicator_{r < 2^{\ell}\cdot (y\mod 2^{-\ell})}} \Bigg)
 + 2^{-\ell}-z^2 + y^2=\\
2^{-\ell}\cdot\Bigg(2\parentheses{(\zeta - 0.5(2^{\ell}-1))\cdot 2^{-\ell}} + 2^{1-\ell}\cdot\mathbb E\brackets{  \indicator_{r < 2^{\ell}\cdot (y\mod 2^{-\ell})}} \Bigg)
 + 2^{-\ell}-z^2 + y^2=\\
2^{-\ell}\cdot\Bigg(2\parentheses{(\zeta - 0.5(2^{\ell}-1))\cdot 2^{-\ell}} + 2^{1-\ell}\cdot(2^{\ell}\cdot (y\mod 2^{-\ell})) \Bigg)
 + 2^{-\ell}-z^2 + y^2=\\
2^{-\ell}\cdot\Bigg(\parentheses{2\zeta - (2^{\ell}-1)}\cdot 2^{-\ell} + 2\cdot(y\mod 2^{-\ell}) \Bigg)
 + 2^{-\ell}-z^2 + y^2=\\
2^{-\ell}\cdot\Bigg(\parentheses{2\zeta +1}\cdot 2^{-\ell} + 2\cdot(y\mod 2^{-\ell}) \Bigg)
-z^2 + y^2=\\
2^{-\ell}\cdot\Bigg(\parentheses{2\floor{y\cdot2^{\ell}} +1}\cdot 2^{-\ell} + 2\cdot\parentheses{y - \floor{y\cdot 2^{\ell}}\cdot 2^{-\ell}} \Bigg)
-z^2 + y^2=\\
2^{-\ell}\cdot\Bigg( 2^{-\ell} + 2y \Bigg) - (z-y)(z+y) = 0.\qquad\qedsymbol\qedhere
\end{multline*}


\fi

\section{Generalization to $k$ Bits}\label{sec:k-bits}
\subsection{General Quantized Algorithm}
We use a hash function $h$ such that $h\in\set{0,1}^\ell$ is uniformly distributed. 
Let $A\sim U[0,1]$ be independent of $h$.
\begin{equation*}C = \floor{\parentheses{2^k-1}\cdot x}\end{equation*}
\begin{equation*}p = \parentheses{2^k-1}\cdot x - \floor{\parentheses{2^k-1}\cdot x}\end{equation*}
\begin{equation*}R = 2^k - 1\end{equation*}
We then set
\begin{equation*}
X \triangleq \begin{cases}
C+1 & \mbox{if $p \ge (A+h)2^{-\ell}$}\\
C & \mbox{otherwise}
\end{cases}
\end{equation*}
%
 We send $X$ to \Receiver which estimates 
\begin{equation*}
\widehat x = \frac{X + (h - 0.5(2^{\ell}-1))\cdot 2^{-\ell}}{R}.
\end{equation*}

 To show that our protocol is unbiased, notice that:
$
\mathbb E[X] =  R\cdot x
$
and that $\mathbb E[h]=0.5(2^{\ell}-1)$.
\subsection{Lower Bounds}
Similarly to the 1-bit case, we consider the discrete distribution over \begin{equation*}\set{i\cdot \parentheses{\frac{1}{3\cdot 2^{k-1}-1}}\mid i\in\set{0,1,\ldots,3\cdot 2^{k-1}-1}}.\end{equation*}
We set $a_{1/2}=\frac{\sqrt 2 - 1}{2^{k-1}}$ and  $a_0=a_1=\frac{1-a_{1/2}}{2^k}$ and
\begin{equation*}\forall i:q\parentheses{i\cdot \parentheses{\frac{1}{3\cdot 2^{k-1}-1}}}=a_{(i\mod 3)/2}.\end{equation*}
When each consecutive set of three points has the same probability, one can derive an optimal algorithm with precisely two values between each such triplet.
The optimal choice of locations of the values in each triplet is similar to our single-bit analysis of the previous subsection, i.e., one should have a values at 
\begin{equation*}
\set{\frac{\sqrt 2 - 1+i}{2^{k-1} }\ \Big|\ i\in\set{0,1,\ldots,2^{k-1}-1}}\bigcup \set{\frac{i}{2^{k-1} }\Big|\ i\in\set{1,\ldots,2^{k-1}}}.
\end{equation*}
We turn into calculating the cost. 
Notice that every triplet has a width of $\frac{2}{3\cdot2^{k-1}-1}$.
 Therefore, the cost now reduces, compared to the 1-bit analysis, by a factor of $\parentheses{\frac{2}{3\cdot2^{k-1}-1}}^2$.
That is, we get a lower bound of $\frac{3-2\sqrt2}{(3\cdot2^{k-1}-1)^2} = \frac{3-2\sqrt2}{2.25(2^{k}-2/3)^2}$.  
We note that, for large $k$ and $\ell$ values, our variance is within 10\% of the lower bound, as
\begin{equation*}
\lim_{k,\ell\to \infty}\frac{\frac{1}{12(2^k-1)^2}}{\frac{3-2\sqrt2}{2.25(2^{k}-2/3)^2}} = \frac{9+6\sqrt 2}{16}\approx 1.093.
\end{equation*}
\ifarXiv
\section{Limiting Algorithm Uniformness Proof}\label{app:uniformness}
Recall that the algorithm uses
$h\sim U[0,1]$ where \Sender sends 
\begin{equation*}
X \triangleq \begin{cases}
1 & \mbox{if $x\ge h$}\\
0 & \mbox{otherwise}
\end{cases}\qquad
\end{equation*}
and \Receiver estimates $\widehat x = X + h - 0.5$.
\begin{lemma}\label{lem:uniform}
For a fixed value of $x$, it holds that $\widehat x\sim U\brackets{x-\frac{1}{2}, x+\frac{1}{2}}$.
\end{lemma}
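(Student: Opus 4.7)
The plan is to directly compute the distribution of $\widehat x$ by splitting into the two cases determined by $X$, and observing that each case contributes a uniform density of exactly $1$ on disjoint subintervals whose union is $[x - 1/2, x + 1/2]$. Since $h \sim U[0,1]$, we have $X = 1$ precisely when $h \le x$ (an event of probability $x$) and $X = 0$ otherwise. Conditional on $h \le x$, $h$ is uniform on $[0,x]$, so $\widehat x = 1 + h - 1/2$ is uniform on $[1/2, \, x + 1/2]$; conditional on $h > x$, $h$ is uniform on $[x,1]$, so $\widehat x = h - 1/2$ is uniform on $[x - 1/2, \, 1/2]$.

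Next, I would compute the density of $\widehat x$ on $[x - 1/2, x + 1/2]$ by combining the two cases. On $[x - 1/2, 1/2]$ the density comes only from the $h > x$ branch and equals $(1-x) \cdot \frac{1}{1-x} = 1$; on $[1/2, x + 1/2]$ the density comes only from the $h \le x$ branch and equals $x \cdot \frac{1}{x} = 1$. Thus $\widehat x$ has density identically $1$ on $[x - 1/2, x + 1/2]$, which is exactly the density of $U[x - 1/2, x + 1/2]$.

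A cleaner equivalent route would be to argue via the map $\varphi : [0,1] \to [x - 1/2, x + 1/2]$ defined by $\varphi(h) = h - 1/2$ for $h > x$ and $\varphi(h) = h + 1/2$ for $h \le x$. This map is a measure-preserving bijection (it just ``cuts'' $[0,1]$ at $x$ and shifts the two pieces by $\pm 1/2$), so it pushes $U[0,1]$ forward to $U[x - 1/2, x + 1/2]$; the identity $\widehat x = \varphi(h)$ then finishes the proof. I would likely present the proof in the first (direct density) form, as it requires no machinery and the case split is immediate. There is no real obstacle here — the only thing to be careful about is handling the endpoint $h = x$ (a measure-zero event, so irrelevant) and verifying that the two subintervals $[x - 1/2, 1/2]$ and $[1/2, x + 1/2]$ intersect only on a null set, so that no density is double-counted.
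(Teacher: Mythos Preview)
Your proposal is correct and takes essentially the same approach as the paper: both directly verify the distribution of $\widehat x = X + h - 1/2$. The paper packages this by defining $Z = \mathbb{1}_{h\le x} + h$ and computing its CDF to show $Z \sim U[x,1+x]$, then shifting by $-1/2$; your density-by-cases argument and your ``cleaner route'' via the cut-and-shift bijection $\varphi$ are exactly this same computation presented in density form rather than CDF form.
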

\begin{proof}
Let $Z=\indicator_{h \le x} + h$.
We have that $Z\sim U[x,1+x]$, i.e.,
\begin{equation*}
f_Z(z)=\begin{cases}
1&\mbox{ if $z\in[x,1+x]$}\\
0&\mbox{ Otherwise}
\end{cases}\quad.
\end{equation*}

 This is because 

\begin{equation*}
\Pr[Z \le z] = \begin{cases}
1&\mbox{ if $z\ge1+x$}\\
z-p&\mbox{ if $z\in(x,1+x)$}\\
0&\mbox{ if $z \le x$}
\end{cases}\quad.
\end{equation*}

 Therefore, 

\begin{equation*}
{X + h - 1/2} = {Z - 1/2}\sim U\brackets{{x - 1/2}, {(1+x) - 1/2}}.
\end{equation*}
This concludes the proof.
\end{proof}
\begin{corollary}
Our estimator is unbiased, i.e., $\mathbb E[\widehat x] = x$.
\end{corollary} 
\begin{corollary}
Our variance is constant for all $x\in[0,1]$ and satisfies $\Var [\widehat x] = \frac{1}{12}$.
\end{corollary}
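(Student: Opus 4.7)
The plan is to reduce the claim to showing that the intermediate quantity $Z := X + h$ is uniform on $[x, 1+x]$, since $\widehat x = Z - 1/2$ would then be uniform on $[x - 1/2, x + 1/2]$ by a deterministic shift. The core observation is that the map $h \mapsto X + h$ is a measure-preserving bijection from $[0,1]$ onto $[x, 1+x]$: on the event $\{h > x\}$ (which has probability $1-x$) we have $X = 0$, so the map is the identity sending $(x,1]$ to $(x,1]$; on the event $\{h \le x\}$ (probability $x$) we have $X = 1$, so the map shifts $[0,x]$ rigidly to $[1, 1+x]$. Both pieces preserve length, and their images tile $[x, 1+x]$ (up to a measure-zero overlap at the endpoint $1$), so the pushforward of Lebesgue measure on $[0,1]$ under this map is Lebesgue measure on $[x, 1+x]$.

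Concretely, I would first compute the CDF of $Z$ case by case. For $z \in [x, 1]$, the event $\{Z \le z\}$ is exactly $\{x < h \le z\}$ (since $h \le x$ gives $Z \ge 1 > z$ only when $z < 1$, but one needs to handle $z = 1$), giving $\Pr[Z \le z] = z - x$. For $z \in [1, 1+x]$, the event $\{Z \le z\}$ is $\{h > x\} \cup \{h \le z - 1\}$, a disjoint union whose probability is $(1 - x) + (z - 1) = z - x$. In both ranges $\Pr[Z \le z] = z - x$, matching the CDF of $U[x, 1+x]$.

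Then $\widehat x = Z - 1/2$ is a deterministic affine shift, so $\widehat x \sim U[x - 1/2, x + 1/2]$ as claimed. As an immediate corollary, $\mathbb E[\widehat x] = x$ (unbiasedness) and $\Var[\widehat x] = 1/12$ independent of $x$, which recovers the known variance of subtractive dithering.

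The only step that requires any care is the CDF case analysis at the boundary $z = 1$, where the two pieces of the image join; this is a measure-zero issue and not a real obstacle. No pieces of the argument depend on special structure of $x \in [0,1]$ beyond $h$ being uniform on an interval of length $1$ containing the threshold, so the proof is self-contained.
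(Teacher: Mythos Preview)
Your proof is correct and follows essentially the same route as the paper: you define $Z = X + h$, verify via a CDF case analysis that $Z \sim U[x,\,1+x]$, and then shift by $-1/2$ to conclude $\widehat x \sim U[x-1/2,\,x+1/2]$, from which $\Var[\widehat x]=1/12$ is immediate. Your write-up is in fact slightly more detailed than the paper's (which just states the CDF without splitting into the two ranges), and your measure-preserving-bijection description is a nice added piece of intuition.
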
 

\fi
\section{Reducing Algorithms to Monotone $T,Z_0$ Functions}\label{app:biased_algorithm_formulation}
We now show how an algorithm can be represented as described in Section~\ref{sec:0_1_case}.
Fixing the shared randomness value $h$, \Receiver estimates $\widehat x$ solely based on the sent bit $X$;
denote these values by $A_X(h)$. Without loss of generality, assume that $\forall h:A_1(h)\ge A_0(h)$.\footnote{If for some $h$,  $A_0(h) > A_1(h)$, there exists an equivalent algorithm that replaces the role of $X=0$ and $X=1$ for this specific $h$.}
This means that, in an optimal algorithm, \Sender should send $X=1$ if $x \ge \frac{A_0(h)+A_1(h)}{2}$, as otherwise the error would be suboptimal for any $x$ not satisfying the condition.
In particular, this means that we can express \Sender's algorithm using a threshold function $T$.

Next, we claim that the threshold function can be considered monotone, without increasing the cost.
To that end, we first consider a finite shared randomness $h\in[2^\ell]$. 
In such a case, if there exists some $h_1 > h_2\in[2^\ell]$ such that $T(h_1)< T(h_2)$, we can modify the algorithm as follows:
For all $h\notin\set{h_1,h_2}$, no modification is made. If $h=h_1$, then the modified algorithm works as if $h=h_2$, and vice versa. Following this process, we can sort $T$ until it becomes monotone.
A similar argument can be made for the continuous ($h\in[0,1]$) case (possibly with an additional $\epsilon$ discretization cost).

We proceed with showing that there exists an optimal algorithm in which $Z_1(h)=1-Z_0(1-h)$.
This is achieved using a symmetry observation. 
Specifically, if an algorithm does not satisfy the above, consider its ``dual algorithm'': instead of sending $x$ using $T(h)$, we send $x'=1-x$ using $T'(h)=1-T(1-h)$; similarly, \Receiver estimates $\widehat x' = 1-\widehat x$.
Then, if both \Sender and \Receiver use the shared randomness to implicitly agree on whether to run the original or dual algorithms, each with probability half, the cost can only decrease.
\ifarXiv
The proof follows similarly to that of Lemma~\ref{lem:symmetry} (Appendix~\ref{app:lemProof}).
\else
Additional details are given in the full version~\cite{fullVersion}.
\fi

\ifarXiv
\section{Truncated Dithering}\label{sec:TruncatedDithering}
We now analyze the cost attainable by truncating the subtractive dithering algorithm to some interval $[z,1-z]$.
Let $h\sim U[0,1]$ be a shared uniform \mbox{random variable. Consider sending}

\begin{equation*}
X=\begin{cases}
1 & \mbox{if $x \ge h$}\\
0 & \mbox{otherwise}
\end{cases}\end{equation*}
similarly to our algorithm for $\ell\to\infty$ (see Section~\ref{sec:semiGeneral}).
However, unlike our algorithm, for a parameter $z\in[0,1/2]$, \Receiver estimates $x$~as 

\begin{equation*}
\widehat x = \min\set{\max\set{X+h-1/2,z}, 1-z}.
\end{equation*}

 That is, we truncate the estimation to the interval $[z,1-z]$, for some parameter $z\in[0,0.5]$ that we determine later.
\begin{lemma}
For the $z\in[0,1/2]$ that satisfies $1/24 + z^2/2 + (2 z^3)/3 = 0$ ($z\approx 0.17349$), the cost of the above algorithm is $2/3\cdot z^3 + 1/2\cdot z^2 + 1/24\approx 0.0602$.
\end{lemma}
\begin{proof}
Assume, without loss of generality, that $x\in[0,1/2]$.
According to Lemma~\ref{lem:uniform}, \mbox{we have that}

\begin{equation*}
X+h-1/2\sim U[x-1/2,x+1/2].
\end{equation*}
Therefore, \Receiver will estimate $x$ as follows: With probability $1/2+z-x$, $\widehat x = z$; with probability $\max\set{0,x+1/2-(1-z)}$, $\widehat x = 1-z$; and otherwise $\widehat x\sim U[z,\min\set{x+1/2,1-z}]$.
We proceed with a case analysis.
First, let us consider the $\parentheses{x<1/2-z}$ case. 
This yields

\begin{equation*}\widehat x=\begin{cases}
\mbox{uniform on $[z, x+1/2]$} & \mbox{with probability $1/2+z-x$}\\
z & \mbox{otherwise}
\end{cases}.\end{equation*}

 Therefore, the cost would be

\begin{multline*}
(1/2+z-x)\cdot (z-x)^2 + (1/2+x-z)\cdot\int_{z}^{x+1/2}\frac{(t-x)^2 }{1/2+x-z}dt\\
=(1/2+z-x)\cdot (z-x)^2 + 1/24 + x^3/3 - x^2 z + x z^2 - z^3/3\\
= 1/24 + x^2/2 - (2 x^3)/3 - x z + 2 x^2 z + z^2/2 - 2 x z^2 + (2 z^3)/3.
\end{multline*}

 We have that the derivative with respect to $x$ is:

\begin{equation*}
- 2 x^2 + x(4z +1) - z (1 + 2 z).
\end{equation*}

 Therefore, the potential extrema are $x\in\set{0,1/2-z}$ and when
the derivative vanishes, which gives $x=z$ (the other extreme point is not in $[0,0.5]$).
We then get

\begin{equation*}
cost =
\begin{cases}
2/3\cdot z^3 + 1/2\cdot z^2 + 1/24 & \mbox{if $x=0$}\\
1/24  & \mbox{if $x=z$}\\
1/12 - 2 z^2 + 16/3\cdot z^3 & \mbox{if $x=1/2-z$}
\end{cases}
\end{equation*}

 Next, we consider the $x\ge 1/2-z$ case. In such a case, we get that

\begin{equation*}\widehat x=\begin{cases}
\mbox{uniform on $[z, 1-z]$} & \mbox{with probability $ 1-2z$}\\
z & \mbox{with probability $1/2+z-x$}\\
1-z & \mbox{otherwise (w.p. $z+x-1/2$)}\\
\end{cases}.\end{equation*}
Then, our cost is:
\begin{multline*}
(1/2+z-x)\cdot (z-x)^2 + (z+x-1/2)\cdot (1-z)^2 +  (1-2z)\cdot\int_z^{1-z}\frac{(t-x)^2}{1-2z}dt\\
= -1/6 + (3 x^2)/2 - x^3 + z - x z + x^2 z - z^2 - 2 x z^2 + (4 z^3)/3.
\end{multline*}
We then get

\begin{equation*}
cost = -3 x^2 - z (1 + 2 z) + x (3 + 2 z).
\end{equation*}
Therefore, the potential extrema are $x\in \set{1/2-z, 1/2}$ and where $\mathbb E[cost]=0$, which gives $x = \frac{3 + 2 z - \sqrt{9 - 20 z^2}}{6}$.
This yields
\begin{equation*}
cost =
\begin{cases}
1/12 - 2 z^2 + (16 z^3)/3 & \mbox{if $x=1/2-z$}\\
4/3 z^3 - 2z^2 + 3/4z + 1/12 & \mbox{if $x=1/2$}\\
1/12 + z - 2 z^2 + 20/27\cdot z^3 
+ \sqrt{9 - 20 z^2} \cdot (-1/12 + 5/27\cdot z^2)& \mbox{if $x=\frac{3 + 2 z - \sqrt{9 - 20 z^2}}{6}$}\\
\end{cases}
\end{equation*}
 It follows that 
\begin{equation*}
1/12 + z - 2 z^2 + 20/27\cdot z^3 
+ \sqrt{9 - 20 z^2} \cdot (-1/12 + 5/27\cdot z^2) \le 4/3 z^3 - 2z^2 + 3/4z + 1/12
\end{equation*}
for all $z\in[0,0.5]$,
and therefore we focus on $x\in\set{0,1/2-z,1/2}$.
Notice that $\min_{z\in[0,1/2]} 1/12 - 2 z^2 + (16 z^3)/3 = 1/24$, which is achieved for $z=1/24$.

Finally, by choosing the $z$ value which minimizes 

\begin{equation*}\max\set{2z(0.5-z)^2 + \int_z^{1-z}(0.5-t)^2dt,
(0.5+z)\cdot z^2 + \int_{z}^{0.5}t^2dt},\end{equation*}

  which is obtained for the $z$ value that satisfies $1/24 + z^2/2 + (2 z^3)/3 = 0$ ($z\approx 0.17349$), we get an expected worst-case squared error of $\approx 0.0602$.

As a side note, one can obtain a slightly stronger bound by further truncating the estimations to ${\mathfrak e}\pm 1/2$, where $\mathfrak e=X+h-1/2$. For example, if $\mathfrak e = -0.4$ then the algorithm should not estimate $\widehat x \approx 0.173$ as $x$ is guaranteed to be at most $0.1$.
Instead, the algorithm would estimate:

\begin{equation*}
\widehat x = \max(\min(\mathfrak e, \max(1-z,\mathfrak e-1/2)),\min(z,\mathfrak e+1/2)).
\end{equation*}

 In such a case, we can choose $z\approx 0.182$ and get a cost of $\approx0.05824$. For simplicity, we omit the technical details.
\end{proof}

\section{Convex-combination Biased Adaptation for Subtractive Dithering}\label{app:BiasedSD}
Here, we analyze the algorithm in which \Sender sends (for a shared $h\sim U[0,1]$)

\begin{equation*}
X= \begin{cases}
1 & \mbox{if $x \ge h$}\\
0 & \mbox{otherwise}
\end{cases}\quad,
\end{equation*}

 and \Receiver estimates, for some $\alpha\in[0,1]$,

\begin{equation*}
\widehat x = \alpha \cdot h + (1-\alpha)\cdot X.
\end{equation*}

 Notice that
\begin{align*}
&\mathbb E[\widehat x] = \alpha/2 + (1-\alpha)\cdot x.\\
&\mathbb E[h^2] = 1/3.\\
&\mathbb E[h\cdot X] = \int_{0}^xtdt = x^2/2.\\
&\mathbb E[\widehat x^2] = \alpha^2\cdot\mathbb E[h^2] + (1-\alpha)^2\cdot x + 2\alpha(1-\alpha)\mathbb E[h\cdot X] = \alpha^2/3 + (1-\alpha)^2\cdot x + \alpha(1-\alpha)\cdot x^2.
\end{align*}

 We compute the expected squared error:
\begin{align}
    \mathbb E[(\widehat x - x)^2] &= \mathbb E[\widehat x^2] -2x\mathbb E[\widehat x] + x^2 \notag\\
    &=  \alpha^2/3 + (1-\alpha)^2\cdot x + \alpha(1-\alpha)\cdot x^2 -2x(\alpha/2 + (1-\alpha)\cdot x) + x^2\notag\\
    &= x - x^2 - 3 x \alpha + 3 x^2 \alpha + \alpha^2/3 + x \alpha^2 - x^2 \alpha^2\label{eq:biasedCost}.
\end{align}

 We then get
\begin{align*}
\frac{\partial \mathbb E[(\widehat x - x)^2]}{\partial x} = (-1 + 2 x) (-1 + 3 \alpha - \alpha^2).
\end{align*}
Therefore, the possible extrema are $\set{0,1/2,1}$. Minimizing~\eqref{eq:biasedCost}, we get that the optimal choice is $\alpha=2-\phi\approx 0.382$, which gives $\mathbb E[(\widehat x - x)^2]\le 5/3-\phi\approx 0.04863$.

\section{Analysis of the Algorithm of Section~\ref{sec:detroundingatedges}}\label{app:detroundingatedges}
We first derive several quantities that will be useful for calculating the cost.
\begin{itemize}

\item $\mathbb E[X] =  \mathbb E[X^2] = \begin{cases}
0 & \mbox{if $x<(1-\alpha)/2$}\\
(i+1)/2^\ell & \mbox{if $x\in \mathcal I_i$, $i\in[2^{\ell}-1]$
}\\
1 & \mbox{if $x\ge(1+\alpha)/2$}
\end{cases}
$.\\

\item $\mathbb E[h/(2^\ell-1)] = 1/2$.

\item $\mathbb E[\widehat x] = \alpha\cdot \mathbb E[h/(2^\ell-1)] + (1-\alpha)\cdot\mathbb E[X]=
\begin{cases}
\alpha/2 & \mbox{if $x<(1-\alpha)/2$}\\
\alpha/2 + (1-\alpha)\cdot(i+1)/2^\ell & \mbox{if $x\in \mathcal I_i$, $i\in[2^{\ell}-1]$
}\\
1-\alpha/2 & \mbox{if $x\ge(1+\alpha)/2$}
\end{cases}$.

\item $\mathbb E[h/(2^\ell-1)\cdot X] = \begin{cases}
0 & \mbox{if $x<(1-\alpha)/2$}\\
i\cdot(i+1) / (2^{\ell+1}\cdot(2^{\ell}-1)) & \mbox{if $x\in \mathcal I_i$, $i\in[2^{\ell}-1]$
}\\
1/2 & \mbox{if $x\ge(1+\alpha)/2$}
\end{cases}$.\\

\item $\mathbb E[(h/(2^\ell-1))^2] = (2^{\ell+1}-1)/(6\cdot(2^{\ell}-1))$.

\end{itemize}
Next, we calculate the second moment of the estimate:
\begin{align*}
& \mathbb E[\widehat x^2] = \alpha^2\cdot \mathbb E[(h/(2^\ell-1))^2] + 2\alpha(1-\alpha)\cdot \mathbb E[h/(2^\ell-1)\cdot X] + (1-\alpha)^2\cdot \mathbb E[X^2] =\\
&\quad\begin{cases}
\Psi & \mbox{if $x<(1-\alpha)/2$}\\
\Psi + \Gamma_i & \mbox{if $x\in \mathcal I_i$, $i\in[2^{\ell}-1]$}\\
\Psi+\alpha(1-\alpha)+(1-\alpha)^2 & \mbox{if $x\ge(1+\alpha)/2$}
\end{cases}\quad,
\end{align*}
where
\begin{equation*}\Psi = \alpha^2\cdot(2^{\ell+1}-1)/(6\cdot(2^{\ell}-1))\end{equation*}
and
 \begin{equation*}\Gamma_i = \alpha(1-\alpha)\cdot i\cdot(i+1) / (2^{\ell}\cdot(2^{\ell}-1)) + (1-\alpha)^2\cdot (i+1)/2^\ell.\end{equation*}
Finally, we are ready to express the expected squared error:
\begin{align*}
&\mathbb E[(\widehat x-x)^2] = \mathbb E[\widehat x^2] -2x\mathbb E[\widehat x] + x^2=\\
&\quad\begin{cases}
\Psi -x\alpha + x^2 & \mbox{if $x<(1-\alpha)/2$}\\
\Psi + \Gamma_i - x\alpha -x\cdot(1-\alpha)\cdot(i+1)/2^{\ell-1} + x^2 & \mbox{if $x\in \mathcal I_i$, $i\in[2^{\ell}-1]$
}\\
\Psi -(1-x)\alpha + (1-x)^2 & \mbox{if $x\ge(1+\alpha)/2$}
\end{cases}\quad.
\end{align*}
Solving $cost = \min_\alpha\max_x \mathbb E[(\widehat x - x)^2]$ yields 
\begin{equation*}cost = \frac{(2^\ell-1)\cdot(2^{\ell+1}-1)\cdot\parentheses{2 - 3\cdot2^\ell + 2^{\ell/2}\cdot \sqrt{5 \cdot2^\ell-8 }}^2}{24\cdot (4^{\ell}-2^{\ell}+1)^2},\end{equation*} 
which is obtained for \begin{equation*}x\in\set{0,(1-\alpha)/2 + (2^{\ell-1}-1)\cdot\frac{\alpha}{2^{\ell}-1},
(1-\alpha)/2 + 2^{\ell-1}\cdot\frac{\alpha}{2^{\ell}-1},1}\end{equation*}
\mbox{and}
\begin{equation*}\alpha=\frac{1-5\cdot2^{\ell-1}+3/2\cdot 4^\ell - \sqrt{2^{\ell-2}\cdot(2^\ell-1)^2\cdot\parentheses{5\cdot 2^\ell - 8}}}{4^{\ell}-2^{\ell}+1}.\end{equation*}
\section{Limiting Algorithm of Section~\ref{sec:detroundingatedges}}\label{app:LimitBiasedAlgorithm}
In the limiting algorithm, \Sender sends:
\begin{align*}
X = 
\begin{cases}
0 & \mbox{if $x<(1-\alpha)/2$}\\
\indicator_{h\le \frac{x-(1-\alpha)/2}{\alpha}} & \mbox{if $x\in\brackets{(1-\alpha)/2, (1+\alpha)/2}$}\\
1 & \mbox{if $x\ge(1+\alpha)/2$}
\end{cases}.
\end{align*}
In turn, \Receiver estimates:
\begin{equation*}\widehat x = \alpha\cdot h + (1-\alpha)\cdot X\end{equation*}
Let us calculate several useful quantities:
\begin{align*}
&\mathbb E[X] = \mathbb E[X^2] = \begin{cases}
0 &\mbox{if $x<(1-\alpha)/2$}\\
\frac{x-(1-\alpha)/2}{\alpha} & \mbox{if $x\in\brackets{(1-\alpha)/2, (1+\alpha)/2}$}\\
1 &\mbox{if $x\ge(1+\alpha)/2$}
\end{cases}.\\
&\mathbb E[\widehat x] = \alpha/2 + (1-\alpha)\cdot \mathbb E[X] = \alpha/2 + (1-\alpha)\cdot\begin{cases}
0 &\mbox{if $x<(1-\alpha)/2$}\\
\frac{x-(1-\alpha)/2}{\alpha} & \mbox{if $x\in\brackets{(1-\alpha)/2, (1+\alpha)/2}$}\\
1 &\mbox{if $x\ge(1+\alpha)/2$}
\end{cases}.\\
&\mathbb E[h^2] = 1/3.\\
&\mathbb E[h\cdot X] = 
\begin{cases}
0 &\mbox{if $x<(1-\alpha)/2$}\\
\int_{0}^{\frac{x-(1-\alpha)/2}{\alpha}}tdt & \mbox{if $x\in\brackets{(1-\alpha)/2, (1+\alpha)/2}$}\\
1/2 &\mbox{if $x\ge(1+\alpha)/2$}
\end{cases}\\&\qquad\qquad=
\begin{cases}
0 &\mbox{if $x<(1-\alpha)/2$}\\
\frac{1}{2}\parentheses{\frac{x-(1-\alpha)/2}{\alpha}}^2 & \mbox{if $x\in\brackets{(1-\alpha)/2, (1+\alpha)/2}$}\\
1/2 &\mbox{if $x\ge(1+\alpha)/2$}
\end{cases}\\
&\mathbb E[\widehat x^2] = \alpha^2\cdot\mathbb E[h^2] + (1-\alpha)^2\cdot\mathbb E[X^2] + 2\alpha(1-\alpha)\mathbb E[h\cdot X]\\
&\quad=\alpha^2/3 + (1-\alpha)^2\cdot\begin{cases}
0 &\mbox{if $x<(1-\alpha)/2$}\\
\frac{x-(1-\alpha)/2}{\alpha} & \mbox{if $x\in\brackets{(1-\alpha)/2, (1+\alpha)/2}$}\\
1 &\mbox{if $x\ge(1+\alpha)/2$}
\end{cases} \\&\qquad+ 2\alpha(1-\alpha)\begin{cases}
0 &\mbox{if $x<(1-\alpha)/2$}\\
\frac{1}{2}\parentheses{\frac{x-(1-\alpha)/2}{\alpha}}^2 & \mbox{if $x\in\brackets{(1-\alpha)/2, (1+\alpha)/2}$}\\
1/2 &\mbox{if $x\ge(1+\alpha)/2$}
\end{cases}\\
&=\begin{cases}
\alpha^2/3 &\mbox{if $x<(1-\alpha)/2$}\\
\alpha^2/3 + (1-\alpha)^2\cdot\frac{x-(1-\alpha)/2}{\alpha} + \alpha(1-\alpha)\cdot\parentheses{\frac{x-(1-\alpha)/2}{\alpha}}^2 & \mbox{if $x\in\brackets{(1-\alpha)/2, (1+\alpha)/2}$}\\
\alpha^2/3 + 1-\alpha &\mbox{if $x\ge(1+\alpha)/2$}
\end{cases}\\
&=\begin{cases}
\alpha^2/3 &\mbox{if $x<(1-\alpha)/2$}\\
3/4 - x^2 - 1/(4 \alpha) + x^2/\alpha - (3 \alpha)/4 + (7 \alpha^2)/12 & \mbox{if $x\in\brackets{(1-\alpha)/2, (1+\alpha)/2}$}\\
\alpha^2/3 + 1-\alpha &\mbox{if $x\ge(1+\alpha)/2$}
\end{cases}
\end{align*}
 We proceed with analyzing the expected squared error:
{\small
\begin{align*}
    \mathbb E[(&\widehat x - x)^2] = \mathbb E[\widehat x^2] -2x\mathbb E[\widehat x] + x^2 
    \\&= x^2 + 
    \begin{cases}
\alpha^2/3 &\mbox{if $x<(1-\alpha)/2$}\\
3/4 - x^2 - 1/(4 \alpha) + x^2/\alpha - (3 \alpha)/4 + (7 \alpha^2)/12 & \mbox{if $x\in\brackets{(1-\alpha)/2, (1+\alpha)/2}$}\\
\alpha^2/3 + 1-\alpha &\mbox{if $x\ge(1+\alpha)/2$}
\end{cases} \\&\quad- 2x\cdot\parentheses{ \alpha/2 + (1-\alpha)\cdot\begin{cases}
0 &\mbox{if $x<(1-\alpha)/2$}\\
\frac{x-(1-\alpha)/2}{\alpha} & \mbox{if $x\in\brackets{(1-\alpha)/2, (1+\alpha)/2}$}\\
1 &\mbox{if $x\ge(1+\alpha)/2$}
\end{cases}}
\\&\hspace*{-6mm}= \begin{cases}
x^2 + \alpha^2/3 -x\cdot\alpha &\mbox{if $x<(1-\alpha)/2$}\\
\frac{3\alpha-1}{4 \alpha} + x^2/\alpha + (7 \alpha^2-9\alpha)/12 -x\cdot \alpha  - 2x (1-\alpha)\cdot \frac{x-(1-\alpha)/2}{\alpha}& \mbox{if $x\in\brackets{(1-\alpha)/2, (1+\alpha)/2}$}\\
x^2 + \alpha^2/3 + 1-\alpha -2x(1-\alpha/2) &\mbox{if $x\ge(1+\alpha)/2$}
\end{cases}
\end{align*}    
}
 Choosing $\alpha=2-\phi$ as before, which minimizes the worst-case expected squared error, we get that
\begin{equation*}
\mathbb E[(\widehat x - x)^2] = \begin{cases}
x^2 + (\phi-2)\cdot x + (5/3-\phi) &\mbox{if $x<(\phi-1)/2$}\\
\frac{2\phi-3}{\phi - 2}\cdot x^2 + (\phi-1)\cdot x + \frac{23-15\phi}{12}& \mbox{if $x\in\brackets{(1-\alpha)/2, (1+\alpha)/2}$}\\
x^2 - \phi \cdot x + 2/3 &\mbox{if $x>(3-\phi)/2$}
\end{cases}
\end{equation*}

Our analysis indicates that the cost becomes $5/3-\phi\approx 0.04863$, which is reached for $x\in\set{0,1/2,1}$.

\section{Improving the Cost Further using a Hybrid Algorithm}\label{app:hybrid}

As mentioned, for the case $x \in [0,1]$, our algorithms above do not improve when given access to more than $\ell=3$ bits. This suggests that an optimal algorithm, unlike the solution from Section \ref{sec:AnImprovedAlgorithm}, may need to use a non-uniform or \emph{probabilistic} partitioning of \mbox{the $[0,1]$ interval using the $\alpha$ parameter.}
\begin{figure}[]
    \hspace*{-7mm}\centering
    {\includegraphics[width=1.05\textwidth]{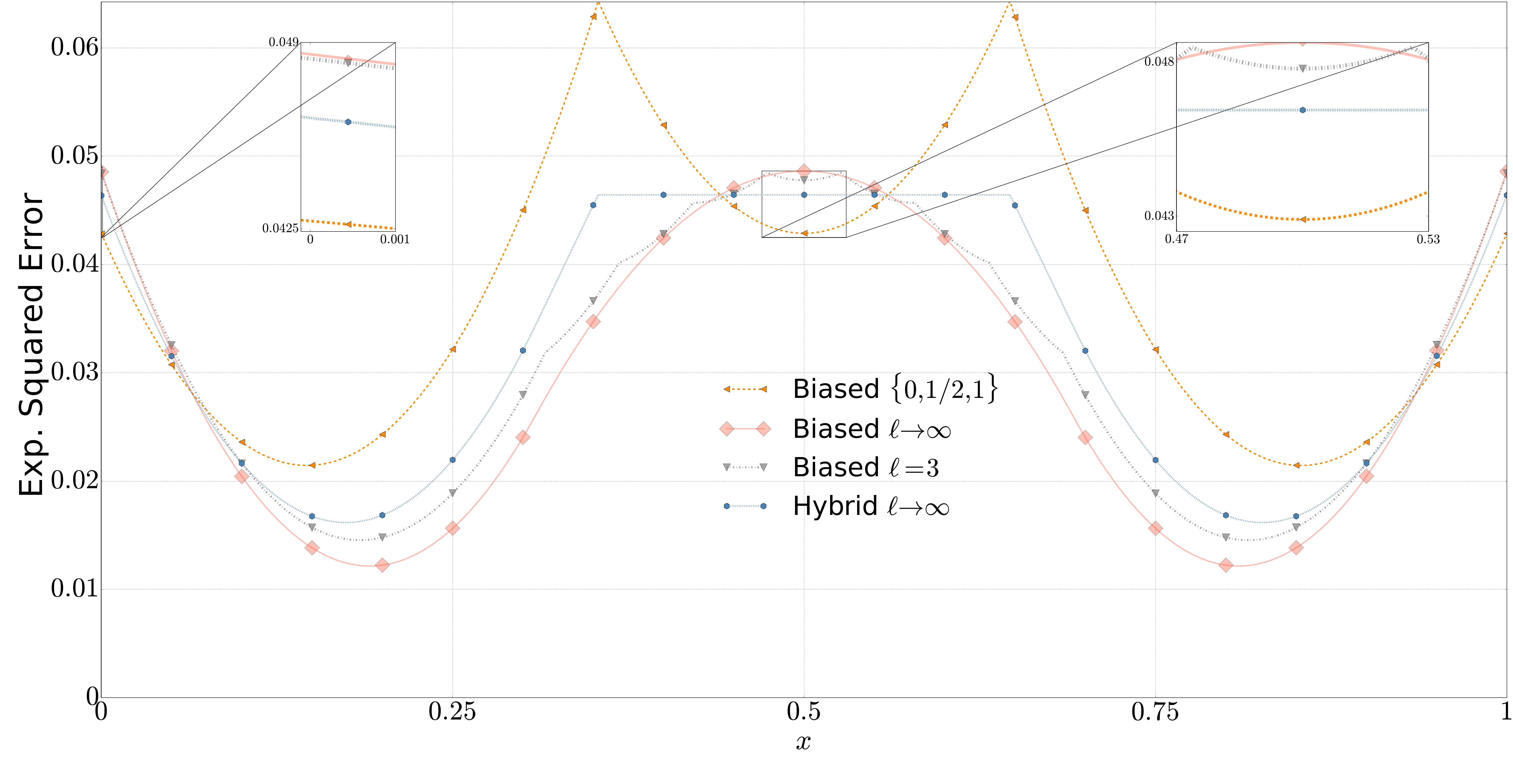}}
    \vspace{-1.0cm}
    \caption{An illustration of the expected squared errors that motivate our choice of creating a hybrid of the optimal $\set{0,1/2,1}$ and the biased $\ell\to\infty$ algorithms.
    }
    \label{fig:Hybridization}
    \vspace{-0.1cm}
\end{figure}

We now explore how to reduce the cost using randomized thresholding, achieved through probabilistic multiplexing of the above algorithms. That is, \Sender and \Receiver will randomly select the executed protocol using the shared randomness, \mbox{thus achieving implicit coordination.}

To simplify the notation, we use $A_{[0,1]}$ to denote our general (i.e., $x \in [0,1]$) with $\ell\to\infty$ (also given explicitly in Appendix~\ref{app:LimitBiasedAlgorithm}) algorithm, and $A_{\set{0,1/2,1}}$ to denote our algorithm for when $x$ is guaranteed to be in $\set{0,1/2,1}$.
Our observation is that $A_{[0,1]}$, behaves differently than $A_{\set{0,1/2,1}}$. 
Specifically, for the first, the expected squared error is \emph{maximized} at $\set{0,1/2,1}$, while for the latter, the expected squared error \mbox{is lower at these points.} This suggests that by randomly choosing which of these algorithms to execute \mbox{one can lower the cost.}

In particular, we propose to multiplex between $A_{[0,1]}$ and $A_{\set{0,1/2,1}}$ as follows. 
With probability $p$, to be determined later, both \Sender and \Receiver use $A_{[0,1]}$ and otherwise $A_{\set{0,1/2,1}}$, using the shared randomness to implicitly decide on the protocol.
This means that the expected squared error becomes:
{
\begin{equation*}
\mathbb E[(\widehat x - x)^2] = \mathbb E[(\widehat x - x)^2 | \mbox{running $A_{[0,1]}$}]\cdot p +
\mathbb E[(\widehat x - x)^2 | \mbox{running $A_{\set{0,1/2,1}}$}]\cdot (1-p). 
\end{equation*}\vspace*{-0mm}
}
We get that the cost, optimized for $p=\phi-1$, is $\frac{6 \sqrt{10}+ 11 \sqrt{5} - 18 \sqrt{2}-17 }{24}\approx 0.04644$ (obtained for $x\in\parentheses{\set{0,1}\cup\brackets{\frac{1}{2\sqrt2},1-\frac{1}{2\sqrt2}}}$). 
Notice that the cost of the algorithm is within $3.01\%$ from the lower bound in Section~\ref{sec:LB}.


Figure~\ref{fig:Hybridization} illustrates how the non-hybrid (Section~\ref{sec:AnImprovedAlgorithm}) $\ell=3$ algorithm has a lower cost than that of $\ell\to\infty$. However, as its worst-case expected squared error is not at $x=1/2$, it does not multiplex as well with $A_{\set{0,1/2,1}}$.
Specifically, a hybrid algorithm that uses $\ell=3$ bits instead of the limit ($\ell\to\infty$) algorithm results in a \emph{higher} cost \mbox{of $102/361-1/(3\sqrt2)\approx 0.04685$ (which is obtained for $p=2/3$).}


The above hybrid algorithms use unbounded shared randomness. 
In cases where we wish to use a small number of shared bits, we can approximate the better algorithm (that uses $\ell\to\infty$ for $A_{[0,1]}$); below we \mbox{give a couple of examples.}

\smallskip
{\textbf{Example I:}} \quad
Consider using $\ell=4$ bits. In this case, we use $p=3/4$ and use the $2$-bit algorithm from Section~\ref{sec:AnImprovedAlgorithm} as $A_{[0,1]}$. 
The cost is then
$\frac{1049 - 169\sqrt 2 - 430\sqrt 3}{1352}\approx 0.0482$ (obtained for $x\in\set{\frac{4 + \sqrt3}{13}, \frac{9 - \sqrt3}{13}}$), which improves over \mbox{the $3$-bit algorithm.}

\smallskip
{\textbf{Example II:}} \quad
Consider using one random byte ($\ell=8$). In that case, we use $p=11/16$ together with the $4$-bit algorithm. The cost then becomes $\frac{1830635 - 1232945 \sqrt2}{1858592}\approx 0.04680$ (obtained for $x\in\set{\frac{109 + 6 \sqrt 2}{241},\frac{132 - 6 \sqrt 2}{241}}$).
This further improves over the cost of Example I, over the best hybrid solution with $\ell=3$ (that uses unbounded randomness to represent the $p=2/3$ value), and is within 1\% of the \mbox{unbounded shared randomness algorithm.}
\fi
\rev{
\section{Analysis of the Linear Sigmoid (Section~\ref{sec:linear_sigmoid})}\label{app:linear_sigmoid}
First, we have (see Section~\ref{sec:0_1_case}) that the estimate function for $X=1$ is:
\begin{align*}
Z_1(h) &= 1-Z_0(1-h) = \begin{cases}
    2\alpha & \mbox{if $h<h_0$}\\
    2\alpha+(1-2\alpha)\cdot \frac{h-h_0}{1-2h_0} & \mbox{if $h\in[h_0,1-h_0]$}\\
    1 & \mbox{otherwise}
    \end{cases}.
\end{align*}
For $x\in[\alpha,1-\alpha]$, denote by $T^{-1}(x)=\frac{(1-2h_0)(x - \alpha)}{1-2\alpha} +h_0$ the value such that $T(T^{-1}(x))=x$. 
We proceed with computing the expectation:
\begin{align*}
\mathbb E[\widehat x | x < \alpha] (\implies X=0) = h_0\cdot (1-2\alpha) + \int_{h_0}^{1-h_0} \parentheses{(1-2\alpha)\cdot \frac{h-h_0}{1-2h_0}}dh = 1/2 - \alpha
\end{align*}
\begin{align*}
\mathbb E[\widehat x | x > 1-\alpha] (\implies X=1) &= h_0\cdot (1+2\alpha) + \int_{h_0}^{1-h_0} \parentheses{2\alpha+(1-2\alpha)\cdot \frac{h-h_0}{1-2h_0}}dh \\
&= 1/2 + \alpha
\end{align*}
\begin{align*}
\mathbb E[\widehat x | x \in [\alpha,1-\alpha]] &= h_0\cdot (2\alpha) + h_0(1-2\alpha) \\
&\quad+ \int_{h_0}^{T^{-1}(x)}\parentheses{2\alpha+(1-2\alpha)\cdot \frac{h-h_0}{1-2h_0}}dh \\
&\quad+ \int_{T^{-1}(x)}^{1-h_0}\parentheses{(1-2\alpha)\cdot \frac{h-h_0}{1-2h_0}}dh\\
&= h_0 + 2\alpha(T^{-1}(x)-h_0) + \int_{h_0}^{1-h_0}\parentheses{(1-2\alpha)\cdot \frac{h-h_0}{1-2h_0}}dh\\
&= 1/2 + (-1 + 2 h_0) \alpha + 2\alpha(T^{-1}(x)-h_0)=1/2-\alpha+2\alpha T^{-1}(x)
\end{align*}
Therefore, we have:
\begin{align*}
    \mathbb E[\widehat x] &=
    \begin{cases}
    1/2 - \alpha & \mbox{if $x < \alpha$}\\
    1/2-\alpha+2\alpha\parentheses{\frac{(1-2h_0)(x - \alpha)}{1-2\alpha} +h_0} & \mbox{if $x\in[\alpha,1-\alpha]$}\\
    1/2 + \alpha & \mbox{otherwise}
    \end{cases}.
\end{align*}
Next, we calculate the second moment of the estimate:
\begin{align*}
\mathbb E[(\widehat x)^2 | x < \alpha] (\implies X=0) &= h_0\cdot (1-2\alpha)^2 + \int_{h_0}^{1-h_0} \parentheses{(1-2\alpha)\cdot \frac{h-h_0}{1-2h_0}}^2dh \\&= 1/3 + h_0/3 - 4 \alpha/3 - 4 h_0 \alpha/3 + 4 \alpha^2/3 + 4 h_0 \alpha^2/3
\end{align*}
\begin{align*}
\mathbb E[(\widehat x)^2 | x > 1-\alpha] (\implies X=1) &= h_0\cdot (1+(2\alpha)^2) \\
&\quad+ \int_{h_0}^{1-h_0} \parentheses{2\alpha+(1-2\alpha)\cdot \frac{h-h_0}{1-2h_0}}^2dh \\&= 
1/3 + h_0/3 + 2 \alpha/3 - 4 h_0 \alpha/3 + 4 \alpha^2/3 + 4 h_0 \alpha^2/3
\end{align*}
\begin{align*}
\mathbb E[(\widehat x)^2& | x \in [\alpha,1-\alpha]] = h_0\cdot ((1-2\alpha)^2+(2\alpha)^2) \\&\quad+ \int_{h_0}^{T^{-1}(x)} \parentheses{2\alpha+(1-2\alpha)\cdot \frac{h-h_0}{1-2h_0}}^2dh + \int_{T^{-1}(x)}^{1-h_0}\parentheses{(1-2\alpha)\cdot \frac{h-h_0}{1-2h_0}}^2dh
\\&=
h_0\cdot ((1-2\alpha)^2+(2\alpha)^2) +
\frac{(1 - 2 h_0) (x - \alpha) (x^2 + 4 x \alpha + 7 \alpha^2)}{3 - 6 \alpha}\\&\quad +
\frac{(2 h_0-1) (-1 + x + \alpha) (1 + x + x^2 - 4 x \alpha + \alpha (-5 + 7 \alpha))}{3 - 6 \alpha}\\
&=\frac{1 - 6 \alpha - 6 \alpha x^2 (-1 + 2 h_0)  + 12 \alpha^2 - 14 \alpha^3 + h_0 (1 - 6 \alpha + 24 \alpha^2 - 20 \alpha^3)}{3 - 6 \alpha}
\end{align*}
Putting it together, we get:
\begin{align*}
    \mathbb E[\widehat x^2] &=
    \begin{cases}
    1/3 + h_0/3 - 4 \alpha/3 - 4 h_0 \alpha/3 + 4 \alpha^2/3 + 4 h_0 \alpha^2/3 & \mbox{if $x < \alpha$}\\
    \frac{1 - 6 \alpha - 6 \alpha x^2 (-1 + 2 h_0)  + 12 \alpha^2 - 14 \alpha^3 + h_0 (1 - 6 \alpha + 24 \alpha^2 - 20 \alpha^3)}{3 - 6 \alpha} & \mbox{if $x\in[\alpha,1-\alpha]$}\\
    1/3 + h_0/3 + 2 \alpha/3 - 4 h_0 \alpha/3 + 4 \alpha^2/3 + 4 h_0 \alpha^2/3 & \mbox{otherwise}
    \end{cases}.
\end{align*}
By solving 
\begin{align*}
    \min_{h_0\in[0,1/2],\alpha\in[0,1/2]}\max_{x\in[0,1]} \mathbb E[(\widehat x - x)^2] = \min_{h_0\in[0,1/2],\alpha\in[0,1/2]}\max_{x\in[0,1]}\mathbb E[\widehat x^2] - 2x\mathbb E[\widehat x] + x^2,
\end{align*}
we get that the algorithm is optimized for $\alpha=1/3$ and $h_0=1/4$, where the resulting cost is:
\begin{align*}
    \mathbb E[(\widehat x - x)^2] &=\mathbb E[(\widehat x)^2] - 2x\mathbb E[\widehat x] + x^2\\
    &=
    \begin{cases}
     5/108 - x/3 + x^2 & \mbox{if $x < 1/3$}\\
     5/108 & \mbox{if $x\in[1/3,2/3]$}\\
     77/108 - 5 x/3 + x^2 & \mbox{otherwise}
    \end{cases}.
\end{align*}
Therefore, the cost is $5/108\approx 0.0463$, which is less than $0.9\%$ higher than the $0.0459$ lower bound (Section~\ref{sec:improvedBound}).
}

\ifarXiv
\else
\bibliography{main.bib}
\fi
\end{document}
\endinput

\newpage\section{Optimality of 1-bit hashing?}
We show that without shared randomness, randomized rounding is optimal in the sense that it minimizes the worst-case estimation variance.

Consider an arbitrary protocol. We model it as follows: we have two (deterministic) parameters: $X:[0,1]\times\set{0,1}\to[0,1]$ and $\Gamma:\{0,1\}\times\set{0,1}\to\Delta([0,1])$. 

\Sender computes $p=X(x, h)$ and sends $Y\sim\mbox{Bernoulli}(p)$. In turn, the observer receives $Y$, and estimates $x$ by drawing from the distribution $\Gamma(Y, h)$.
We also denote by $Z_{i,j}\sim \Gamma(i,j)$, for $i,j\in\set{0,1}$, variables such that the final estimate is 
\begin{equation*}\widehat x = 
\indicator_{Y=0, h=0}\cdot Z_{0,0} + 
\indicator_{Y=1, h=0}\cdot Z_{1,0} + 
\indicator_{Y=0, h=1}\cdot Z_{0,1} + 
\indicator_{Y=1, h=1}\cdot Z_{1,1} .\end{equation*}

We demand that the protocol will produce unbiased estimates for any $x$.
That is, it must satisfy:
\begin{align}
    \mathbb E[\widehat x] &= \mathbb E\brackets{\indicator_{Y=0, h=0}\cdot Z_{0,0}} + 
\mathbb E\brackets{\indicator_{Y=1, h=0}\cdot Z_{1,0}} + 
\mathbb E\brackets{\indicator_{Y=0, h=1}\cdot Z_{0,1}} + 
\mathbb E\brackets{\indicator_{Y=1, h=1}\cdot Z_{1,1}}\notag\\
&= 
X(x,0)\cdot\mathbb E\brackets{Z_{0,0}\mid \indicator_{Y=0, h=0} = 1} + 
X(x,0)\cdot\mathbb E\brackets{Z_{1,0}\mid \indicator_{Y=1, h=0} = 1}\notag \\&\qquad+ 
X(x,1)\cdot\mathbb E\brackets{Z_{0,1}\mid \indicator_{Y=0, h=1} = 1} + 
X(x,1)\cdot\mathbb E\brackets{Z_{1,1}\mid \indicator_{Y=1, h=1} = 1}\notag
\\
&=??= 
X(x,0)\cdot\mathbb E\brackets{Z_{0,0}} + 
X(x,0)\cdot\mathbb E\brackets{Z_{1,0}} +
X(x,1)\cdot\mathbb E\brackets{Z_{0,1}} + 
X(x,1)\cdot\mathbb E\brackets{Z_{1,1}}\notag
\end{align}

In particular, for $x=0$, we have:
\begin{align}
X(0,0)\cdot\mathbb E\brackets{Z_{0,0}} + 
X(0,0)\cdot\mathbb E\brackets{Z_{1,0}} +
X(0,1)\cdot\mathbb E\brackets{Z_{0,1}} + 
X(0,1)\cdot\mathbb E\brackets{Z_{1,1}} = 0\notag
\end{align}
which gives
\begin{align}
X(0,0) = \frac{E\brackets{Z_{0,1}} +
E\brackets{Z_{1,1}}}{E\brackets{Z_{0,0}} +
E\brackets{Z_{1,0}}}\cdot X(0,1).
\end{align}

Similarly, plugging $x=1$ into~\eqref{eq:unbiasedness} gives:
\begin{align}
    &X(1)\cdot\mathbb E\brackets{Z_1} + 
    (1-X(1))\cdot\mathbb E\brackets{Z_0} 
    = 1.\notag
\end{align}
Using \eqref{eq:x0_basis}:
\begin{align}
    &X(1)\cdot\mathbb E\brackets{Z_1} + 
    (1-X(1))\cdot-\frac{X(0)}{1-X(0)}\cdot \mathbb E\brackets{Z_1} 
    = 1.\notag\\
    &\mathbb E\brackets{Z_1}\cdot\parentheses{X(1)-(1-X(1))\frac{X(0)}{1-X(0)}}
    = 1.\notag\\
    &\mathbb E\brackets{Z_1}\cdot\parentheses{\frac{X(1)\cdot(1-X(0))-(1-X(1))X(0)}{1-X(0)}}
    = 1.\notag\\
    &\mathbb E\brackets{Z_1}\cdot\parentheses{\frac{X(1)-X(0)}{1-X(0)}}
    = 1.\notag\\
    &\mathbb E\brackets{Z_1}=\frac{1-X(0)}{X(1)-X(0)}.
    \label{eq:x1}
\end{align}

Putting this with \eqref{eq:x0_basis}, we also get:
\begin{align}
\mathbb E\brackets{Z_0} = -\frac{X(0)}{1-X(0)}\cdot \mathbb E\brackets{Z_1} = -\frac{X(0)}{X(1)-X(0)}.
\label{eq:x0}
\end{align}

Going back to~\eqref{eq:unbiasedness}, and using~\eqref{eq:x0} and~\eqref{eq:x1}, we have:
\begin{align}
    &X(x)\cdot\mathbb E\brackets{Z_1} + 
    (1-X(x))\cdot\mathbb E\brackets{Z_0} 
    = x.\notag\\
    &X(x)\cdot\frac{1-X(0)}{X(1)-X(0)} + 
    (1-X(x))\cdot-\frac{X(0)}{X(1)-X(0)}
    = x.\notag\\
    &\frac{X(x)\cdot(1-X(0)) - (1-X(x))\cdot X(0)}{X(1)-X(0)}
    = x.\notag\\
    &\frac{X(x)-X(0)}{X(1)-X(0)}
    = x.\notag\\
    &X(x) = x\cdot(X(1)-X(0)) + X(0) = x\cdot X(1) + (1-x)\cdot X(0)\notag\\
    &X(x)  = x\cdot X(1) + (1-x)\cdot X(0).
    \label{eq:unbiasednesszzz}
\end{align}

We turn into analyzing the variance that results for $x=0.5$.

\begin{align}
\Var[\widehat x | x = 0.5] &= \mathbb E[\parentheses{\widehat x - 0.5}^2 | x = 0.5]
=\mathbb E[\parentheses{\widehat x}^2 | x = 0.5] - \mathbb E[\parentheses{\widehat x} | x = 0.5] + 0.25.\notag
\end{align}
Since $\widehat x$ is unbiased, $\mathbb E[\parentheses{\widehat x} | x = 0.5]=0.5$ and we get
\begin{align}
\Var[\widehat x | x = 0.5] = \mathbb E[\parentheses{\widehat x}^2 | x = 0.5] - 0.25.\label{eq:varx1}
\end{align}
Next, we analyze $\mathbb E[\parentheses{\widehat x}^2 | x = 0.5]$:
\begin{align}
\mathbb E[\parentheses{\widehat x}^2 | x = 0.5] &=
\mathbb E\brackets{\parentheses{\indicator_{Y=0}\cdot Z_0 + \indicator_{Y=1}\cdot Z_1}^2 | x = 0.5}\notag\\
&
=
\mathbb E\brackets{(Z_0)^2\cdot\indicator_{Y=0} | x = 0.5} + 
\mathbb E\brackets{(Z_1)^2\cdot\indicator_{Y=1} | x = 0.5} .
\notag
\end{align}
We have that $Z_0,Z_1$ are independent of $\indicator_{Y=0},\indicator_{Y=1}$ and of $x$, and thus 
\begin{align}
\mathbb E[\parentheses{\widehat x}^2 | x = 0.5] &=
\mathbb E\brackets{(Z_0)^2 | x = 0.5} \cdot (1-X(0.5))+ 
\mathbb E\brackets{(Z_1)^2 | x = 0.5} \cdot X(0.5)\notag\\
&=
\mathbb E\brackets{(Z_0)^2} \cdot (1-X(0.5))+ 
\mathbb E\brackets{(Z_1)^2} \cdot X(0.5)\notag\\
&\ge \parentheses{\mathbb E\brackets{Z_0}}^2 \cdot (1-X(0.5)) + \parentheses{\mathbb E\brackets{Z_1}}^2 \cdot X(0.5).\label{eq:punchline}
\end{align}
Using~\eqref{eq:x0} and~\eqref{eq:x1}, we have:
\begin{align}
\mathbb E[\parentheses{\widehat x}^2 | x = 0.5]
&\ge \parentheses{\frac{X(0)}{X(1)-X(0)}}^2 \cdot (1-X(0.5)) + \parentheses{\frac{1-X(0)}{X(1)-X(0)}}^2 \cdot X(0.5)\notag\\
&= \frac{(X(0))^2\cdot (1-X(0.5)) + (1-X(0))^2\cdot X(0.5) }{\parentheses{X(1)-X(0)}^2}\notag\\
&= \frac{(X(0))^2+X(0.5)-2X(0)X(0.5)}{\parentheses{X(1)-X(0)}^2}.\notag
\end{align}
We now use~\eqref{eq:unbiasednesszzz} for $x=0.5$ and get $X(0.5)=0.5\cdot(X(0)+X(1))$, which means:
\begin{align}
\mathbb E[\parentheses{\widehat x}^2 | x = 0.5]
&\ge \frac{(X(0))^2+X(0.5)-2X(0)X(0.5)}{\parentheses{X(1)-X(0)}^2}\notag\\
&
= \frac{(X(0))^2+0.5\cdot(X(0)+X(1))-2X(0)\cdot0.5\cdot(X(0)+X(1))}{\parentheses{X(1)-X(0)}^2}\notag\\
&
= \frac{0.5\cdot(X(0)+X(1))-X(0)\cdot X(1)}{\parentheses{X(1)-X(0)}^2}.\notag
\end{align}
Combined with~\eqref{eq:varx1}, this gives: 
\begin{align}
\Var[\widehat x | x = 0.5] &= \mathbb E[\parentheses{\widehat x}^2 | x = 0.5] - 0.25\\&\ge 
\frac{0.5\cdot(X(0)+X(1))-X(0)\cdot X(1)}{\parentheses{X(1)-X(0)}^2} - 0.25\notag\\
&= \frac{0.5\cdot(X(0)+X(1))-X(0)\cdot X(1) - 0.25\parentheses{X(1)-X(0)}^2}{\parentheses{X(1)-X(0)}^2}\notag\\
&= \frac{0.5\cdot(X(0)+X(1))-X(0)\cdot X(1) - 0.25\parentheses{X(1)}^2+0.5X(0)X(1) - 0.25\parentheses{X(0)}^2}{\parentheses{X(1)-X(0)}^2}\notag\\
&= \frac{0.5\cdot(X(0)+X(1))-0.5X(0)\cdot X(1) - 0.25\parentheses{X(1)}^2 - 0.25\parentheses{X(0)}^2}{\parentheses{X(1)-X(0)}^2}\notag\\
&= \frac{0.5\cdot(X(0)+X(1))-\Big({0.5\cdot(X(0)+X(1))}\Big)^2 }{\parentheses{X(1)-X(0)}^2}.\label{eq:finalVar}
\end{align}

Over the domain $X(0),X(1)\in[0,1]$, \eqref{eq:finalVar} has two minima: $X(0)=0, X(1)=1$ and $X(0)=1, X(1)=0$. Indeed, the first corresponds to randomized rounding, while the second is using a simple transform that negates the randomized rounding's bit.

To conclude, we established that randomized rounding has minimal worst-case variance. As a side note, by deterministically estimating $\widehat x = Y$, Inequality~\eqref{eq:punchline} holds as an equality and the variance is exactly $0.25$.

\subsection{Multiplicative Compression}
An alternative to the above is to minimize the \emph{relative error}, i.e., $M = \widehat x/x - 1$.
One thing to note is that if $x$ can take arbitrary values in $[0,1]$, no method can have a finite error variance using a finite number of bits.
Therefore, we assume that $x\in\cup[\epsilon,1]$ for a known $\epsilon>0$.

randomized rounding also applies to multiplicative compression. Here, the algorithm works as follows:

\begin{align*}
    b& = 
    A&=\floor{\parentheses{2^k-1}\cdot{x}}\\
    p&=\parentheses{2^k-1}\cdot\parentheses{x\mod \frac{1}{2^k-1}}\\
    B&\sim \text{Bernoulli}(p)\\
    X&=A+B\\
    \widehat x &= X/\parentheses{2^k-1}.
\end{align*}

\ \\\\\\\\\\\\\\\\\\

\newpage
In practice, this is unlikely to cause issues as all variable representations impose a lower bound on the values.
For example, 32-bit floats can represent the value $0$, or numbers larger than or equal to $\approx 1.2\cdot10^{-38}$.
Here, we assume that $x\in\set{0}\cup[\epsilon,1]$ for a known $\epsilon>0$.

$x\to X\to \widehat x$

$X:[0,1]\to[0,1]$ //Probability to send 1

$\widehat x: \{0,1\}\to [0,1]$

Unbiasedness: $X(x)\cdot \widehat x(1) + (1-X(x))\cdot \widehat x(0) = x$.

In particular:
\begin{equation*}
X(0)\cdot \widehat x(1) + (1-X(0))\cdot \widehat x(0) = 0\implies
\widehat x(0) = -\frac{X(0)}{1-X(0)}\cdot\widehat x(1)
\end{equation*}

\begin{equation*}
X(1)\cdot \widehat x(1) + (1-X(1))\cdot \widehat x(0) = 1
\end{equation*}

\begin{equation*}
X(1)\cdot \widehat x(1) + (1-X(1))\cdot -\frac{X(0)}{1-X(0)}\cdot\widehat x(1) = 1
\end{equation*}
\begin{equation*}
\widehat x(1)\cdot\left(X(1) - \frac{(1-X(1))\cdot X(0)}{1-X(0)}\right)  = 1
\end{equation*}

\begin{equation*}
\widehat x(1)\cdot\left(\frac{X(1) \cdot (1-X(0))- (1-X(1))\cdot X(0)}{1-X(0)}\right)  = 1
\end{equation*}

\begin{equation*}
\widehat x(1)\cdot\left(\frac{X(1) -X(0)}{1-X(0)}\right)  = 1
\end{equation*}

\begin{equation*}
\widehat x(0) = -\frac{X(0)}{X(1) -X(0)} \qquad \widehat x(1) = \frac{1-X(0)}{X(1) -X(0)}
\end{equation*}

\begin{align*}
&X(x)\cdot \widehat x(1) + (1-X(x))\cdot \widehat x(0) = x\\
&X(x)\cdot \frac{1-X(0)}{X(1) -X(0)} - (1-X(x))\cdot \frac{X(0)}{X(1) -X(0)} = x
\\
&\frac{X(x)(1-X(0))-(1-X(x))X(0)}{X(1) -X(0)} = x
\\
&\frac{X(x)-X(0)X(x)-X(0)+X(0)X(x)}{X(1) -X(0)} = x
\\
&\frac{X(x)-X(0)}{X(1) -X(0)} = x
\\
&X(x) = x(X(1) -X(0)) + X(0) = xX(1) + (1-x)X(0).
\end{align*}


\begin{multline*}
Var[\widehat x](x) = X(x)(\widehat x(1) - x)^2 + (1-X(x))(\widehat x(0) - x)^2
\\=
X(x)(\frac{1-X(0)}{X(1) -X(0)} - x)^2 + (1-X(x))(-\frac{X(0)}{X(1) -X(0)} - x)^2
\end{multline*}

\begin{multline*}
Var[\widehat x](0) = X(0)(\frac{1-X(0)}{X(1) -X(0)})^2 + (1-X(0))(-\frac{X(0)}{X(1) -X(0)})^2\\
=
\frac{X(0)(1-X(0))^2 +X(0)^2(1-X(0)) }{(X(1) -X(0))^2}\\
=
\frac{X(0)-2X(0)^2+X(0)^3 + X(0)^2 - X(0)^3 }{(X(1) -X(0))^2}\\
=
\frac{X(0)-X(0)^2}{(X(1) -X(0))^2}
\end{multline*}

\begin{multline*}
Var[\widehat x](0.5) = X(0.5)(\frac{1-X(0)}{X(1) -X(0)}-0.5)^2 + (1-X(0.5))(-\frac{X(0)}{X(1) -X(0)}-0.5)^2\\
Var[\widehat x](0.5) = X(0.5)(\frac{1-X(0)}{X(1) -X(0)}-0.5)^2 + (1-X(0.5))(\frac{X(0)}{X(1) -X(0)}+0.5)^2\\
Var[\widehat x](0.5) = X(0.5)\parentheses{\frac{1-0.5(X(0)+X(1))}{X(1) -X(0)}}^2 + (1-X(0.5))\parentheses{0.5\cdot\frac{X(0)+X(1)}{X(1) -X(0)}}^2\\
Var[\widehat x](0.5) = \frac{X(0.5)\cdot\parentheses{1-0.5(X(0)+X(1))}^2 + (1-X(0.5))\cdot\parentheses{0.5(X(0)+X(1))}^2}{\parentheses{X(1) -X(0)}^2}\\
Var[\widehat x](0.5) = \frac{X(0.5)\cdot\parentheses{1-y}^2 + (1-X(0.5))\cdot\parentheses{y}^2}{\parentheses{X(1) -X(0)}^2}\\
Var[\widehat x](0.5) = \frac{X(0.5)\cdot\parentheses{1-2y+y^2} + (1-X(0.5))\cdot\parentheses{y}^2}{\parentheses{X(1) -X(0)}^2}\\
Var[\widehat x](0.5) = \frac{X(0.5)\cdot\parentheses{1-2y}+y^2}{\parentheses{X(1) -X(0)}^2}\\
Var[\widehat x](0.5) = \frac{X(0.5)\cdot\parentheses{1-(X(0)+X(1))}+0.25(X(0)+X(1))^2}{\parentheses{X(1) -X(0)}^2}\\
Var[\widehat x](0.5) = \frac{0.5(X(0)+X(1))\cdot\parentheses{1-(X(0)+X(1))}+0.25(X(0)+X(1))^2}{\parentheses{X(1) -X(0)}^2}\\
Var[\widehat x](0.5) = \frac{0.5(X(0)+X(1))-0.25(X(0)+X(1))^2}{\parentheses{X(1) -X(0)}^2}\\
%
%
\end{multline*}

\begin{multline*}
Var[\widehat x](1) = X(1)(\frac{1-X(0)}{X(1) -X(0)}-1)^2 + (1-X(1))(-\frac{X(0)}{X(1) -X(0)}-1)^2\\
Var[\widehat x](1) = X(1)(\frac{1-X(0)}{X(1) -X(0)}-1)^2 + (1-X(1))(\frac{X(0)}{X(1) -X(0)}+1)^2\\
Var[\widehat x](1) = X(1)(\frac{1-X(1)}{X(1) -X(0)})^2 + (1-X(1))(\frac{X(1)}{X(1) -X(0)})^2\\
Var[\widehat x](1) = \frac{X(1)\cdot\parentheses{1-X(1)}^2 + (1-X(1))\cdot\parentheses{X(1)}^2 }{(X(1) -X(0))^2}\\
Var[\widehat x](1) = \frac{X(1)-X(1)^2}{(X(1) -X(0))^2}.
\end{multline*}

Therefore, if $v$ is the worst-case variance:
\begin{multline*}
v\ge \frac{\max\set{}}{(X(1) -X(0))^2}
\end{multline*}
\end{document}
\endinput
\newpage
We have that $y\in \Big[0,\frac{1}{2^k-1}\Big)$.
\begin{align*}
F_{\widehat x}(t) &= \Pr[\widehat x\le t] = 
\Pr[\widehat x\le t  | D = 0]\Pr[D = 0] + \Pr[\widehat x\le t  | D = 1]\Pr[D = 1]     \\
&= \parentheses{1-p}\cdot\Pr[\widehat x\le t  | D = 0] + {p}\cdot\Pr[\widehat x\le t  | D = 1].
\end{align*}
Next,
\begin{align*}
\Pr[\widehat x\le t  \ |\ D = 0]
= 
\Pr[  D = 0|  \widehat x\le t] \frac{\Pr\brackets{D = 0}}{\Pr\brackets{\widehat x\le t}}.
\end{align*}
We have that
\begin{align*}
\Pr[D=0] = 1-p. 
\end{align*}
and
\begin{align*}
\Pr\brackets{  D = 0 \ \Big|\  \frac{C + h - 1/2}{2^k-1}\le t}
&= \Pr\brackets{  h > p \ \Big|\  h\le \parentheses{2^k-1}\cdot t - C + 1/2}\\
&= \Pr\brackets{  h > p \ \Big|\  h\le \parentheses{2^k-1}\cdot t - \floor{\parentheses{2^k-1}\cdot x} + 1/2}
\end{align*}
We split to cases. 
\begin{itemize}
    \item Case $\parentheses{2^k-1}\cdot t - \floor{\parentheses{2^k-1}\cdot x} + 1/2 < p$, or equivalently 
    \begin{multline*}
    t < \frac{p+\floor{\parentheses{2^k-1}\cdot x} -1/2}{2^k-1}
     = y + \frac{\floor{\parentheses{2^k-1}\cdot x} -1/2}{2^k-1}\\
     = x - \parentheses{2^k-1}\cdot\floor{\frac{x}{2^k-1}} + \frac{\floor{\parentheses{2^k-1}\cdot x} -1/2}{2^k-1}\\
     = x - \parentheses{2^k-1}\cdot\floor{\frac{x}{2^k-1}} + \frac{\floor{\parentheses{2^k-1}\cdot x}}{2^k-1}- \frac{1/2}{2^k-1} 
     . 
    \end{multline*}
    In this case, $D=0$ and thus $\Pr\brackets{  D = 0 \ \Big|\  \frac{C + h - 1/2}{2^k-1}\le t} = 0$
    \item Case $\parentheses{2^k-1}\cdot t - \floor{\parentheses{2^k-1}\cdot x} + 1/2 \in (p,1)$.\\
    In this case
    \begin{align*}
    t >
      x - \parentheses{2^k-1}\cdot\floor{\frac{x}{2^k-1}} + \frac{\floor{\parentheses{2^k-1}\cdot x}}{2^k-1}- \frac{1/2}{2^k-1} 
     , 
    \end{align*}
    and 
    \begin{align*}
    t < \frac{1+\floor{\parentheses{2^k-1}\cdot x} -1/2}{2^k-1}
     . 
    \end{align*}

\end{itemize}
and
\begin{align*}
\Pr[  D = 0 \ |\  \widehat x\le t] &= 
\Pr\brackets{  D = 0 \ \Big|\  \frac{C + D + h - 1/2}{2^k-1}\le t} = 
\Pr\brackets{  D = 0 \ \Big|\  \frac{C + h - 1/2}{2^k-1}\le t} \Pr[D=0]\\
&= 
\end{align*}
\newpage
We have that $\Pr[X = 1] = \parentheses{2^k-1}\cdot\parentheses{x\mod \frac{1}{2^k-1}}$ and thus
\begin{align*}
F_{\widehat x}(t) &= \Pr[\widehat x\le t] = 
\Pr[\widehat x\le t  | X = 0]\Pr[X = 0] + \Pr[\widehat x\le t  | X = 1]\Pr[X = 1]     \\
&= \Pr[\widehat x\le t  | X = 0]\parentheses{1-\parentheses{2^k-1}\cdot\parentheses{x\mod \frac{1}{2^k-1}}} + \Pr[\widehat x\le t  | X = 1]\parentheses{2^k-1}\cdot\parentheses{x\mod \frac{1}{2^k-1}}\\
&= \Pr\brackets{\frac{h - 1/2}{2^k-1}\le t\ \Big|\  X = 0}\parentheses{1-\parentheses{2^k-1}\cdot\parentheses{x\mod \frac{1}{2^k-1}}} \\&\qquad\qquad\qquad + \Pr\brackets{\frac{h + 1/2}{2^k-1}\le t\ \Big|\  X = 1}\parentheses{1-\parentheses{2^k-1}}\cdot\parentheses{x\mod \frac{1}{2^k-1}}\\
&= \Pr\brackets{X = 0\ \Big|\ \frac{h - 1/2}{2^k-1}\le t }\cdot \frac{\Pr\brackets{X = 0}}{\Pr\brackets{\frac{h - 1/2}{2^k-1}\le t}}\parentheses{1-\parentheses{2^k-1}\cdot\parentheses{x\mod \frac{1}{2^k-1}}} \\&\qquad\qquad\qquad + \Pr\brackets{\frac{h + 1/2}{2^k-1}\le t\ \Big|\  X = 1}\parentheses{1-\parentheses{2^k-1}}\cdot\parentheses{x\mod \frac{1}{2^k-1}}
\end{align*}
\end{proof}
Notice that, deterministically, $|\widehat x-x|\le1/2$, unlike the randomized rounding estimator that could have an error of $1$. 
This is because $X=1$ implies $h<x$ and thus $\widehat x< 1+x-1/2=x+1/2.$
Similarly, $X=0$ means $h\ge x$ and therefore $\widehat x > x-1/2$.\\

We have that
\begin{multline*}
\mathbb E[\widehat x] = \int_{t=0}^x f_{h}(t)\cdot (1+t-1/2)dt 
+ \int_{t=x}^1 f_{h}(t)\cdot (t-1/2)dt\\
= \int_{t=0}^x (t+1/2)dt 
+ \int_{t=x}^1 (t-1/2)dt
= (t^2/2+t/2)\big |_{0}^x
+ (t^2/2-t/2)\big |_{x}^1\\
= (x^2/2+x/2)
- (x^2/2-x/2)
= x.
\end{multline*}

Let's compute the estimation variance:

\begin{align*}
    \Var[\widehat x] &= \mathbb E[(\widehat x - x)^2]\\
    &= \int_{t=0}^x (t+1/2-x)^2dt + \int_{t=x}^1 (t-1/2-x)^2dt\\
    &= (t+1/2-x)^3/3 \big |_{0}^x 
    + (t-1/2-x)^3/3 \big |_{x}^1\\
    &= (1/2)^3/3 - (1/2-x)^3/3
    + (1/2-x)^3/3 - (-1/2)^3/3\\
    &= 2\cdot (1/2)^3/3 \\
    &= 1/12.
\end{align*}
Therefore, we improved the worst-case variance, and also the expected variance when $x$ is uniformly distributed.

\newpage

\section{Problem}
Given a 
number 
$x\in[0,1]$, 
consider compressing it to a one-bit value $X$ such that we can derive an estimation $\widehat x$ for which 
$\mathbb E[\widehat x]=x$ 
while minimizing its variance.
The value $X$ is being sent to a \Receiver that reconstructs $x$, such that both parties have a shared seed $s$.
\section{randomized rounding}
The most straightforward way, called \emph{randomized rounding} is to set $X\sim \text{Bernoulli}(x)$ and $\widehat x = X$. This has a variance of 
\begin{equation*}
\Var[X] = x (1-x).
\end{equation*}

Notice that the worst-case is for $x=1/2$, where $\Var[X]=1/4$. If $x$ is uniformly distributed,~then 
\begin{equation*}
\mathbb E[\Var[X] | x] = \mathbb E[x (1-x)]
= 
\int_{t=0}^1 t(1-t)dt = t^2/2-t^3/3\big |_{0}^1 = 1/6.
\end{equation*}
\section{Algorithm}
We use a hash function $h$ such that $h\in[0,1]$ is uniformly distributed. 
We then set 
\begin{equation*}
X \triangleq \begin{cases}
0 & \mbox{if $x \le h$}\\
1 & \mbox{otherwise}
\end{cases}\qquad.
\end{equation*}

In turn, the \Receiver estimates $x$ as 
\begin{equation*}
\widehat x = X + h - 1/2.
\end{equation*}
Notice that, deterministically, $|\widehat x-x|\le1/2$, unlike the randomized rounding estimator that could have an error of $1$. 
This is because $X=1$ implies $h<x$ and thus $\widehat x< 1+x-1/2=x+1/2.$
Similarly, $X=0$ means $h\ge x$ and therefore $\widehat x > x-1/2$.\\

We have that
\begin{multline*}
\mathbb E[\widehat x] = \int_{t=0}^x f_{h}(t)\cdot (1+t-1/2)dt 
+ \int_{t=x}^1 f_{h}(t)\cdot (t-1/2)dt\\
= \int_{t=0}^x (t+1/2)dt 
+ \int_{t=x}^1 (t-1/2)dt
= (t^2/2+t/2)\big |_{0}^x
+ (t^2/2-t/2)\big |_{x}^1\\
= (x^2/2+x/2)
- (x^2/2-x/2)
= x.
\end{multline*}

Let's compute the estimation variance:

\begin{align*}
    \Var[\widehat x] &= \mathbb E[(\widehat x - x)^2]\\
    &= \int_{t=0}^x (t+1/2-x)^2dt + \int_{t=x}^1 (t-1/2-x)^2dt\\
    &= (t+1/2-x)^3/3 \big |_{0}^x 
    + (t-1/2-x)^3/3 \big |_{x}^1\\
    &= (1/2)^3/3 - (1/2-x)^3/3
    + (1/2-x)^3/3 - (-1/2)^3/3\\
    &= 2\cdot (1/2)^3/3 \\
    &= 1/12.
\end{align*}
Therefore, we improved the worst-case variance, and also the expected variance when $x$ is uniformly distributed.
\section{Using two bits}
\paragraph{\textbf{Strategy A: separate bits.\\}}
Here, we set $A=\lfloor 2x \rfloor$, $p=2\cdot(x \mod 0.5)$, and $B\sim\mbox{Bernoulli}(p)$. We send $A$ and $B$, and the \Receiver uses $\widehat x=0.5(A+B)$ to estimate $x$.
\begin{align*}
\Var[\widehat x] &= \mathbb E[(\widehat x - x)^2] = (x \mod 0.5)(0.5 - (x \mod 0.5)).
\end{align*}
The worst-case here is $x\in\{0.25,0.75\}$, which gives $\Var[\widehat x]=1/16$.
For a uniformly distributed $x$, we~have
\begin{equation*}
\mathbb E[\Var[X] | x] = \mathbb E[(x \mod 0.5)(0.5 - (x \mod 0.5))]
= 2\int_{0}^{0.5} t(0.5-t)dt = 2\cdot (x^2/4-x^3/3)\big |_{0}^{0.5} = 1/24.
\end{equation*}
\paragraph{\textbf{Strategy B: one variable.\\}}
Here, we set $C=\lfloor 3x \rfloor$, $p=3\cdot(x \mod 1/3)$, and $D\sim\mbox{Bernoulli}(p)$. We send $(C+D)$, and the \Receiver uses $\widehat x=(C+D)/3$ to estimate $x$.
\begin{align*}
\Var[\widehat x] &= \mathbb E[(\widehat x - x)^2] = (x \mod 1/3)(1/3 - (x \mod 1/3)).
\end{align*}
The worst-case here is $x=\{1/6,1/2,5/6\}$, which gives $\Var[\widehat x]=1/36$.
For a uniformly distributed $x$, we~have
\begin{equation*}
\mathbb E[\Var[X] | x] = \mathbb E[(x \mod 1/3)(1/3- (x \mod 1/3))]
= 3\int_{0}^{1/3} t(1/3-t)dt = 3\cdot (x^2/6-x^3/3)\big |_{0}^{1/3} = 1/54.
\end{equation*}

\paragraph{\textbf{With hashing.\\}}
We use a hash function $h$ such that $h\in[0,1]$ is uniformly distributed. 
We then set 
\begin{equation*}
D \triangleq \begin{cases}
0 & \mbox{if $(x\mod 1/3) \le h/3$}\\
1 & \mbox{otherwise}
\end{cases}\qquad.
\end{equation*}
\Sender then sends $X=(C+D)$, and the \Receiver estimates
\begin{equation*}
\widehat x = (X + h - 1/2)/3.
\end{equation*}

We have, assuming without loss of generality that $x\le 1/3$:
\begin{align*}
    \Var[\widehat x] &= \mathbb E[(\widehat x - x)^2]\\
    &= \int_{t=0}^{3x} ((t+1/2)/3-x)^2dt +\int_{t=3x }^{1} ((t-1/2)/3-x)^2dt\\
    &= \frac{1}{9}\left(\int_{t=0}^{3x } (t+1/2-3x)^2dt + \int_{t=3x }^{1} (t-1/2-3x)^2dt\right)\\
    &= \frac{1}{27} \left((t+1/2-3x)^3\big |_{0}^{3x } + (t-1/2-3x)^3\big |_{3x }^{1}\right)\\
    &= \frac{1}{27} \left((1/2)^3 - (1/2-3x)^3 +  (1/2-3x)^3 - (-1/2)^3\right)\\
    &= 1/108.
    %
\end{align*}
\end{document}

\newpage
\section{Optimality of Hashing?}
We show that without shared randomness, randomized rounding is optimal in the sense that it minimizes the worst-case estimation variance.

We assume that both sides are aware of a shared random variable $\Psi\sim U[0,1]$.

Consider an arbitrary protocol. We model it as follows: we have two (deterministic) parameters: $\set{X_\psi:[0,1]\to[0,1]\mid \psi\in[0,1]}$ and $\set{\Gamma_\psi:\{0,1\}\to\Delta([0,1])\mid \psi\in[0,1]}$.

\Sender computes $p=X_\Psi(x)$ and sends $Y\sim\mbox{Bernoulli}(p)$. In turn, the observer receives $Y$, and estimates $x$ by drawing from the distribution $\Gamma_\Psi(Y)$.
We also denote by $\set{Z_{0,\psi}\sim \Gamma_\psi(0)\mid \psi\in [0,1]}$ and $\set{Z_{1,\psi}\sim \Gamma_\psi(1)\mid \psi\in [0,1]}$ random variables such that the final estimate is 
\begin{equation*}\widehat x =  \indicator_{Y=0}\cdot Z_{0,\Psi} + \indicator_{Y=1}\cdot Z_{1,\Psi}.\end{equation*}

Notice that this formulation captures any protocol.
For example, our protocol is defined as 
\begin{equation*}
X_\Psi(x) \triangleq \begin{cases}
1 & \mbox{if $\Psi \le x$}\\
0 & \mbox{otherwise}
\end{cases}\qquad.
\end{equation*}
and 
\begin{equation*}
\Gamma_\Psi (Y)(y)= 
    \begin{cases}
        1 &\mbox{if $y=Y + \Psi - 1/2$}\\
        0 &\mbox{otherwise}
    \end{cases}
    .
\end{equation*}
(this is a slight abuse of notation as the above definition assumes that $\Gamma (Y, \Psi)$ is a density function).

We have that, since $Z_{Y,\Psi}$ is independent of $X_\Psi(x)$ and $(\indicator_{Y=y} \mid X_\Psi(x))$,
\begin{align}
    \mathbb E[\widehat x \mid X_\Psi(x)]
     &=
    \mathbb E\brackets{\indicator_{Y=0} \cdot Z_{0,\Psi} \mid  X_\Psi(x)}+
    \mathbb E\brackets{\indicator_{Y=1} \cdot Z_{1,\Psi} \mid X_\Psi(x)}\notag\\
    &= 
    \mathbb E[\indicator_{Y=0} \mid  X_\Psi(x)]\cdot \mathbb E[Z_{0,\Psi} \mid  X_\Psi(x)]+
    \mathbb E[\indicator_{Y=1} \mid  X_\Psi(x)]\cdot \mathbb E[Z_{1,\Psi} \mid X_\Psi(x)]\notag\\
    &= 
    (1-X_\Psi(x))\cdot \mathbb E[Z_{0,\Psi}]+
    X_\Psi(x)\cdot \mathbb E[Z_{1,\Psi}]\qquad\notag
    %
\end{align}
The following should hold for any $x$
\begin{multline}
\mathbb E[\widehat x] = \mathbb E\brackets{\mathbb E[\widehat x \mid X_\Psi(x)]} = \mathbb E\brackets{(1-X_\Psi(x))\cdot \mathbb E[Z_{0,\Psi}]+
    X_\Psi(x)\cdot \mathbb E[Z_{1,\Psi}]} \\
    = (1-\mathbb E\brackets{X_\Psi(x)})\cdot \mathbb E[Z_{0,\Psi}]+
    \mathbb E\brackets{X_\Psi(x)}\cdot \mathbb E[Z_{1,\Psi}]
    = x.\qquad\qquad\label{eq:expectedEstSR}
\end{multline}

In particular, for $x=0$ and $x=1$:
\begin{align*}
\mathbb E[\widehat x | x = 0] &=  (1-\mathbb E\brackets{X_\Psi(0)})\cdot \mathbb E[Z_{0,\Psi}]+
    \mathbb E\brackets{X_\Psi(0)}\cdot \mathbb E[Z_{1,\Psi}]
    = 0\\
\mathbb E[\widehat x | x = 1] &=  (1-\mathbb E\brackets{X_\Psi(1)})\cdot \mathbb E[Z_{0,\Psi}]+
    \mathbb E\brackets{X_\Psi(1)}\cdot \mathbb E[Z_{1,\Psi}]
    = 1.    
\end{align*}
Simplifications similar to the above yield:
\begin{align}
\mathbb E[Z_{0,\Psi}] &= -\frac{\mathbb E\brackets{X_\Psi(0)}}{\mathbb E\brackets{X_\Psi(1)}-\mathbb E\brackets{X_\Psi(0)}}\label{eq:EZ0}\\
\mathbb E[Z_{1,\Psi}] &= \frac{1-\mathbb E\brackets{X_\Psi(0)}}{\mathbb E\brackets{X_\Psi(1)}-\mathbb E\brackets{X_\Psi(0)}}\label{eq:EZ1}
\end{align}
Plugging~\eqref{eq:EZ0} and~\eqref{eq:EZ1} into~\eqref{eq:expectedEstSR}, we have
\begin{multline*}
\mathbb E[\widehat x] 
    = (1-\mathbb E\brackets{X_\Psi(x)})\cdot -\frac{\mathbb E\brackets{X_\Psi(0)}}{\mathbb E\brackets{X_\Psi(1)}-\mathbb E\brackets{X_\Psi(0)}}+
    \mathbb E\brackets{X_\Psi(x)}\cdot \frac{1-\mathbb E\brackets{X_\Psi(0)}}{\mathbb E\brackets{X_\Psi(1)}-\mathbb E\brackets{X_\Psi(0)}}=x.
\end{multline*}
Similar simplifications yield:
\begin{align}
    \mathbb E\brackets{X_\Psi(x)} = x\cdot\mathbb E\brackets{X_\Psi(1)} + (1-x)\cdot\mathbb E\brackets{X_\Psi(0)}. \label{eq:expectedEstAsFuncOfX0X1}
\end{align}

As before, we calculate the resulting variance for $x=0.5$:
\begin{align}
\Var[\widehat x | x = 0.5] &= \mathbb E[\parentheses{\widehat x - 0.5}^2 | x = 0.5]
=\mathbb E[\parentheses{\widehat x}^2 | x = 0.5] - \mathbb E[\parentheses{\widehat x} | x = 0.5] + 0.25.\notag
\end{align}
Since $\widehat x$ is unbiased, $\mathbb E[\parentheses{\widehat x} | x = 0.5]=0.5$ and we get
\begin{align}
\Var[\widehat x | x = 0.5] = \mathbb E[\parentheses{\widehat x}^2 | x = 0.5] - 0.25.\label{eq:varx1}
\end{align}

Next, we analyze $\mathbb E[\parentheses{\widehat x}^2 | x = 0.5]$:
\begin{align}
\mathbb E[\parentheses{\widehat x}^2 | x = 0.5] &=
\mathbb E\brackets{\parentheses{\indicator_{Y=0}\cdot Z_{0,\Psi} + \indicator_{Y=1}\cdot Z_{1,\Psi}}^2 | x = 0.5}\notag\\
&
=
\mathbb E\brackets{(Z_{0,\Psi})^2\cdot\indicator_{Y=0} | x = 0.5} + 
\mathbb E\brackets{(Z_{1,\Psi})^2\cdot\indicator_{Y=1} | x = 0.5} .
\notag
\end{align}

We have that $Z_{0,\Psi},Z_{1,\Psi}$ are independent of $\indicator_{Y=0},\indicator_{Y=1}$ and of $x$, and thus 
\begin{align}
\mathbb E[\parentheses{\widehat x}^2 | x = 0.5] &=
\mathbb E\brackets{(Z_{0,\Psi})^2 | x = 0.5} \cdot (1-\mathbb E\brackets{X_\Psi(0.5)})+ 
\mathbb E\brackets{(Z_{1,\Psi})^2 | x = 0.5} \cdot \mathbb E\brackets{X_\Psi(0.5)}\notag\\
&=
\mathbb E\brackets{(Z_{0,\Psi})^2} \cdot (1-\mathbb E\brackets{X_\Psi(0.5)})+ 
\mathbb E\brackets{(Z_{1,\Psi})^2} \cdot \mathbb E\brackets{X_\Psi(0.5)}\notag\\
&\ge \parentheses{\mathbb E\brackets{Z_{0,\Psi}}}^2 \cdot (1-\mathbb E\brackets{X_\Psi(0.5)}) + \parentheses{\mathbb E\brackets{Z_{1,\Psi}}}^2 \cdot \mathbb E\brackets{X_\Psi(0.5)}.\label{eq:punchline}
\end{align}
Using~\eqref{eq:EZ0} and~\eqref{eq:EZ1}, we have:
\begin{align}
\mathbb E[\parentheses{\widehat x}^2 | x = 0.5]
&\ge \parentheses{\frac{\mathbb E\brackets{X_\Psi(0)}}{\mathbb E\brackets{X_\Psi(1)}-\mathbb E\brackets{X_\Psi(0)}}}^2 \cdot (1-\mathbb E\brackets{X_\Psi(0.5)}) + \parentheses{\frac{1-\mathbb E\brackets{X_\Psi(0)}}{\mathbb E\brackets{X_\Psi(1)}-\mathbb E\brackets{X_\Psi(0)}}}^2 \cdot \mathbb E\brackets{X_\Psi(0.5)}\notag\\
&= \frac{(\mathbb E\brackets{X_\Psi(0)})^2\cdot (1-\mathbb E\brackets{X_\Psi(0.5)}) + (1-\mathbb E\brackets{X_\Psi(0)})^2\cdot \mathbb E\brackets{X_\Psi(0.5)} }{\parentheses{\mathbb E\brackets{X_\Psi(1)}-\mathbb E\brackets{X_\Psi(0)}}^2}\notag\\
&= \frac{(\mathbb E\brackets{X_\Psi(0)})^2+\mathbb E\brackets{X_\Psi(0.5)}-2\mathbb E\brackets{X_\Psi(0)}\mathbb E\brackets{X_\Psi(0.5)}}{\parentheses{\mathbb E\brackets{X_\Psi(1)}-\mathbb E\brackets{X_\Psi(0)}}^2}.\notag
\end{align}

We now use~\eqref{eq:expectedEstAsFuncOfX0X1} for $x=0.5$ and get \begin{equation*}\mathbb E\brackets{X_\Psi(0.5)}=0.5\cdot(\mathbb E\brackets{X_\Psi(0)}+\mathbb E\brackets{X_\Psi(1)}),\end{equation*} 
which means:
\begin{align}
\mathbb E[\parentheses{\widehat x}^2 | x = 0.5]
&\ge \frac{(X(0))^2+X(0.5)-2X(0)X(0.5)}{\parentheses{X(1)-X(0)}^2}\notag\\
&
= \frac{(X(0))^2+0.5\cdot(X(0)+X(1))-2X(0)\cdot0.5\cdot(X(0)+X(1))}{\parentheses{X(1)-X(0)}^2}\notag\\
&
= \frac{0.5\cdot(X(0)+X(1))-X(0)\cdot X(1)}{\parentheses{X(1)-X(0)}^2}.\notag
\end{align}
\ \\\\\\\\\\\\

Combined with~\eqref{eq:varx1}, this gives: 
\begin{align}
\Var[\widehat x | x = 0.5] &= \mathbb E[\parentheses{\widehat x}^2 | x = 0.5] - 0.25\\&\ge 
\frac{0.5\cdot(X(0)+X(1))-X(0)\cdot X(1)}{\parentheses{X(1)-X(0)}^2} - 0.25\notag\\
&= \frac{0.5\cdot(X(0)+X(1))-X(0)\cdot X(1) - 0.25\parentheses{X(1)-X(0)}^2}{\parentheses{X(1)-X(0)}^2}\notag\\
&= \frac{0.5\cdot(X(0)+X(1))-X(0)\cdot X(1) - 0.25\parentheses{X(1)}^2+0.5X(0)X(1) - 0.25\parentheses{X(0)}^2}{\parentheses{X(1)-X(0)}^2}\notag\\
&= \frac{0.5\cdot(X(0)+X(1))-0.5X(0)\cdot X(1) - 0.25\parentheses{X(1)}^2 - 0.25\parentheses{X(0)}^2}{\parentheses{X(1)-X(0)}^2}\notag\\
&= \frac{0.5\cdot(X(0)+X(1))-\Big({0.5\cdot(X(0)+X(1))}\Big)^2 }{\parentheses{X(1)-X(0)}^2}.\label{eq:finalVar}
\end{align}

Over the domain $X(0),X(1)\in[0,1]$, \eqref{eq:finalVar} has two minima: $X(0)=0, X(1)=1$ and $X(0)=1, X(1)=0$. Indeed, the first corresponds to randomized rounding, while the second is using a simple transform that negates the randomized rounding's bit.

To conclude, we established that randomized rounding has minimal worst-case variance. As a side note, by deterministically estimating $\widehat x = Y$, Inequality~\eqref{eq:punchline} holds as an equality and the variance is exactly $0.25$.

\ \\\\\\\\\\\\\\\\\\\\
We demand that the protocol will produce unbiased estimates for any $x$.
In particular, for $x=0$, we have:
\begin{align}
    \mathbb E[\widehat x | X(0,\Psi)] &= X(0,\Psi)\cdot 
    \mathbb E\brackets{Z_1 | X(0,\Psi)} + (1- X(0,\Psi))\cdot 
    \mathbb E\brackets{Z_0 | X(0,\Psi)}\notag = 0.
\end{align}
Which means 
\begin{align}
\mathbb E\brackets{Z_0 | X(0,\Psi)} = -\frac{X(0,\Psi)}{1- X(0,\Psi)}\cdot \mathbb E\brackets{Z_1 | X(0,\Psi)}.\label{eq:Z0asfuncofZ1}
\end{align}
Similarly, for $x=1$:
\begin{align}
\mathbb E[\widehat x | X(1,\Psi)] &= X(1,\Psi)\cdot 
    \mathbb E\brackets{Z_1 | X(1,\Psi)} + (1- X(1,\Psi))\cdot 
    \mathbb E\brackets{Z_0 | X(1,\Psi)}\notag = 1.
\end{align}
By plugging~\eqref{eq:Z0asfuncofZ1} and simplifying the expressions, we get:
\begin{align}
     \mathbb E\brackets{Z_0 | X(0,\Psi)} &= -\frac{X(0,\Psi)}{X(1,\Psi)-X(0,\Psi)}.
     \label{eq:x0_hashes}\\
     \mathbb E\brackets{Z_1 | X(1,\Psi)} &= \frac{1-X(0,\Psi)}{X(1,\Psi)-X(0,\Psi)}.
    \label{eq:x1_hashes}
\end{align}

\newpage
\section{Algorithm\ran{Obsolete, as it coincides with subtractive dithering. Yet, the lemma is used in~\ref{sec:LBlimitations}.}}
We use a hash function $h$ such that $h\in[0,1]$ is uniformly distributed. 
We then set  
\begin{equation*}p = \parentheses{2^k-1}\cdot x - \floor{\parentheses{2^k-1}\cdot x}\end{equation*}
\begin{equation*}
D \triangleq \begin{cases}
1 & \mbox{if $h \le p$}\\
0 & \mbox{otherwise}
\end{cases}\qquad.
\end{equation*}
We then send $X=C+D$, where $C = \floor{\parentheses{2^k-1}\cdot x}$.

In turn, the \Receiver estimates $x$ as 
\begin{equation*}
\widehat x = \frac{X + h - 1/2}{2^k-1}.
\end{equation*}
\begin{lemma}\label{lem:uniform}
For a fixed value of $x$, we have $\widehat x\sim U\brackets{x-\frac{1}{2\cdot(2^k-1)}, x+\frac{1}{2\cdot(2^k-1)}}$.
\end{lemma}
\begin{proof}
We define $Z=\indicator_{h \le p} + h$.
We have that $Z\sim U[p,1+p]$, i.e.,
\begin{equation*}
f_Z(z)=\begin{cases}
1&\mbox{ if $z\in[p,1+p]$}\\
0&\mbox{ Otherwise}
\end{cases}\quad.
\end{equation*}
To see that, notice that 
\begin{equation*}
\Pr[Z \le z] = \begin{cases}
1&\mbox{ if $z\ge1+p$}\\
z-p&\mbox{ if $z\in(p,1+p)$}\\
0&\mbox{ if $z \le p$}
\end{cases}\quad.
\end{equation*}
Therefore, 
\begin{equation*}
\frac{X + h - 1/2}{2^k-1} = \frac{C + Z - 1/2}{2^k-1}\sim U\brackets{\frac{C + p - 1/2}{2^k-1}, \frac{C + (1+p) - 1/2}{2^k-1}}.
\end{equation*}
Next, recall that $p =\parentheses{2^k-1}\cdot x - \floor{\parentheses{2^k-1}\cdot x}$ and $C = \floor{\parentheses{2^k-1}\cdot x}$, and thus
\begin{align*}
&\frac{C + p - 1/2}{2^k-1} = \frac{\parentheses{2^k-1}\cdot x - 1/2}{2^k-1} =  x - \frac{1}{2\cdot(2^k-1)}\qquad\qquad\mbox{and similarly}\\
&\frac{C + (1+p) - 1/2}{2^k-1} = \frac{C + p - 1/2}{2^k-1} + \frac{1}{2^k-1} = 
x + \frac{1}{2\cdot(2^k-1)}.
\end{align*}
This concludes the proof.
\end{proof}
\begin{corollary}
Our estimator is unbiased, i.e., $\mathbb E[\widehat x] = x$.
\end{corollary} 
\begin{corollary}
Our variance is constant for all $x\in[0,1]$ and satisfies $\Var [\widehat x] = \frac{1}{12(2^k-1)^2}$.
\end{corollary} 
\newpage

\newpage
\section{Only shared randomness $1$-bit}
Let $h\sim U\set{0,1}$ be shared. 
We send 
\begin{equation*}
X= \begin{cases}
0 & \mbox{if $x<(1-\alpha)/2$}\\
1 & \mbox{if $x\ge(1+\alpha)/2$}\\
1-h& \mbox{otherwise}
\end{cases}
\end{equation*}
\ran{explain why $(1+\alpha)/2$} and estimate
\begin{equation*}
\widehat x = \alpha \cdot h + (1-\alpha)\cdot X.
\end{equation*}

\begin{align*}
 \mathbb E[X] =  \mathbb E[X^2] = \begin{cases}
0 & \mbox{if $x<(1-\alpha)/2$}\\
1 & \mbox{if $x\ge(1+\alpha)/2$}\\
1/2& \mbox{otherwise}
\end{cases}\\
\mathbb E[\widehat x] = \begin{cases}
\alpha/2 & \mbox{if $x<(1-\alpha)/2$}\\
1-\alpha/2 & \mbox{if $x\ge(1+\alpha)/2$}\\
1/2& \mbox{otherwise}
\end{cases}.\\
\mathbb E[\widehat x^2] = \begin{cases}
\alpha^2/2 & \mbox{if $x<(1-\alpha)/2$}\\
\alpha^2/2 + (1-\alpha)^2 + \alpha(1-\alpha) & \mbox{if $x\ge(1+\alpha)/2$}\\
\alpha^2/2 + (1-\alpha)^2/2& \mbox{otherwise}
\end{cases}.\\
\mathbb E[h\cdot X] = \begin{cases}
0 & \mbox{if $x<(1-\alpha)/2$}\\
1/2 & \mbox{if $x\ge(1+\alpha)/2$}\\
0 & \mbox{otherwise}
\end{cases}.\\
\end{align*}

And the cost:
\begin{multline*}
     \mathbb E[(\widehat x - x)^2] = \alpha^2 \cdot \mathbb E[h^2] + (1-\alpha)^2\cdot  \mathbb E[X^2] + 2\alpha(1-\alpha)\mathbb E[h\cdot X] - 2x (\alpha \cdot \mathbb E[h] + (1-\alpha)\cdot \mathbb E[X])
     + x^2
     \\
     = \begin{cases}
\alpha^2/2 - x\alpha + x^2& \mbox{if $x<(1-\alpha)/2$}\\
\alpha^2/2 + (1-\alpha)^2 + \alpha(1-\alpha) - x\alpha -2x(1-\alpha) + x^2 & \mbox{if $x\ge(1+\alpha)/2$}\\
\alpha^2/2 + (1-\alpha)^2/2 - x\alpha -x(1-\alpha) + x^2& \mbox{otherwise}
\end{cases}\\
     = \begin{cases}
x^2- x\alpha+ \alpha^2/2 & \mbox{if $x<(1-\alpha)/2$}\\
x^2 + x (\alpha-2) + 1 - \alpha + \alpha^2/2 & \mbox{if $x\ge(1+\alpha)/2$}\\
x^2 - x + 1/2 - \alpha + \alpha^2 & \mbox{otherwise}
\end{cases}
\end{multline*}

Solving $\min_\alpha\max_x \mathbb E[(\widehat x - x)^2]$ yield $\mathbb E[(\widehat x - x)^2]=1/18$, which is obtained for $\alpha=1/3$.

\newpage
\section{Only shared randomness $2$-bit}
Let $h\sim U\set{0,1,2,3}$ be shared. 
\begin{equation*}
x = (3-\alpha)/6
\end{equation*}
We send 
\begin{equation*}
X= \begin{cases}
0 & \mbox{if $x<(3-3\alpha)/6$}\\
h\le 0 & \mbox{if $x\in[(3-3\alpha)/6, (3-\alpha)/6]$}\\
h\le 1 & \mbox{if $x\in[(3-\alpha)/6, (3+\alpha)/6]$}\\
h\le 2 & \mbox{if $x\in[(3+\alpha)/6, (3+3\alpha)/6]$}\\
1 & \mbox{otherwise}
\end{cases}
\end{equation*}
and estimate
\begin{equation*}
\widehat x = \alpha \cdot h/3 + (1-\alpha)\cdot X.
\end{equation*}

\begin{align*}
&\mathbb E[X] =  \mathbb E[X^2] = \begin{cases}
0 & \mbox{if $x<(3-3\alpha)/6$}\\
1/4 & \mbox{if $x\in[(3-3\alpha)/6, (3-\alpha)/6]$}\\
2/4 & \mbox{if $x\in[(3-\alpha)/6, (3+\alpha)/6]$}\\
3/4 & \mbox{if $x\in[(3+\alpha)/6, (3+3\alpha)/6]$}\\
1 & \mbox{otherwise}
\end{cases}\\
&\mathbb E[\widehat x] = \begin{cases}
\alpha/2 & \mbox{if $x<(3-3\alpha)/6$}\\
1/4\cdot (1+\alpha) & \mbox{if $x\in[(3-3\alpha)/6, (3-\alpha)/6]$}\\
1/2 & \mbox{if $x\in[(3-\alpha)/6, (3+\alpha)/6]$}\\
3/4\cdot (1-\alpha) & \mbox{if $x\in[(3+\alpha)/6, (3+3\alpha)/6]$}\\
1-\alpha/2 & \mbox{otherwise}
\end{cases}.\\
& \mathbb E[h/3\cdot X] = \begin{cases}
0 & \mbox{if $x<(3-3\alpha)/6$}\\
0 & \mbox{if $x\in[(3-3\alpha)/6, (3-\alpha)/6]$}\\
1/12 & \mbox{if $x\in[(3-\alpha)/6, (3+\alpha)/6]$}\\
1/4 & \mbox{if $x\in[(3+\alpha)/6, (3+3\alpha)/6]$}\\
1/2 & \mbox{otherwise}
\end{cases}\\
&\mathbb E[(h/3)^2] = 7/18.
\\
&\mathbb E[(\widehat x-x)^2] = \alpha^2 \cdot \mathbb E[(h/3)^2] + (1-\alpha)^2\cdot  \mathbb E[X^2] + 2\alpha(1-\alpha)\mathbb E[(h/3)\cdot X] - 2x (\alpha \cdot \mathbb E[(h/3)] + (1-\alpha)\cdot \mathbb E[X])
     + x^2
\\&=\begin{cases}
7/18\cdot\alpha^2 -x\alpha + x^2& \mbox{if $x<(3-3\alpha)/6$}\\
7/18\cdot\alpha^2 -x\alpha + x^2 + 1/4\cdot \parentheses{(1-\alpha)^2 -2x(1-\alpha)}& \mbox{if $x\in[(3-3\alpha)/6, (3-\alpha)/6]$}\\
7/18\cdot\alpha^2 -x\alpha + x^2+ 2/4\cdot \parentheses{(1-\alpha)^2 -2x(1-\alpha)}+ \alpha(1-\alpha)/6& \mbox{if $x\in[(3-\alpha)/6, (3+\alpha)/6]$} \\
7/18\cdot\alpha^2 -x\alpha + x^2+ 3/4\cdot \parentheses{(1-\alpha)^2 -2x(1-\alpha)}+ \alpha(1-\alpha)/2 & \mbox{if $x\in[(3+\alpha)/6, (3+3\alpha)/6]$}\\
7/18\cdot\alpha^2 -x\alpha + x^2+ (1-\alpha)^2 -2x(1-\alpha)                       + \alpha(1-\alpha)& \mbox{otherwise}
\end{cases}.\\
\end{align*}

Solving $\min_\alpha\max_x \mathbb E[(\widehat x - x)^2]$ yield $\mathbb E[(\widehat x - x)^2]=\frac{259 - 140 \sqrt3}{338}\approx 0.04885$, which is obtained for $\alpha=\frac{15-6 \sqrt 3}{13}$.

\ \\ \\ \\ \\ \\
\begin{align*}
 \mathbb E[X] &= \mathbb E[X^2] = \floor{(2^\ell+1)x}\cdot 2^{-\ell}.\\
\mathbb E[\widehat x] &= \alpha/2 + (1-\alpha)\cdot \mathbb E[X] = \alpha/2 + (1-\alpha)\cdot \floor{(2^\ell+1)x}\cdot 2^{-\ell}.\\
\mathbb E[h^2] &= (2^{\ell}-1)(2^{\ell+1}-1)/6.\\
\mathbb E[h\cdot X] &= 2^{-\ell}\sum_{t=0}^{2^\ell-1}\mathbb E[t\cdot X | h=t]
    = 2^{-\ell}\sum_{t=0}^{2^\ell-1}t\cdot \indicator_{\floor{(2^\ell+1)x} > t}
    = 2^{-\ell}\sum_{t=0}^{\floor{(2^\ell+1)x}-1}t\\
    &\qquad\qquad= 2^{-\ell-1}(\floor{(2^\ell+1)x}-1)\floor{(2^\ell+1)x}.\\
\mathbb E[\widehat x^2] &= \alpha^2/(2^\ell-1)^2\cdot\mathbb E[h^2] + (1-\alpha)^2\cdot \mathbb E[X^2]\\
&\qquad\qquad + 2\alpha(1-\alpha)/(2^\ell-1)\mathbb E[h\cdot X]\\
&= \alpha^2/(2^\ell-1)^2\cdot(2^{\ell}-1)(2^{\ell+1}-1)/6 + (1-\alpha)^2\cdot \parentheses{\floor{(2^\ell+1)x}\cdot 2^{-\ell}}\\
&\qquad\qquad + 2\alpha(1-\alpha)/(2^\ell-1)\parentheses{2^{-\ell-1}(\floor{(2^\ell+1)x}-1)\floor{(2^\ell+1)x}}
\end{align*}

\newpage
Send 1 if
\begin{equation*}
x \ge \beta h + \gamma
\end{equation*}

\Receiver:
\begin{equation*}
aX + bh+c
\end{equation*}

\textbf{Case A: $h=0$}. 
In this case: $X=(x\ge \gamma)$ and thus
\begin{equation*}
\mathbb E[X]=\mathbb E[X^2]=\begin{cases}
1 & \mbox{if $x\ge \gamma$}\\
0 & \mbox{otherwise}
\end{cases}.
\end{equation*}
We have that:
\begin{align*}
\mathbb E[(aX +c- x)^2]&=
a^2\mathbb E[X^2] + 2a(c-x)\mathbb E[X] + (c-x)^2\\
&=
(a^2 + 2a(c-x))\begin{cases}
1 & \mbox{if $x\ge \gamma$}\\
0 & \mbox{otherwise}
\end{cases} + (c-x)^2
.
\end{align*}

\textbf{Case B: $h=1$}.
In this case: $X=(x\ge \beta+\gamma)$ and thus
\begin{equation*}
\mathbb E[X]=\mathbb E[X^2]=\begin{cases}
1 & \mbox{if $x\ge \beta+\gamma$}\\
0 & \mbox{otherwise}
\end{cases}.
\end{equation*}
We have that:
\begin{align*}
\mathbb E[(aX +b+c- x)^2]&=
a^2\mathbb E[X^2] + 2a(b+c-x)\mathbb E[X] + (b+c-x)^2\\
&=
(a^2 + 2a(b+c-x))\begin{cases}
1 & \mbox{if $x\ge \beta+\gamma$}\\
0 & \mbox{otherwise}
\end{cases} + (b+c-x)^2
.
\end{align*}

We combine both cases:

\newpage

We have:
\begin{equation*}
\Pr[X=1] = \begin{cases}
1   & \mbox{if $x-\gamma\ge \beta$}\\
1/2 & \mbox{if $0 \le x-\gamma < \beta$}\\
0   & \mbox{if $x<\gamma$}\\
\end{cases}
\end{equation*}
\begin{equation*}
\Pr[X=1\wedge h =1] = \Pr[X=1| h =1] \Pr[h=1]\\
\begin{cases}
1/2   & \mbox{if $x\ge \beta  +\gamma$}\\
0   & \mbox{if $x<\gamma + \beta$}\\
\end{cases}
\end{equation*}

\begin{equation*}
\mathbb E[(aX + bh+c- x)^2]=a^2\mathbb E[X^2] + b^2\mathbb E[h^2] + c^2  + x^2 + 2ab\mathbb E[Xh] + 2ab\mathbb[X]+2bc\mathbb E[h] - 2ax\mathbb E[X] - 2bx\mathbb E[h] - 2cx.
\end{equation*}
For 1-bit $h$:
\begin{multline*}
\mathbb E[(aX + bh+c- x)^2]=a^2\Pr[X=1] + b^2\Pr[h=1] + c^2  + x^2 + 2ab\Pr[X=1\wedge h=1] \\ +2ab\Pr[X=1]+2bc\Pr[h=1] - 2ax\Pr[X=1] - 2bx\Pr[h=1] - 2cx
\end{multline*}

---

\begin{equation*}
=a^2\Pr[X=1] + b^2/2 + c^2  + x^2 + 2ab\Pr[X=1\wedge h=1] + 2ab\Pr[X=1]+bc - 2ax\Pr[X=1] - bx - 2cx.
\end{equation*}

\begin{equation*}
=(a^2+  2ab - 2ax)\Pr[X=1] + 2ab\Pr[X=1\wedge h=1] + b^2/2 + c^2  + x^2  +bc  - bx - 2cx.
\end{equation*}

\begin{equation*}
=(a^2+  2ab - 2ax)\begin{cases}
1   & \mbox{if $x\ge \beta+\gamma$}\\
1/2 & \mbox{if $\gamma \le x < \beta+\gamma$}\\
0   & \mbox{if $x<\gamma$}\\
\end{cases} + 2ab\begin{cases}
1/2   & \mbox{if $x\ge \beta  +\gamma$}\\
0   & \mbox{if $x<\gamma + \beta$}\\
\end{cases} + b^2/2 + c^2  + x^2  +bc  - bx - 2cx.
\end{equation*}

\textbf{Case I: $x< \gamma$:}
\begin{align*}
\mathbb E[(aX + bh+c- x)^2]=b^2/2 + c^2  + x^2  +bc  - bx - 2cx.
\end{align*}

\begin{align*}
&\mathbb E[(aX + bh+c- x)^2 | x=0]=b^2/2 + b c + c^2\\
&\mathbb E[(aX + bh+c- x)^2 | x=\gamma]=b^2/2 +bc + c^2  + \gamma^2    - b\gamma - 2c\gamma
\end{align*}

\textbf{Case II: $\gamma\le x< \beta+\gamma$:}
\begin{align*}
\mathbb E[(aX + bh+c- x)^2]= (a^2+  2ab - 2ax)/2  + b^2/2 + c^2  + x^2  + bc  - bx - 2cx.
\end{align*}

\begin{align*}
&\mathbb E[(aX + bh+c- x)^2 | x=\gamma]=(a^2+  2ab - 2a\gamma)/2  + b^2/2 + c^2  + \gamma^2  + bc  - b\gamma - 2c\gamma\\
&\mathbb E[(aX + bh+c- x)^2 | x=\beta+\gamma]\\&\quad= b^2/2 + b c + c^2 - b (\beta + \gamma) - 2 c (\beta + \gamma) + (\beta + \gamma)^2 + 1/2 (a^2 + 2 a b - 2 a (\beta + \gamma)).
\end{align*}

\textbf{Case III: $\beta+\gamma\le x$:}
\begin{align*}
\mathbb E[(aX + bh+c- x)^2]= a^2+  2ab - 2ax +ab + b^2/2 + c^2  + x^2  +2bc/2  - 2bx/2 - 2cx.
\end{align*}

3-bit

35/722 (9/19)
35/722-1/361 (8/19)
35/722-3/361 (7/19)

1/722 (542 - 361 sqrt(2)) (9/19)
1/722 (546 - 361 sqrt(2)) (8/19)
1/722 (554 - 361 sqrt(2)) (7/19)

max(
p*35/722 + (1-p)*1/722 (542 - 361 sqrt(2)),
p*(35/722-1/361) + (1-p)*1/722 (546 - 361 sqrt(2)),
p*(35/722-3/361) + (1-p)*1/722 (554 - 361 sqrt(2)))

1/2 (2 - sqrt(2)) (0)
35/722 (0)

,
p*(35/722) + (1-p)*1/2 (2 - sqrt(2))